\documentclass[11pt]{article}

\usepackage[margin=1in]{geometry}
\usepackage{amsmath,amsfonts,amssymb,amsthm,mathtools}
\usepackage{bm}
\usepackage[round,authoryear]{natbib}
\usepackage{microtype}
\usepackage{booktabs}
\usepackage{enumitem}
\usepackage{graphicx}
\usepackage{xcolor}
\usepackage{float}
\usepackage{placeins}
\usepackage{url}
\usepackage[hidelinks]{hyperref}
\usepackage[nameinlink,capitalise,noabbrev]{cleveref}
\allowdisplaybreaks

\newtheorem{theorem}{Theorem}
\newtheorem{proposition}[theorem]{Proposition}
\newtheorem{lemma}[theorem]{Lemma}
\newtheorem{corollary}[theorem]{Corollary}

\crefname{theorem}{Theorem}{Theorems}
\crefname{proposition}{Proposition}{Propositions}
\crefname{lemma}{Lemma}{Lemmas}
\crefname{corollary}{Corollary}{Corollaries}
\crefname{definition}{Definition}{Definitions}
\crefname{assumption}{Assumption}{Assumptions}
\crefname{remark}{Remark}{Remarks}

\DeclareMathOperator{\Tr}{Tr}
\newcommand{\supp}{\operatorname{supp}}
\newcommand{\id}{\mathbb{I}}

\newcommand{\cS}{\mathsf{S}}
\newcommand{\cE}{\mathcal{E}}
\newcommand{\cD}{\mathcal{D}}
\newcommand{\cN}{\mathcal{N}}
\newcommand{\cR}{\mathcal{R}}
\newcommand{\cP}{\mathcal{P}}
\newcommand{\cT}{\mathcal{T}}
\newcommand{\cL}{\mathcal{L}}
\newcommand{\dd}{\mathrm{d}}
\newcommand{\Fr}{F_{\mathrm r}}
\newcommand{\dtr}{d_{\mathrm{tr}}}
\newcommand{\Wtr}{W_{1,\mathrm{tr}}}
\newcommand{\half}{\tfrac{1}{2}}
\newcommand{\ket}[1]{\lvert #1\rangle}
\newcommand{\ketbra}[1]{\lvert #1\rangle\!\langle #1\rvert}
\newcommand{\norm}[1]{\left\lVert #1\right\rVert}
\newcommand{\abs}[1]{\left\lvert #1\right\rvert}
\newcommand{\pos}[1]{\left[#1\right]_{+}}
\newcommand{\idmap}{\operatorname{id}}
\DeclareMathOperator{\Law}{Law}

\title{Information-Calibrated Quantum Diffusion: Aligning Forward Noise with Reverse Recoverability}

\author{
Qipeng Qian$^{1,2}$ \qquad Yuntao Qian$^{2}$\\[4pt]
$^{1}$SUPCON Technology, Hangzhou, China\\
$^{2}$College of Artificial Intelligence, Zhejiang University, Hangzhou 310027, China\\[4pt]
\texttt{qianqipeng@supcon.com} \qquad
\texttt{ytqian@zju.edu.cn}
}
\date{}

\begin{document}
\maketitle

\begin{abstract}
Quantum diffusion models typically parameterize forward corruption by raw
channel strength, even though this parameter does not directly quantify how
much ensemble information is erased or how difficult the corresponding
reverse problem is. We introduce the classical--quantum information
decrement $\Delta_t=I(X{:}Q_{t-1})-I(X{:}Q_t)$ as an intrinsic diffusion
coordinate that links forward noise allocation to reverse recoverability.
Along depolarization, equalizing $\Delta_t$ yields the unique minimax
discretization of the forward information loss, while universal
recoverability gives the same quantity an operational interpretation as a
physically attainable local recovery budget. We further show that such
local calibration is not sufficient for stochastic generation: models can
satisfy the same recovery criterion while producing substantially different
state distributions. This motivates a stochastic learner that combines
information-calibrated recovery constraints with distribution matching. We
establish finite-sample calibration and compositional trace-Wasserstein
control for the resulting learner. Controlled quantum experiments validate
the predicted information--recovery alignment, show that the recovery
constraints improve local inversion, and confirm the complementary role of
distribution matching in endpoint generation. The resulting framework also
achieves stronger endpoint trace-Wasserstein performance than an official
QuDDPM implementation with fewer trainable parameters. Overall, our work
provides a unified information-theoretic principle for designing forward
schedules, calibrating reverse steps, and separating physical recovery from
generative coverage in quantum diffusion.
\end{abstract}

\section{Introduction}

Diffusion models generate complex data by learning to reverse a prescribed
forward corruption process. In classical diffusion, the forward path is
typically parameterized by a noise-strength schedule, with linear, cosine,
and other hand-designed schedules determining how corruption is allocated
across timesteps \citep{sohl2015deep,ho2020ddpm,nichol2021improved,karras2022design}.
Quantum diffusion models adopt the same basic principle with quantum
channels, including depolarizing processes, and learn reverse quantum
dynamics or ensemble-level generative objectives
\citep{zhang2024generative,zhang2024qgdm,kwun2024mixed,xu2026ssdm}.
These choices define the sequence of inverse problems presented to the
reverse model. However, the raw channel parameter does not itself quantify
how much information about the data ensemble is erased by a step, nor does
it provide a physical scale for how difficult that step should be to
reverse.

This gap is particularly relevant for quantum-state generation. The target
is generally an ensemble of states rather than a single average density
operator, and two ensembles can encode different probability laws even when
their averages coincide \citep{oreshkov2009ensembles}. Existing quantum
diffusion models already use state- or ensemble-level objectives to learn
such distributions \citep{zhang2024generative,kwun2024mixed}, but the
forward schedule and the reverse objective are usually calibrated in
different coordinates. This leaves a basic question unresolved: can the
forward process be parameterized by a quantity that has an intrinsic
information meaning and, at the same time, determines a physically
attainable scale for the corresponding reverse problem?

We answer this question using the classical--quantum information retained
about the ensemble label. For a labeled ensemble, we define the one-step
information decrement
\begin{equation}
\Delta_t
:=
I(X{:}Q_{t-1})-I(X{:}Q_t).
\label{eq:intro-delta}
\end{equation}
Along depolarization, the resulting Holevo-information curve defines an
intrinsic diffusion clock. Equalizing $\Delta_t$ yields the unique grid
that minimizes the largest one-step information loss among all monotone
grids with fixed endpoints. More importantly, universal recoverability
turns the same decrement into an operational reverse scale: a single
label-independent CPTP recovery channel can attain an expected
log-fidelity error no larger than $\Delta_t$
\citep{buscemi2016approximate,junge2018universal}. Continuity arguments
provide the converse direction, linking vanishing information loss to
vanishing intrinsic common-channel recovery error. Thus the same scalar
coordinate connects forward noise allocation and local reverse
recoverability.

Local recoverability, however, is not the whole generative problem.
Quantum diffusion models ultimately need to reproduce a probability law
over states, and prior quantum diffusion work therefore relies on
distribution-level objectives in addition to state-wise similarity
\citep{zhang2024generative,kwun2024mixed}. We show that this distinction
remains essential even after the local inverse has been physically
calibrated. In a fixed noncommuting two-qubit construction, two stochastic
CPTP reverse models can have identical local-fidelity statistics and both
satisfy the same information-calibrated recovery budget while producing
substantially different output distributions. This establishes that
stepwise physical recovery and generative coverage are complementary,
non-substitutable requirements.

These results lead naturally to a stochastic learner that minimizes
distributional mismatch subject to timestep-wise recovery constraints
whose scales are set by $\Delta_t$. We establish finite-sample calibration
of the learned constraints and a compositional trace-Wasserstein endpoint
certificate, thereby connecting the forward information clock, learned
local inversion, and final state-distribution error. Controlled two- and
four-qubit experiments validate the predicted information--recovery
alignment, show that enforcing the budgets improves local inversion, and
empirically reproduce the recoverability--coverage separation. On the
four-qubit TFIM task, increasing reverse-model capacity substantially
improves endpoint generation, and the resulting learner achieves lower
endpoint trace-Wasserstein error than the official QuDDPM implementation
while using fewer trainable parameters. Independently seeded schedule
experiments further support the equal-information allocation over standard
cosine scheduling.

Our contributions are:
\begin{itemize}[leftmargin=*]
\item \textbf{An operational information coordinate for quantum diffusion.}
We introduce the cq-information decrement $\Delta_t$ as a diffusion
coordinate that simultaneously measures forward ensemble-information loss
and calibrates a physically attainable common-channel reverse-recovery
budget. Along depolarization, equal-information allocation is the unique
minimax discretization of this coordinate.

\item \textbf{A recoverability--coverage separation for stochastic quantum
generation.}
We prove in a fixed noncommuting two-qubit system that satisfying the
information-calibrated local recovery criterion does not determine the
generated state distribution. This identifies distribution matching as a
separate requirement rather than a surrogate for local physical recovery.

\item \textbf{A theorem-calibrated stochastic learner with local-to-global
guarantees and mechanism-level validation.}
We combine distribution matching with timestep-wise recovery constraints,
establish finite-sample calibration and compositional trace-Wasserstein
control, and validate the resulting decomposition through controlled
ablations, capacity extension, schedule comparisons, and an external
QuDDPM endpoint benchmark.
\end{itemize}

\section{Related work}
\label{sec:related}
Classical diffusion uses a wide range of noise schedules and denoising objectives \citep{sohl2015deep,ho2020ddpm,song2021ddim,song2021scorebased,kingma2021variational,nichol2021improved,karras2022design}, including recent entropy- or information-guided allocation schemes \citep{stancevic2025entropic,raya2026infonoise,ambrogioni2026atoms,foresti2026schedules}. Quantum diffusion has developed trainable reverse circuits, mixed-state depolarization, temporal conditioning, measurement-conditioned dynamics, stochastic reverse processes, and recovery-based constructions \citep{zhang2024generative,zhang2024qgdm,kwun2024mixed,liu2026measurement,xu2026ssdm,bompais2026reverse}. Mixed-state quantum diffusion supplies depolarizing forward paths and ensemble-level similarity objectives \citep{kwun2024mixed}. We instead make ensemble information the diffusion coordinate, derive the operational reverse scale induced by that coordinate, and identify the additional distributional control required for stochastic generation.

Our reverse analysis builds on Holevo information, relative-entropy data processing, Petz/universal recovery, and Holevo continuity \citep{holevo1973bounds,lindblad1975maps,petz1986sufficient,fawzi2015approximate,junge2018universal,buscemi2016approximate,shirokov2017continuity}. Universal-recovery inequalities provide the common-channel primitive. We turn it into the expected-log budget induced by each information-calibrated forward step, pair it with an independent geometry-sensitive converse, and characterize the stochastic generative freedom left after local calibration. Randomization and deficiency give a complementary decision-theoretic view of common-channel simulation \citep{jencova2016comparison}; our pairwise LP turns forward-erased distinguishability into a direct computable recovery witness.

Unrestricted common-channel recoverability therefore serves as a \emph{reference scale} for the inverse problems created by the information-calibrated forward process, rather than a reverse dynamics to imitate. The same $\Delta_t$ appears as forward scheduling coordinate, physically attainable local budget, and learning constraint. The geometry-aware converse retains which pairwise distinctions were erased, information that is invisible to a scalar continuity modulus even when the total Holevo decrement is fixed; the fixed-dimensional separation then identifies the distributional degree of freedom that neither local quantity controls.

Our external benchmark uses the official QuDDPM implementation \citep{zhang2024generative} on identical primary TFIM datasets and the same held-out evaluator. QuDDPM retains its native forward/reverse construction. Our information-calibrated shared learner achieves lower endpoint trace-Wasserstein error with fewer trainable parameters; this claim is intentionally specific to $\Wtr$, the distributional geometry propagated by our local-to-global theorem, rather than a claim of uniform dominance over every endpoint diagnostic.

Quantum-ensemble distances explain why independent state pairing does not define ensemble distinguishability \citep{oreshkov2009ensembles}. Our separation sharpens this observation quantitatively and ties it to physical recovery calibration: a perfectly covering generator and a collapsed generator can share the complete marginal fidelity law and the same budget-feasible local log-risk on a fixed noncommuting two-qubit ensemble. This connects to reconstruction--distribution tensions in classical inverse problems \citep{blau2018perception}, with the local criterion here fixed by quantum recoverability.

\section{Information-calibrated forward diffusion and reverse recoverability}
\label{sec:setting}
We first identify the intrinsic scale of the local inverse problem. Let $\cE=\{p_x,\rho_x\}$ be a nontrivial finite ensemble on a $d\ge2$ dimensional Hilbert space, with $p_x>0$. All logarithms are natural. We use root fidelity and trace distance,
\begin{equation}
  \Fr(\rho,\sigma):=\norm{\sqrt{\rho}\sqrt{\sigma}}_1,
  \qquad
  \dtr(\rho,\sigma):=\half\norm{\rho-\sigma}_1,
  \label{eq:metric-conventions-main}
\end{equation}
and the trace-cost Wasserstein distance
\begin{equation}
  \Wtr(\mu,\nu)
  :=\inf_{\pi\in\Pi(\mu,\nu)}
  \int \dtr(\rho,\sigma)\,\pi(\dd\rho,\dd\sigma).
  \label{eq:wasserstein-definition-main}
\end{equation}
The cq mutual information of the ensemble is the Holevo quantity
\begin{equation}
\chi(\cE)=I(X{:}Q)=S(\bar\rho)-\sum_xp_xS(\rho_x),
\qquad \bar\rho=\sum_xp_x\rho_x.
\label{eq:holevo-identity}
\end{equation}
Keeping the label $X$ is essential for generation: two ensembles can have the same average density operator while defining different probability laws over states.

Along the cumulative depolarizing path 
\begin{equation}
\cD_\lambda(A)
=
\lambda A
+
(1-\lambda)\Tr(A)\frac{\id}{d},
\qquad
\chi(\lambda)
=
\chi\!\left(\{p_x,\cD_\lambda(\rho_x)\}\right),
\label{eq:chi-curve-main}
\end{equation}
we fix the endpoints 
\begin{equation}
\lambda_0=1,
\qquad
\lambda_T<1,
\end{equation}
and define the total information loss along the forward path as 
\begin{equation}
\Delta_{\rm tot}
:=
\chi(\lambda_0)-\chi(\lambda_T)
=
\chi(1)-\chi(\lambda_T).
\label{eq:total-information-loss-main}
\end{equation}

\begin{theorem}[Depolarizing cq-information clock]
\label{thm:forward-clock}
For a nontrivial ensemble, the Holevo-information curve
$\chi(\lambda)$ is continuous and strictly increasing on $[0,1]$.
Therefore, for any fixed number of steps $T$ and endpoints
$\lambda_0=1$ and $\lambda_T<1$, there exists a unique monotone grid
\begin{equation}
1=\lambda_0>\lambda_1>\cdots>\lambda_T
\end{equation}
such that the Holevo information decreases by the same amount at every
forward step:
\begin{equation}
\chi(\lambda_t)
=
\left(1-\frac{t}{T}\right)\chi(\lambda_0)
+
\frac{t}{T}\chi(\lambda_T),
\qquad
t=0,\ldots,T.
\label{eq:equal-information-grid-main}
\end{equation}
Equivalently,
\begin{equation}
\Delta_t
:=
\chi(\lambda_{t-1})-\chi(\lambda_t)
=
\frac{\Delta_{\rm tot}}{T},
\qquad
t=1,\ldots,T.
\label{eq:equal-information-decrement-main}
\end{equation}

Among all monotone grids with the same endpoints and the same number of
steps, this equal-information grid uniquely minimizes the largest
one-step information loss,
\begin{equation}
\max_{1\leq t\leq T}\Delta_t.
\label{eq:minimax-information-grid-main}
\end{equation}
More generally, for every strictly convex function $\varphi$, it
uniquely minimizes
\begin{equation}
\sum_{t=1}^T \varphi(\Delta_t).
\label{eq:convex-information-grid-main}
\end{equation}
\end{theorem}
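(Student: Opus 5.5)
The argument has two parts. First we show that $\chi(\lambda)$ is continuous and strictly increasing on $[0,1]$. Second, existence, uniqueness and both optimality claims follow from one-dimensional convexity arguments.

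\emph{Continuity.} For each $x$, the map $\lambda\mapsto\cD_\lambda(\rho_x)$ is affine. Von Neumann entropy is continuous on the finite-dimensional state space. Hence $\chi(\lambda)=S(\cD_\lambda(\bar\rho))-\sum_x p_xS(\cD_\lambda(\rho_x))$ is continuous.

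\emph{Monotonicity.} The depolarizing family satisfies the semigroup identity $\cD_{\lambda}=\cD_{\lambda/\mu}\circ\cD_\mu$ for $0\le\lambda\le\mu\le1$, and $\cD_{\lambda/\mu}$ is CPTP. Data processing for the Holevo quantity (equivalently, for $I(X{:}Q)$ under a channel acting on $Q$) then gives $\chi(\lambda)\le\chi(\mu)$.

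\emph{Strictness (main obstacle).} Suppose $\chi(\lambda)=\chi(\mu)$ with $\lambda<\mu$. We would rule this out in one of two ways.
\begin{itemize}
\item Via recoverability: equality in data processing for $\cD_{\lambda/\mu}$ forces, by Petz's theorem, exact recoverability of every $\cD_\mu(\rho_x)$ from its image. But $\cD_{\lambda/\mu}$ strictly contracts trace distance by the factor $\lambda/\mu<1$. A common recovery channel would have to expand $\dtr$ back, which is impossible unless all $\cD_\mu(\rho_x)$ coincide. For $\mu>0$ that means all $\rho_x$ coincide, contradicting nontriviality.
\item More directly: $\chi(\lambda)=\sum_xp_xD(\cD_\lambda(\rho_x)\|\cD_\lambda(\bar\rho))$, and each $\cD_\lambda(\rho_x)$ is full rank for $\lambda<1$, so the curve is real-analytic on $[0,1)$. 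We have $\chi(0)=0$, and $\chi(\lambda)>0$ for $\lambda>0$ whenever some $\rho_x\ne\bar\rho$. A monotone analytic function that is constant on an interval is constant on all of $[0,1)$, which contradicts $\chi(0)=0<\chi(\lambda)$. Continuity then extends strictness to the endpoint $1$.
\end{itemize}
We would present the analytic argument as primary, since it avoids invoking equality conditions of Petz recovery.

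\emph{Existence and uniqueness of the grid.} By continuity and strict monotonicity, $\chi$ is a homeomorphism from $[\lambda_T,1]$ onto $[\chi(\lambda_T),\chi(1)]$. The targets in \eqref{eq:equal-information-grid-main} are strictly decreasing values in that range (strictly, because $\Delta_{\rm tot}>0$ by strictness). Each therefore has a unique preimage $\lambda_t$, and these preimages are strictly decreasing in $t$. Then \eqref{eq:equal-information-decrement-main} follows by subtracting consecutive targets.

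\emph{Optimality.} For any monotone grid, telescoping gives $\sum_t\Delta_t=\Delta_{\rm tot}$ with $\Delta_t\ge0$. The map from grids to decrement vectors $(\Delta_t)$ is injective, again by strict monotonicity of $\chi$.
\begin{itemize}
\item \emph{Minimax:} $\max_t\Delta_t\ge\Delta_{\rm tot}/T$, with equality iff all $\Delta_t$ equal; by injectivity this is only the equal-information grid.
\item \emph{Convex objectives:} Jensen gives $\sum_t\varphi(\Delta_t)\ge T\varphi(\Delta_{\rm tot}/T)$, with equality iff all $\Delta_t$ equal, by strict convexity. The same injectivity gives uniqueness.
\end{itemize}
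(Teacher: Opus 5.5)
Your proof is correct. Continuity, and everything from strict monotonicity onward, coincides with the paper: inverting $\chi$ on $[\lambda_T,1]$, telescoping $\sum_t\Delta_t=\Delta_{\rm tot}$, the equality cases of the max bound and of Jensen, and uniqueness via injectivity of the map from grids to decrement vectors.

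The genuine difference is how you obtain strict increase. The paper uses neither data processing nor the identity theorem. Since $\lambda\mapsto(\cD_\lambda(\rho_x),\cD_\lambda(\bar\rho))$ is affine, joint convexity of relative entropy makes $\chi(\lambda)=\sum_xp_xD(\cD_\lambda(\rho_x)\Vert\cD_\lambda(\bar\rho))$ convex on $[0,1]$. With $\chi(0)=0$ this gives $\chi(a\lambda)\le a\chi(\lambda)$ for $a\in[0,1]$. Combined with $\chi(\lambda)>0$ for $\lambda>0$, it yields $\chi(\mu)\le(\mu/\lambda)\chi(\lambda)<\chi(\lambda)$ for $0<\mu<\lambda$ in one line. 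The positivity step is the same one you use: $\cD_\lambda(\rho_x)-\cD_\lambda(\bar\rho)=\lambda(\rho_x-\bar\rho)\neq0$.

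That route is more elementary and quantitative. It shows $\chi(\lambda)/\lambda$ is nondecreasing, gives the explicit gap $\chi(\lambda)-\chi(\mu)\ge(1-\mu/\lambda)\chi(\lambda)$, and delivers convexity of the clock curve as a by-product. Your argument is purely qualitative and rests on real-analyticity of $S$ on the positive-definite cone. That is legitimate here, since $\cD_\lambda(\rho)\ge(1-\lambda)\id/d$, so the curve is analytic on a neighborhood of $[0,1)$.

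What your route buys is independence from affineness in $\lambda$. It needs only the semigroup identity $\cD_\lambda=\cD_{\lambda/\mu}\circ\cD_\mu$ plus analyticity, so it adapts more readily to other Markovian noise families.

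Your Petz-based alternative is also sound. Equality of the Holevo quantities forces every relative-entropy term to be preserved, because $p_x>0$. Petz's map for the reference $\cD_\mu(\bar\rho)$ and channel $\cD_{\lambda/\mu}$ then recovers all $\cD_\mu(\rho_x)$ exactly. That is incompatible with the uniform trace-distance contraction factor $\lambda/\mu<1$ unless the ensemble is trivial.
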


The theorem defines diffusion time directly in terms of ensemble information rather than the raw depolarizing parameter. In practice, the clock is obtained by evaluating the one-dimensional curve $\chi(\lambda)$ and inverting it at the equally spaced information levels in \cref{eq:equal-information-grid-main}. We write 
\begin{equation}
\rho_{x,t}
:=
\cD_{\lambda_t}(\rho_x),
\qquad
t=0,\ldots,T.
\label{eq:forward-state-main}
\end{equation}

The decrement $\Delta_t$ quantifies the information erased by the $t$th forward step. We next ask whether the same quantity also characterizes how difficult that step is to reverse.
Throughout the following analysis, $\cR$ denotes a deterministic, label-independent CPTP recovery channel.

The equal-information clock determines how the forward path should be
discretized, but it does not yet explain why the decrement
$\Delta_t$ should be relevant to the reverse problem. The key question
is whether the amount of cq information erased by a forward step also
controls how accurately that step can be physically reversed.

For generation, the reverse operation cannot depend on the hidden
label $x$. We therefore require a \emph{common} recovery channel:
a single label-independent CPTP map $\cR_t$ that acts on all members
of the ensemble at step $t$. The following result provides the link
between the forward information clock and such a physical reverse map.

\begin{theorem}[Learner-facing common-channel budget]
\label{thm:recovery-budget}
For every step there exists a common recovery channel $\cR_t$,
depending only on $\bar\rho_{t-1}$ and the known forward step, such that
\begin{equation}
-2\sum_xp_x\log\Fr\left(
\rho_{x,t-1},
\cR_t(\rho_{x,t})
\right)
\leq
\Delta_t.
\label{eq:log-budget-main}
\end{equation}
\end{theorem}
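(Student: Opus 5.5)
We would obtain the bound by applying the universal recovery theorem of \citet{junge2018universal} to a classical--quantum state, using the forward step as the channel. Fix a step $t$ and write the one-step forward map as $\cN_t:=\cD_{\lambda_t/\lambda_{t-1}}$. Because depolarizing maps compose multiplicatively, $\cD_{\lambda_t}=\cN_t\circ\cD_{\lambda_{t-1}}$, and hence $\rho_{x,t}=\cN_t(\rho_{x,t-1})$. Introduce the cq state $\omega_{XQ}:=\sum_x p_x\ketbra{x}\otimes\rho_{x,t-1}$ together with the product reference $\sigma:=\omega_X\otimes\bar\rho_{t-1}$. The Holevo identity \cref{eq:holevo-identity} gives $I(X{:}Q_{t-1})=D(\omega\Vert\sigma)$. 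Applying $\idmap_X\otimes\cN_t$ to both states gives $I(X{:}Q_t)=D((\idmap\otimes\cN_t)\omega\Vert(\idmap\otimes\cN_t)\sigma)$. The second argument is exactly $\omega_X\otimes\bar\rho_t$, since $\cN_t$ is linear and $\bar\rho_t=\cN_t(\bar\rho_{t-1})$. It follows that $\Delta_t$ equals the drop in relative entropy under the channel $\idmap_X\otimes\cN_t$.

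The next step is to invoke the strengthened monotonicity inequality of Junge--Renner--Sutter--Wilde--Winter. For a channel $\cM$ and states $\omega,\sigma$,
\[
D(\omega\Vert\sigma)-D(\cM(\omega)\Vert\cM(\sigma))\ \ge\ -\int_{\mathbb R}\beta_0(s)\,2\log\Fr\bigl(\omega,\cR^{s}_{\sigma,\cM}(\cM(\omega))\bigr)\,\dd s
\ \ge\ -2\log\Fr\bigl(\omega,\cR_{\sigma,\cM}(\cM(\omega))\bigr),
\]
where $\cR^{s}_{\sigma,\cM}$ is the rotated Petz map and $\cR_{\sigma,\cM}:=\int\beta_0(s)\cR^{s}_{\sigma,\cM}\,\dd s$ is the universal recovery map. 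The second inequality follows from concavity of $\log\Fr$ in its second argument: $\Fr$ is jointly concave, and $\log$ is concave and increasing. Take $\cM=\idmap_X\otimes\cN_t$ and the product reference $\sigma$ above. The rotated Petz map then factorises as $\idmap_X\otimes\cR^{s}_{\bar\rho_{t-1},\cN_t}$, because the powers $\sigma^{is}$ and $\sigma^{1/2}$ factor across the tensor product and the $\omega_X$ factors cancel against $\cM(\sigma)^{-1/2}$. We therefore define
\[
\cR_t:=\int\beta_0(s)\,\cR^{s}_{\bar\rho_{t-1},\cN_t}\,\dd s .
\]
This is a CPTP map that depends only on $\bar\rho_{t-1}$ and $\cN_t$, as the statement requires.

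It remains to evaluate the fidelity for block-diagonal cq states. Both $\omega$ and $(\idmap\otimes\cR_t)(\idmap\otimes\cN_t)\omega=\sum_x p_x\ketbra{x}\otimes\cR_t(\rho_{x,t})$ are block diagonal in $X$ with the same weights $p_x$, so $\Fr=\sum_x p_x\Fr(\rho_{x,t-1},\cR_t(\rho_{x,t}))$. This yields $\Delta_t\ge-2\log\sum_xp_x\Fr(\cdot,\cdot)$. That is the log of an average, not the average of logs appearing in \cref{eq:log-budget-main}, and concavity of $\log$ points in the wrong direction to bridge the gap. We expect this to be the main obstacle.

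To get around it, we would use the per-$s$ form of the inequality before averaging over $s$. In its sharper statements the JRSWW bound is expressed through a measured relative entropy, $\Delta_t\ge D_{\mathbb M}(\omega\Vert(\idmap\otimes\cR_t)\cM(\omega))$. For block-diagonal cq states, measured relative entropy decomposes blockwise as $\sum_x p_x D_{\mathbb M}(\rho_{x,t-1}\Vert\cR_t(\rho_{x,t}))$, since the optimal measurement may first read $X$ and the classical part contributes nothing. Combining this with $D_{\mathbb M}(\rho\Vert\tau)\ge-2\log\Fr(\rho,\tau)$ in each block gives exactly the expected-log bound \cref{eq:log-budget-main}. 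If one prefers to avoid measured relative entropy, the same conclusion follows from Buscemi--Das--Wilde's form for cq inputs \citep{buscemi2016approximate}, which states the recovery bound directly as a $p_x$-average of $-2\log\Fr$. Invertibility of $\bar\rho_{t-1}$ and $\bar\rho_t$ holds automatically for $t\ge2$ because $\lambda<1$. For $t=1$, support issues are handled by restricting the Petz map to $\supp\bar\rho_{t-1}$ and completing it to a trace-preserving map on the complement.
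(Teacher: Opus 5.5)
Your argument is correct, but it takes a detour that the paper avoids.

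\textbf{The paper's route.} The paper never forms the joint cq relative entropy. It uses $\chi=\sum_xp_xD(\rho_x\Vert\bar\rho)$ to write
$\Delta_t=\sum_xp_x[D(\rho_{x,t-1}\Vert\bar\rho_{t-1})-D(\rho_{x,t}\Vert\bar\rho_t)]$.
It then applies the fidelity form of universal recovery separately to each pair $(\rho_{x,t-1},\bar\rho_{t-1})$ with channel $\cN_t$; supports are fine because $p_x>0$. Universality (the map depends only on $(\bar\rho_{t-1},\cN_t)$) gives one common channel. Summing with weights $p_x$ yields the average-of-logs bound immediately. The cq-fidelity (log-of-average) bound you reached first is then obtained from it by Jensen as a weaker consequence.

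\textbf{Your route.} Because you apply the inequality to the cq state, the label average lands inside the logarithm. To undo this you must import three things:
the stronger measured-relative-entropy version of universal recovery (due to Sutter--Berta--Tomamichel rather than Junge et al.); the blockwise inequality $D_{\mathbb M}\bigl(\omega\Vert(\idmap_X\otimes\cR_t\circ\cN_t)(\omega)\bigr)\ge\sum_xp_xD_{\mathbb M}(\rho_{x,t-1}\Vert\cR_t(\rho_{x,t}))$; and $D_{\mathbb M}\ge-2\log\Fr$.
These steps are all valid, as is your factorization of the rotated Petz maps into $\idmap_X\otimes\cR^s_{\bar\rho_{t-1},\cN_t}$. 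As a by-product you get the slightly stronger bound $\Delta_t\ge\sum_xp_xD_{\mathbb M}(\rho_{x,t-1}\Vert\cR_t(\rho_{x,t}))$. That strength comes from the stronger ingredient, not from the cq detour, and the paper's per-label route needs only the fidelity form.

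\textbf{Two minor remarks.} First, the per-$s$ form does not help by itself, since for each $s$ the label average is still inside the log. Second, the paper credits Buscemi--Das--Wilde only with ensemble-level Holevo-loss recoverability and presents the average-of-logs form as its own step. Your fallback citation should therefore be checked rather than assumed; your measured-entropy argument does not need it.
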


This result gives the forward decrement $\Delta_t$ an operational
reverse interpretation. It is not only the amount of cq information
removed by the $t$th forward step, but also an attainable upper bound
on the expected log-fidelity error of a single physical inverse that
must work across the entire ensemble. Thus the same quantity that
defines the forward schedule also provides a natural scale for local
reverse recovery.

This is the reason for calibrating the diffusion path by cq information rather than directly by the retention parameter $\lambda_t$. A change in $\lambda_t$ only measures displacement along the chosen parameterization of the depolarizing path, whereas $\Delta_t$ measures the ensemble information actually erased by that step. The latter therefore adapts to both the forward channel and the target ensemble before the reverse model is trained.

The preceding result shows that $\Delta_t$ provides an attainable recovery scale. We now ask whether the same information decrement also characterizes the intrinsic difficulty of the reverse step. To place the fidelity-based achievability result and the reverse error in a common metric, we measure recovery quality in trace distance. 

\begin{theorem}[Information-calibrated common-channel recoverability]
\label{thm:two-sided-recoverability}

For the $t$th forward step and any label-independent CPTP recovery channel $\cR$, define the state-wise recovery error and the optimal common-channel recovery error
\begin{equation}
e_x(\cR)
:=
\dtr\!\left(
\rho_{x,t-1},
\cR(\rho_{x,t})
\right),\quad \varepsilon_t^\star
:=
\inf_{\cR\in{\rm CPTP}}
\sum_x p_x\, e_x(\cR).
\label{eq:intrinsic-recovery-error-main}
\end{equation}

Then the information decrement $\Delta_t$ and the intrinsic recovery error $\varepsilon_t^\star$ satisfy
\begin{equation}
\Delta_t
\leq
\Omega_{d,m}\!\left(\varepsilon_t^\star\right),
\label{eq:information-recovery-converse-main}
\end{equation}
where $\Omega_{d,m}$ is the same-prior Holevo-continuity modulus defined in Appendix~\ref{app:recovery}.
Using the Fuchs--van de Graaf inequalities \cite{fuchs1999cryptographic} gives
\begin{equation}
1-\Fr(\rho,\sigma)
\leq
\dtr(\rho,\sigma)
\leq
\sqrt{1-\Fr(\rho,\sigma)^2}.
\label{eq:fvdg-main}
\end{equation}
Combining the upper inequality with the attainable fidelity guarantee
of \cref{thm:recovery-budget} yields
\begin{equation}
\varepsilon_t^\star
\leq
\sqrt{1-e^{-\Delta_t}}.
\label{eq:information-recovery-upper-main}
\end{equation}

Consequently, at fixed Hilbert-space dimension $d$ and ensemble size $m$,
\begin{equation}
\Delta_t\to0
\qquad\Longleftrightarrow\qquad
\varepsilon_t^\star\to0.
\label{eq:information-recovery-equivalence-main}
\end{equation}
\end{theorem}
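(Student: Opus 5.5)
The proof splits into three parts: the upper bound \cref{eq:information-recovery-upper-main}, the converse \cref{eq:information-recovery-converse-main}, and the equivalence \cref{eq:information-recovery-equivalence-main}.

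\textbf{Upper bound.} Take the channel $\cR_t$ from \cref{thm:recovery-budget} and write $F_x:=\Fr(\rho_{x,t-1},\cR_t(\rho_{x,t}))$. By the upper Fuchs--van de Graaf inequality, $e_x(\cR_t)\le\sqrt{1-F_x^2}$. The function $u\mapsto\sqrt{1-e^{-u}}$ is concave and increasing on $[0,\infty)$, and $F_x^2=e^{-(-2\log F_x)}$. Jensen's inequality therefore gives
\begin{equation}
\sum_xp_x e_x(\cR_t)\le\sqrt{1-\exp\!\left(2\sum_xp_x\log F_x\right)}.
\end{equation}
Applying \cref{eq:log-budget-main} bounds the right-hand side by $\sqrt{1-e^{-\Delta_t}}$. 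Since $\varepsilon_t^\star$ is an infimum over all CPTP maps, it is at most this value. Concavity needs a check: writing $g(u)=\sqrt{1-e^{-u}}$, one has $g''\le0$ because $g^2=1-e^{-u}$ is concave and increasing, and $\sqrt{\cdot}$ is concave and increasing.

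\textbf{Converse.} Fix any CPTP map $\cR$. By data processing for the Holevo quantity under the label-independent channel $\cR$,
\begin{equation}
\chi(\{p_x,\rho_{x,t}\})\ge\chi(\{p_x,\cR(\rho_{x,t})\}).
\end{equation}
Hence
\begin{equation}
\Delta_t\le\chi(\{p_x,\rho_{x,t-1}\})-\chi(\{p_x,\cR(\rho_{x,t})\}).
\end{equation}
The two ensembles on the right share the prior $p$, have $m$ members in dimension $d$, and satisfy $\sum_xp_x\dtr(\rho_{x,t-1},\cR(\rho_{x,t}))=\sum_xp_xe_x(\cR)$. The same-prior Holevo-continuity modulus $\Omega_{d,m}$ of the appendix is defined (in Shirokov's style) so that any two such ensembles with average trace distance $\le\varepsilon$ have Holevo quantities differing by at most $\Omega_{d,m}(\varepsilon)$. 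It is also nondecreasing and continuous, with $\Omega_{d,m}(0)=0$. Thus $\Delta_t\le\Omega_{d,m}(\sum_xp_xe_x(\cR))$ for every $\cR$.

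To reach $\varepsilon_t^\star$, I will show the infimum is attained. The CPTP maps on a finite-dimensional space form a compact set (via Choi matrices), and $\cR\mapsto\sum_xp_xe_x(\cR)$ is continuous. Alternatively, I can take a minimizing sequence and use continuity of $\Omega_{d,m}$.

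\textbf{Equivalence.} If $\Delta_t\to0$, the upper bound forces $\varepsilon_t^\star\to0$. Conversely, if $\varepsilon_t^\star\to0$, the converse gives $\Delta_t\le\Omega_{d,m}(\varepsilon_t^\star)\to0$ by continuity of $\Omega_{d,m}$ at $0$, with $d$ and $m$ fixed.

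\textbf{Main obstacle.} The delicate step is ensuring $\Omega_{d,m}$ really controls the Holevo difference via the \emph{averaged} trace distance rather than the maximal one. The natural route goes through the cq states $\sum_xp_x\ketbra{x}\otimes\rho_x$. Their trace distance equals the average $\sum_xp_x\dtr(\cdot,\cdot)$, and applying Alicki--Fannes--Winter continuity of conditional entropy $S(Q|X)$ and of $S(Q)$ yields a modulus of the form $2\varepsilon\log d+g(\varepsilon)$. If the appendix's definition is abstract, for instance a supremum over same-prior pairs, I will rely on this bound only to establish continuity at zero.
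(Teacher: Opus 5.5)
\textbf{What matches the paper.} Your upper bound and the equivalence step are correct and follow the paper. Folding the two Jensen steps into concavity of $u\mapsto\sqrt{1-e^{-u}}$ is only a repackaging of the paper's argument, which first applies Jensen to $-\log$ to get $\sum_xp_xF_x\ge e^{-\Delta_t/2}$ and then uses concavity of $f\mapsto\sqrt{1-f^2}$. The skeleton of your converse is also the paper's: data processing of $\chi$ under $\cR$; equal priors, so the ensemble distance is exactly $\sum_xp_xe_x(\cR)$; and attainment of the infimum by compactness of the Choi set.

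\textbf{The gap.} It lies exactly in the step you flagged. $\Omega_{d,m}$ is not an abstract optimal modulus that holds ``by definition''. It is the explicit function
\[
\Omega_{d,m}(r)=\min\bigl\{2\mathfrak f_d(r),\ r\log\min\{d,m\}+g(r),\ \log\min\{d,m\}\bigr\}.
\]
So with $\varepsilon=\sum_xp_xe_x(\cR)$, the inequality $|\chi_{t-1}-\chi(\cE_\cR)|\le\Omega_{d,m}(\varepsilon)$ has to be proved branch by branch, and that is the real content of the converse.

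Your fallback bounds $S(Q)$ and $S(Q|X)$ separately, so it pays two remainders. The result is roughly $\varepsilon\log(d-1)+h_2(\varepsilon)+\varepsilon\log d+g(\varepsilon)$, not $2\varepsilon\log d+g(\varepsilon)$, and this exceeds $\Omega_{d,m}(\varepsilon)$. That weaker converse still gives the equivalence, but not \eqref{eq:information-recovery-converse-main} as stated.

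\textbf{What is needed.} Each branch has its own argument.
\begin{enumerate}
\item For the $2\mathfrak f_d$ branch, apply Audenaert's bound to $\bar\rho_{t-1}$ versus $\bar\sigma=\sum_xp_x\cR(\rho_{x,t})$. This uses $\dtr(\bar\rho_{t-1},\bar\sigma)\le\varepsilon$ by convexity, together with monotonicity of $\mathfrak f_d$. Then apply it to each member and use concavity of $\mathfrak f_d$ to get $\sum_xp_x\mathfrak f_d(\varepsilon_x)\le\mathfrak f_d(\varepsilon)$. Your own split $\chi=S(Q)-S(Q|X)$ yields this once you use Audenaert rather than AFW on the second term.
\item For the $r\log\min\{d,m\}+g(r)$ branch, you need a same-prior Holevo-continuity bound with a single $g$ remainder. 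The paper takes this from Shirokov's Proposition~5, where equal priors allow the generic $2g$ remainder to be replaced by $g$.
\item The third branch is the trivial range $0\le\chi\le\log\min\{d,m\}$.
\end{enumerate}
Taking the minimum gives $\Delta_t\le\Omega_{d,m}(\varepsilon(\cR))$ for every $\cR$, and your attainment argument then finishes the proof.
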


The theorem gives the information clock a direct reverse interpretation. The forward decrement $\Delta_t$ not only determines how the depolarizing path is discretized, but also calibrates the reverse problem through a physically attainable common-channel recovery scale. A small $\Delta_t$ guarantees the existence of an accurate common inverse, while vanishing optimal recovery error requires the erased Holevo information to vanish as well. Thus the equal-information schedule equalizes an information-level recovery budget rather than claiming that the exact optimal recovery error must be identical at every step.

The information-level characterization can be further strengthened by geometry-sensitive converse bounds based on the pairwise distinguishability erased by the forward channel. These refinements provide additional diagnostics of irreversible ensemble structure but do not modify the information-calibrated schedule, and are therefore deferred to Appendix~\ref{app:pairwise-proof}.

\section{Beyond calibrated recovery: generative coverage}
\label{sec:coverage-separation}

Learning a distribution of quantum states is already a central objective in quantum diffusion models. QuDDPM explicitly formulates the generative task as learning an unknown distribution of quantum states, rather than merely reproducing its average density operator, and trains each stage by comparing the generated and forward ensembles with distribution-level objectives such as MMD or Wasserstein distance \citep{zhang2024generative}. The same ensemble-level viewpoint is extended to mixed states in MSQuDDPM, which uses depolarizing forward noise together with superfidelity-based MMD and Wasserstein objectives for matching mixed-state ensembles \citep{kwun2024mixed}. In particular, these works already recognize that state similarity by itself need not identify a quantum-state distribution; QuDDPM explicitly notes that a fidelity-kernel MMD can fail to distinguish some distributions and uses Wasserstein distance as an alternative \citep{zhang2024generative}.

The question here is therefore not whether distribution matching is needed for quantum diffusion. Rather, the previous section introduced a different object: the information decrement $\Delta_t$ gives a physically attainable scale for \emph{local recovery} across each forward step. Once such a recovery criterion is available, a natural question is whether satisfying it already guarantees the correct generated ensemble, in which case a separate distribution objective would be redundant.

These two requirements concern different aspects of the reverse process. The local recovery criterion asks whether an output state is compatible with the amount of information erased by the corresponding forward step. A distribution-level criterion instead asks whether the stochastic reverse model allocates probability mass correctly across the target ensemble. The purpose of this section is to show that the first requirement does not imply the second, even when the local recovery criterion is satisfied exactly at the scale prescribed by $\Delta_t$.

We first isolate the mechanism with a simple symmetric example. This proposition is only an auxiliary observation; the quantum-specific separation relevant to the information-calibrated recovery budget is given in the theorem that follows.

\begin{proposition}[Coverage blindness of symmetric local scores]
\label{prop:assignment-blindness}
Let the target distribution be uniform over $\{\rho_1,\ldots,\rho_m\}$. Suppose a local score $S$ satisfies 
\begin{equation}
S(\rho_i,\rho_i)=s_{\rm self},
\qquad
S(\rho_i,\rho_j)=s_{\rm off}
\quad
\text{for all } i\neq j.
\label{eq:score-regularity-main}
\end{equation}

Let $X$ and $Z$ be independent and uniform on $\{1,\ldots,m\}$. Consider a correctly covering generator 
\begin{equation}
\widehat\rho^{\rm good}=\rho_Z
\end{equation}
and a collapsed generator
\begin{equation}
\widehat\rho^{\rm col}=\rho_{j_0},
\end{equation}
where $j_0$ is any fixed index.

For both generators, the local score has the same distribution:
\begin{equation}
S(\rho_X,\widehat\rho)
=
\begin{cases}
s_{\rm self}, & \text{with probability }1/m,\\
s_{\rm off}, & \text{with probability }1-1/m.
\end{cases}
\end{equation}
Hence the complete local-score distribution cannot distinguish the correctly covering generator from the collapsed one, even though their output distributions are different.
\end{proposition}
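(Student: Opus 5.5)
The argument is a direct computation of the law of $S(\rho_X,\widehat\rho)$ for each generator. In each case the score takes only the two values $s_{\rm self}$ and $s_{\rm off}$, so it suffices to compute $\Pr[S(\rho_X,\widehat\rho)=s_{\rm self}]$. We may assume $s_{\rm self}\neq s_{\rm off}$; otherwise the score is constant and both laws are trivially the same point mass. We also assume the $\rho_i$ are distinct, or more precisely that equality of indices is read through the labels. The regularity condition \cref{eq:score-regularity-main} is stated on indices, so we work with the index pair $(X,\text{output index})$ throughout.

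\textbf{Covering generator.} Here the output index is $Z$, independent of $X$, and both are uniform. The score equals $s_{\rm self}$ exactly on the event $\{X=Z\}$. Summing over $i$ gives
\[
\Pr[X=Z]=\sum_i \Pr[X=i]\Pr[Z=i]=m\cdot m^{-2}=1/m.
\]
On the complement the score is $s_{\rm off}$, with probability $1-1/m$.

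\textbf{Collapsed generator.} Here the output index is the constant $j_0$. The score equals $s_{\rm self}$ exactly when $X=j_0$, which has probability $1/m$ since $X$ is uniform. Otherwise it equals $s_{\rm off}$, with probability $1-1/m$. The two laws therefore coincide.

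\textbf{Output distributions differ.} The covering generator has output law $\frac1m\sum_i\delta_{\rho_i}$, while the collapsed generator has output law $\delta_{\rho_{j_0}}$. These are different whenever $m\ge2$ and the states are distinct, because the collapsed law assigns mass $1\neq 1/m$ to $\rho_{j_0}$.

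There is no real obstacle in this proof. The only point needing care is that if some $\rho_i$ coincide, the score is well defined only through indices; the statement implicitly assumes distinct states, and I would state that assumption explicitly.
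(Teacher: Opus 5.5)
Your proof is correct and is essentially the paper's argument. The paper conditions on $Z=j$, notes that each branch reproduces the collapsed score law $\nu_S$, and averages; this is the same computation as your $\Pr[X=Z]=\sum_i\Pr[X=i]\Pr[Z=i]=1/m$, apart from two cosmetic points: the paper shows the output laws differ via $\Wtr(\mu,\delta_{\rho_{j_0}})=\frac1m\sum_x\dtr(\rho_x,\rho_{j_0})$ rather than by comparing point masses, and your distinctness caveat is unnecessary for the score-law part, because coincident states $\rho_i=\rho_j$ with $i\neq j$ already force $s_{\rm self}=s_{\rm off}$ through the hypothesis itself.
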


The proposition shows what would be lost if the distribution-level comparison were replaced by a local target--output score alone: even the full distribution of that score can be blind to how generated probability mass is allocated across output modes. 

For the present framework, the remaining question is more specific. Can this blindness persist for the root-fidelity criterion used by our information-calibrated recovery theory, while both models satisfy the same $\Delta$-based physical recovery budget? The next theorem shows that it can, in a fixed genuinely quantum system.

\begin{theorem}[Fixed two-qubit recoverability--coverage separation]
\label{thm:coverage-separation-main}
There exists a one-step complete-depolarization problem on two qubits with $16$ equiprobable pairwise noncommuting target states and two physically valid stochastic CPTP reverse models with the following properties. One model reproduces the target distribution exactly, whereas the other is completely collapsed onto a single target state. Nevertheless,

\begin{enumerate}[label=(\roman*),leftmargin=*]
\item the two models have identical local root-fidelity distributions,
\begin{equation}
\Fr(\rho_X,\widehat\rho^{\rm good})
\stackrel{\rm law}{=}
\Fr(\rho_X,\widehat\rho^{\rm col});
\label{eq:fidelity-law-separation-main}
\end{equation}

\item their expected local log-fidelity losses are also identical and both satisfy the exact information-calibrated recovery budget,
\begin{equation}
-2\mathbb E\log\Fr(\rho_X,\widehat\rho^{\rm good})
=
-2\mathbb E\log\Fr(\rho_X,\widehat\rho^{\rm col})
\leq
\Delta,
\quad
\Delta=\frac12\log3;
\label{eq:twoq-budget-separation-main}
\end{equation}

\item their generated distributions are nevertheless macroscopically different:
\begin{equation}
\Wtr\left(\mu,\Law(\widehat\rho^{\rm good})\right)=0,
\quad
\Wtr\left(\mu,\Law(\widehat\rho^{\rm col})\right)
=
\frac{\sqrt5}{4}>\frac12,
\label{eq:wasserstein-separation-main}
\end{equation}
where
\begin{equation}
\mu=\frac1{16}\sum_{x=1}^{16}\delta_{\rho_x}
\end{equation}
is the target ensemble distribution.
\end{enumerate}
\end{theorem}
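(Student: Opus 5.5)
The construction uses a two-qubit SIC-POVM, partially depolarized so that the Holevo information is exactly $\frac12\log3$. Then both reverse models ignore their (maximally mixed) input. For that reason, properties (i)--(iii) reduce to computing one off-diagonal fidelity and one trace distance.

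\textbf{Construction.} Let $\{\ket{\psi_x}\}_{x=1}^{16}\subset\mathbb C^4$ be a SIC set, so that $\abs{\langle\psi_x|\psi_y\rangle}^2=1/5$ for $x\neq y$. We assume a two-qubit SIC exists, which is known for $d=4$. Set
\[
\rho_x=\cD_{2/3}(\ketbra{\psi_x}) = \tfrac23\ketbra{\psi_x}+\tfrac1{12}\id,
\]
and consider one step of complete depolarization, $\lambda=0$.

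The targets pairwise do not commute, because the pure components are distinct and non-orthogonal. Each $\rho_x$ has spectrum $(3/4,1/12,1/12,1/12)$. Since $\frac1{16}\sum_x\ketbra{\psi_x}=\id/4$, the average state is $\bar\rho=\id/4$. Therefore
\[
\Delta=\chi(1)-\chi(0)=\log4-S(\rho_x)
=\log4-\tfrac34\log\tfrac43-\tfrac14\log12=\tfrac12\log3 .
\]

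\textbf{Reverse models.} The noisy input is always $\id/4$. Both reverse models are replacement channels $\sigma\mapsto\Tr(\sigma)\,\tau$, which are CPTP.
\begin{itemize}
\item The good model draws $Z$ uniformly and independently, and outputs $\tau=\rho_Z$.
\item The collapsed model outputs $\tau=\rho_{j_0}$.
\end{itemize}
The good model's output law is exactly $\mu$.

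\textbf{(i) Equal fidelity laws.} We invoke \cref{prop:assignment-blindness} with $S=\Fr$. This needs $\Fr(\rho_x,\rho_y)$ to be the same for all $x\neq y$.

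Both states act as $\frac1{12}$ times the identity on the orthogonal complement of $V=\mathrm{span}\{\psi_x,\psi_y\}$, and fidelity is additive over this direct sum. So $\Fr$ depends only on the overlap, which is the same for every pair.

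\textbf{(ii) Budget.} Fidelity equals $1$ with probability $1/16$, so
\[
-2\mathbb E\log\Fr=\tfrac{15}{16}\bigl(-2\log F_{\rm off}\bigr).
\]
To compute $F_{\rm off}$, split into the part on $V$ and the part on its complement:
\begin{itemize}
\item The complement contributes $2\cdot\frac1{12}=\frac16$.
\item On $V$, the blocks are $2\times2$ positive matrices $A,B$. For these, $\bigl(\Tr\sqrt{\sqrt AB\sqrt A}\bigr)^2=\Tr(AB)+2\sqrt{\det A\det B}$. Here $\Tr(AB)=\frac4{45}+\frac19+\frac1{72}$ and $\det A=\det B=\frac1{16}$, which gives $\frac{61}{180}$.
\end{itemize}
Hence $F_{\rm off}=\sqrt{61/180}+\frac16\approx0.7488$. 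The budget inequality becomes $\frac{15}{16}\log(1/F_{\rm off}^2)\approx0.542\le\frac12\log3\approx0.549$.

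I expect this step to be the main obstacle. The margin is thin, so I would certify it with rigorous elementary bounds rather than decimals. The cleanest route is to show $F_{\rm off}^{-30/16}\le\sqrt3$ via a rational lower bound on $\sqrt{61/180}$.

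\textbf{(iii) Wasserstein distances.} For the good model, $\Wtr=0$. For the collapsed model, the only coupling with the Dirac output law sends all of $\mu$ to $\rho_{j_0}$. The trace distance between two of our states is
\[
\dtr(\rho_x,\rho_y)=\tfrac23\,\dtr(\psi_x,\psi_y)=\tfrac23\sqrt{1-\tfrac15}=\tfrac{4}{3\sqrt5}.
\]
Therefore
\[
\Wtr=\tfrac{15}{16}\cdot\tfrac{4}{3\sqrt5}=\tfrac{\sqrt5}{4}>\tfrac12 .
\]
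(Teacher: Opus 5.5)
Your proposal is correct and follows essentially the same route as the paper: the same partially depolarized SIC ensemble $\rho_j=\tfrac23P_j+\tfrac1{12}\id$, replacer-channel reverse branches, the span/complement computation giving $\Fr(\rho_i,\rho_j)=\tfrac16+\sqrt{61/180}=(5+\sqrt{305})/30$, and the unique coupling with a Dirac measure yielding $\Wtr=\sqrt5/4$. The paper closes the thin budget step by the rational certification you planned, using $\sqrt{305}>87/5$ to get $f>56/75$, then $(56/75)^3>27/65$ and $(27/65)^5>1/81$, so $f^{15}>1/81$, which is equivalent to $-\tfrac{15}{8}\log f<\tfrac12\log3$.
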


The theorem makes the distinction between the two objectives explicit inside the information-calibrated quantum diffusion setting. The $\Delta$-based criterion answers a local physical question: whether the reverse outputs are compatible with the recoverability scale induced by the forward information loss. It does not answer the generative question of whether the stochastic reverse process reproduces the correct probability law over quantum states. In the construction above, the latter can fail severely even though the complete local-fidelity statistics and the information-calibrated recovery budget are indistinguishable.

The construction is deliberately finite and genuinely quantum: it uses two qubits and pairwise noncommuting target states, so the separation is not an artifact of asymptotically large dimension or a purely classical commuting example. The explicit SIC-based construction and proof are given in Appendix~\ref{app:coverage-separation}.

The same separation can be made arbitrarily severe.

\begin{corollary}[Arbitrarily severe budget-feasible separation]
\label{cor:coverage-separation-amplified}
For every $c\in(0,1)$ there exists a finite-dimensional complete-depolarization problem and two stochastic CPTP reverse models with identical local-fidelity laws and identical expected local log-fidelity losses, both within the exact information-calibrated recovery budget, while one model has zero trace-Wasserstein error and the other has error larger than $c$. The dimension may be chosen as a power of two.
\end{corollary}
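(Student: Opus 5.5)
The plan is to amplify the construction of \cref{thm:coverage-separation-main} by taking tensor powers. Let $\{\rho_1,\ldots,\rho_{16}\}$ be the equiprobable two-qubit target ensemble from that theorem, with the two stochastic CPTP reverse models realized as replacement channels. The good model discards its input and prepares $\rho_Z$ with $Z$ uniform; the collapsed model discards its input and prepares $\rho_{j_0}$. For $n\ge1$, define the product ensemble on $d=4^n$ dimensions (a power of two). It consists of $16^n$ equiprobable states $\rho_{\bm x}=\rho_{x_1}\otimes\cdots\otimes\rho_{x_n}$, whose label $\bm X=(X_1,\ldots,X_n)$ has i.i.d.\ uniform coordinates. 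The forward step is complete depolarization to $\id/d$, which is the tensor product of the single-block complete depolarizations. The two reverse models are the $n$-fold tensor products of the single-block models, so they are again valid stochastic CPTP maps. The good model outputs $\rho_{\bm Z}$ with $\bm Z$ i.i.d.\ uniform, so its law is exactly $\mu^{(n)}=16^{-n}\sum_{\bm x}\delta_{\rho_{\bm x}}$, and its trace-Wasserstein error is $0$.

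First, I verify the budget. Holevo information is additive for product ensembles with independent labels, and complete depolarization sends every state to $\id/d$. Hence the exact decrement is $\Delta^{(n)}=\chi(\mathcal E^{\otimes n})-0=n\Delta$ with $\Delta=\frac12\log3$. Root fidelity is multiplicative on tensor products, so
\begin{equation*}
\log\Fr(\rho_{\bm X},\widehat\rho)=\sum_{k=1}^n\log\Fr(\rho_{X_k},\widehat\rho_k).
\end{equation*}
For both models the pairs $(X_k,\widehat\rho_k)$ are i.i.d.\ across $k$. This holds for the collapsed model as well, because its outputs are deterministic and the $X_k$ are independent. By part (i) of \cref{thm:coverage-separation-main}, each coordinate has the same root-fidelity law under both models. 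Therefore the product laws, and hence the laws of the products of fidelities, coincide. Taking expectations gives $-2\mathbb E\log\Fr=n\cdot(\text{single-block value})\le n\Delta=\Delta^{(n)}$ for both models, with equality between them.

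Second, I lower-bound the collapsed model's error. Because its law is a point mass at $\rho_{\bm j_0}$, the only coupling is the product one, so
\begin{equation*}
\Wtr\bigl(\mu^{(n)},\delta_{\rho_{\bm j_0}}\bigr)=\mathbb E\,\dtr(\rho_{\bm X},\rho_{\bm j_0}).
\end{equation*}
By the lower Fuchs--van de Graaf inequality \cref{eq:fvdg-main}, $\dtr\ge1-\Fr$. Let $f:=\max_{i\ne j}\Fr(\rho_i,\rho_j)$, which is strictly less than $1$ because the target states are distinct. Multiplicativity then gives $\Fr(\rho_{\bm X},\rho_{\bm j_0})\le f^{K}$, where $K\sim\mathrm{Bin}(n,15/16)$ counts the coordinates with $X_k\ne j_{0,k}$. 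Hence the error is at least $1-\mathbb E f^{K}=1-\bigl(\tfrac1{16}+\tfrac{15}{16}f\bigr)^n$, which tends to $1$. So for any $c\in(0,1)$ a finite $n$ makes it exceed $c$.

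The only delicate point is the exact-budget claim. It needs two things: that the information-calibrated budget in the statement refers to the product problem's own decrement, and that Holevo additivity holds for this product ensemble. Additivity follows directly from $S(\bar\rho^{\otimes n})=nS(\bar\rho)$ and the additivity of entropies of product states. If one insists on the original per-block budget instead, the per-coordinate inequality from \cref{thm:coverage-separation-main} is still what drives the result, so I expect no real obstruction there. Beyond this, the remaining work is bookkeeping: checking that tensor products of the replacement channels are CPTP, and that the target states remain pairwise noncommuting, which is not even required by the corollary.
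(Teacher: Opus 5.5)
Your proof is correct, but it takes a different route from the paper's.

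The paper does not tensorize the SIC example. It switches to the commuting orthogonal family $\rho_x^\lambda=\lambda\ketbra{x}+(1-\lambda)\id/d$, $x=1,\ldots,d$, with the same replacer good and collapsed models. There the collapsed error is exactly $\lambda(1-1/d)$, so one fixes $\lambda\in(c,1)$. Budget feasibility is then only asymptotic: $\Delta_d(\lambda)=\lambda\log d-h_2(\lambda)+o(1)\to\infty$, while the common log-risk $-2\tfrac{d-1}{d}\log f_d(\lambda)\to-2\log(1-\lambda)<\infty$. So the budget holds for $d\ge d_0(\lambda)$, taken along $d=2^n$.

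Your tensor-power construction instead carries the already verified strict inequality $-\tfrac{15}{8}\log f_{\rm SIC}<\tfrac12\log 3$ through additivity of the Holevo quantity and multiplicativity of $\Fr$. This buys three things:
\begin{itemize}
\item Feasibility holds for every $n\ge1$, with slack growing linearly in $n$, and no asymptotic comparison is needed.
\item The collapsed error $\ge 1-\bigl(\tfrac1{16}+\tfrac{15}{16}f_{\rm SIC}\bigr)^n$ tends to $1$ at an explicit exponential rate.
\item The amplified targets stay pairwise noncommuting, so the amplified separation remains genuinely quantum; the paper's commuting family gives this up. If $\mathbf{x}\neq\mathbf{y}$, equality of $\bigotimes_k\rho_{x_k}\rho_{y_k}$ and $\bigotimes_k\rho_{y_k}\rho_{x_k}$ would force factorwise proportionality. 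Comparing the positive traces $\Tr(\rho_{x_k}\rho_{y_k})$ makes each constant one, hence commutation in every factor, a contradiction.
\end{itemize}

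The paper's route, in turn, gives a fully closed-form computation with only $d$ target states instead of $16^n$. It also divides the roles cleanly: $\lambda$ sets the size of the Wasserstein gap and $d$ supplies the budget.

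One small wording point. For the collapsed model, the pairs $(X_k,\widehat\rho_k)$ are independent but not identically distributed when the coordinates of $\mathbf{j}_0$ differ. What you actually use, and what holds, is that the factors $\Fr(\rho_{X_k},\rho_{j_{0,k}})$ are i.i.d.\ with the single-block law. Alternatively, simply take $\mathbf{j}_0=(j_0,\ldots,j_0)$.
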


Thus the role of the learning objective is now separated into two parts. The information-calibrated constraint keeps each reverse step at a physically meaningful local recovery scale, while a distribution objective controls the probability law of the stochastic outputs. 

\section{From information-calibrated recovery to stochastic generation}
\label{sec:learning}
Section~\ref{sec:coverage-separation} shows that local recovery and generative coverage are non-substitutable. We therefore learn a stochastic reverse process with a distribution-matching objective, while using the information decrements $\Delta_t$ to constrain the quality of each local reverse step.

Let $K_{\theta,t}$ denote the learned stochastic CPTP reverse kernel at timestep $t$. For an input state $\rho$, $K_{\theta,t}(\rho,\cdot)$ is the output distribution obtained by sampling an input-independent latent variable and applying the corresponding physical branch. Given $X\sim p$, we write
\begin{equation}
\widehat\rho_{t-1}\sim K_{\theta,t}(\rho_{X,t},\cdot).
\label{eq:kernel-law-main}
\end{equation}
For a clipping level $\gamma\in(0,1)$, define the population local recovery loss
\begin{equation}
L_{\gamma,t}(\theta)=
\mathbb E\!\left[
-2\log\max\!\left\{\gamma,
\Fr(\rho_{X,t-1},\widehat\rho_{t-1})\right\}
\right].
\label{eq:population-clipped-loss-main}
\end{equation}
The clipping only keeps the log-fidelity loss bounded; it does not change the role of $\Delta_t$ as the physical recovery reference. By the common-channel budget theorem, the level $L_{\gamma,t}\leq\Delta_t$ is attainable in the unrestricted physical class.

For distribution matching, let
\begin{equation}
\widehat{\mathcal J}_{\rm dist}(\theta)
:=\frac1T\sum_{t=1}^T
\widehat{\mathrm{MMD}}_{\rm Gmix,t}^{2}(\theta),
\label{eq:distribution-objective-main}
\end{equation}
where $\widehat{\mathrm{MMD}}_{\rm Gmix,t}^{2}$ is the empirical Gaussian-mixture MMD$^2$ between the target state batch at timestep $t-1$ and the batch generated by $K_{\theta,t}$ from timestep-$t$ inputs; its kernel and estimator are given in Appendix~\ref{app:experimental-protocol}. Hats denote empirical quantities computed from the training data, including the estimated clock budget $\widehat\Delta_t$. The learned model is then obtained from
\begin{equation}
\min_\theta\widehat{\mathcal J}_{\rm dist}(\theta)
\quad\text{s.t.}\quad
\widehat L_{\gamma,t}(\theta)\leq\widehat\Delta_t,
\qquad t=1,\ldots,T.
\label{eq:constrained-training-main}
\end{equation}
Thus the objective controls stochastic coverage, while the constraints keep every local reverse step on the information-calibrated recovery scale.

\begin{theorem}[Uniform finite-sample calibration]
\label{thm:generalization}
Consider the bounded-generator Stinespring class in Appendix~\ref{app:generalization}, with $P$ trainable rotations and $n$ iid training examples. Suppose the trained parameter $\widehat\theta$ satisfies
\begin{equation}
\max_t\pos{\widehat L_{\gamma,t}(\widehat\theta)-\widehat\Delta_t}
\leq\eta_{\rm train}.
\end{equation}
Define the clock-estimation error
\begin{equation}
\varepsilon_{\rm clk}
:=\max_t\left|\widehat\Delta_t-\Delta_t\right|
\end{equation}
and the uniform generalization term
\begin{equation}
\varepsilon_{\rm gen}
:=\frac{4\log(1/\gamma)}{\sqrt n}
+2\log(1/\gamma)
\sqrt{
\frac{
P\log\!\left(1+\frac{2\pi Pn}{\gamma^2\log^2(1/\gamma)}\right)
+\log(T/\delta)
}{2n}}.
\label{eq:epsilon-gen-main}
\end{equation}
Then, with probability at least $1-\delta$,
\begin{equation}
L_{\gamma,t}(\widehat\theta)
\leq
\Delta_t+\varepsilon_{\rm cal}
\qquad\text{for every }t,
\label{eq:population-calibration-main}
\end{equation}
where
\begin{equation}
\varepsilon_{\rm cal}
:=\eta_{\rm train}+\varepsilon_{\rm clk}+\varepsilon_{\rm gen}.
\label{eq:epsilon-cal-main}
\end{equation}
\end{theorem}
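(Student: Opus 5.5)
The proof reduces to a uniform deviation bound between empirical and population clipped losses over the parameter class, followed by a union bound over timesteps. For every $t$ we decompose
\begin{equation*}
L_{\gamma,t}(\widehat\theta)-\Delta_t
=\bigl[L_{\gamma,t}(\widehat\theta)-\widehat L_{\gamma,t}(\widehat\theta)\bigr]
+\bigl[\widehat L_{\gamma,t}(\widehat\theta)-\widehat\Delta_t\bigr]
+\bigl[\widehat\Delta_t-\Delta_t\bigr].
\end{equation*}
The second bracket is at most $\eta_{\rm train}$ by hypothesis, and the third is at most $\varepsilon_{\rm clk}$ by definition. It therefore suffices to show that, with probability at least $1-\delta$,
\begin{equation*}
\sup_\theta\bigl(L_{\gamma,t}(\theta)-\widehat L_{\gamma,t}(\theta)\bigr)\le\varepsilon_{\rm gen}
\quad\text{for all } t.
\end{equation*}
The supremum is essential, since $\widehat\theta$ depends on the data.

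For fixed $t$ and $\theta$, the per-example loss is the conditional expectation over the latent branch of $-2\log\max\{\gamma,\Fr\}$. We take this to be the quantity averaged in $\widehat L_{\gamma,t}$; if the empirical loss also samples latents, we condition on them. This loss lies in $[0,2\log(1/\gamma)]$, so Hoeffding's inequality gives a deviation of $2\log(1/\gamma)\sqrt{\log(1/\delta')/(2n)}$ at a single $\theta$.

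To make this uniform, we use a Lipschitz covering argument on the parameter torus $[0,2\pi)^P$. The key step is a Lipschitz estimate in $\theta$. With bounded-generator rotations $e^{-i\theta_k G_k/2}$, $\|G_k\|\le1$, each branch output changes in trace norm by at most $\sum_k|\theta_k-\theta'_k|$, by a telescoping argument on the Stinespring isometry and the contractivity of partial trace. Root fidelity is $1$-Lipschitz in trace distance in an appropriate sense, for instance through $|\Fr(\rho,\sigma)-\Fr(\rho,\sigma')|\le\sqrt{2\,\dtr(\sigma,\sigma')}$ or a purified-distance bound. The map $u\mapsto-2\log\max\{\gamma,u\}$ is $(2/\gamma)$-Lipschitz. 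Combining these, the clipped loss is Lipschitz with a constant of order $P/\gamma$, or it is Hölder. The logarithm inside $\varepsilon_{\rm gen}$ suggests we should pick a grid mesh $r$ so that the discretization error is about $2\log(1/\gamma)/\sqrt n$. The resulting covering number is then of order $(1+2\pi Pn/(\gamma^2\log^2(1/\gamma)))^P$. The $\gamma^2$ there is consistent with a square-root (Hölder-$1/2$) fidelity continuity, which requires squaring the mesh.

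We then apply Hoeffding on the net with a union bound over net points and over the $T$ timesteps, setting $\delta'=\delta/(T\cdot N)$. This yields the square-root term in $\varepsilon_{\rm gen}$. The discretization error is paid twice, once for the population loss and once for the empirical loss, and also absorbs the rounding of $\log N$. This accounts for the term $4\log(1/\gamma)/\sqrt n$.

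The main obstacle is the continuity step. We must choose the continuity inequality for $\Fr$, for example the Hölder bound on $\sqrt{1-\Fr}$ via Fuchs--van de Graaf (\cref{eq:fvdg-main}), and the mesh, so that the constants reproduce exactly $2\pi Pn/(\gamma^2\log^2(1/\gamma))$. We would first try a mesh of $r=\gamma^2\log^2(1/\gamma)/(Pn)$ in $\ell_\infty$. We would then check that the resulting per-point loss perturbation is at most $2\log(1/\gamma)/\sqrt n$ and adjust constants as needed.
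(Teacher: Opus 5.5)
Your proposal is correct and follows the paper's proof: the same three-term decomposition, the telescoping bound on the Stinespring unitary, the $\sqrt{2\dtr}$ fidelity continuity combined with the $2/\gamma$-Lipschitz clipped logarithm, a net chosen so the H\"older-$1/2$ discretization error is $2\log(1/\gamma)/\sqrt n$ (paid once for the empirical risk and once for the population risk), and Hoeffding with a union bound over $T N_r$ net points. Your $\ell_\infty$ grid with $r=\gamma^2\log^2(1/\gamma)/(Pn)$ closes exactly as you hope. It gives a covering number at most $(1+\pi Pn/(\gamma^2\log^2(1/\gamma)))^P$, which is slightly smaller than the paper's Euclidean-ball count $(1+2\pi Pn/(\gamma^2\log^2(1/\gamma)))^P$, so the stated $\varepsilon_{\rm gen}$ follows a fortiori. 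Cover the cube $[-\pi,\pi]^P$ itself rather than a torus, since general bounded generators with $\|H_j\|_\infty\le 1$ need not be $2\pi$-periodic.
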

The three terms in $\varepsilon_{\rm cal}$ respectively account for incomplete constraint satisfaction, error in the estimated information clock, and finite-sample generalization. The theorem is a validity guarantee; in the finite-support experiments we also evaluate the corresponding population calibration quantities directly.

The preceding result converts the empirical constraints into population-level local guarantees. To connect those guarantees to the final generated distribution, let
\begin{equation}
\mu_t=\sum_xp_x\delta_{\rho_{x,t}}
\end{equation}
be the target distribution at timestep $t$, and let $\nu_t$ be the distribution produced by the learned reverse chain. A local bound $L_{\gamma,t}\leq a$ implies a one-step trace-Wasserstein error bounded by the transfer function
\begin{equation}
\psi_\gamma(a)=
\sqrt{1-\pos{e^{-a/2}-\frac{\gamma a}{2\log(1/\gamma)}}^{\,2}}.
\label{eq:psi-main}
\end{equation}
Appendix~\ref{app:endpoint} derives this conversion from clipped log-fidelity to trace distance. Composing the resulting one-step bounds gives the endpoint guarantee.

\begin{theorem}[Compositional endpoint bound]
\label{thm:endpoint-certificate}
For the input-independent latent-conditioned CPTP kernels above,
\begin{equation}
\Wtr(\mu_0,\nu_0)
\leq
\Wtr(\mu_T,\nu_T)
+\sum_{t=1}^T\psi_\gamma(\Delta_t+\varepsilon_{\rm cal}).
\label{eq:endpoint-simple-main}
\end{equation}
For a maximally mixed terminal prior,
\begin{equation}
\Wtr(\mu_T,\nu_T)\leq\lambda_T(1-1/d).
\label{eq:terminal-mismatch-main}
\end{equation}
\end{theorem}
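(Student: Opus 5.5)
The plan is a telescoping triangle-inequality argument for $\Wtr$ along the reverse chain. Write $K_t:=K_{\widehat\theta,t}$ for the learned kernels. The generated laws satisfy $\nu_{t-1}=\nu_t K_t$, meaning we sample $\sigma\sim\nu_t$ and then $\widehat\sigma\sim K_t(\sigma,\cdot)$. For each $t$ we split
\begin{equation*}
\Wtr(\mu_{t-1},\nu_{t-1})
\leq
\Wtr(\mu_{t-1},\mu_tK_t)+\Wtr(\mu_tK_t,\nu_tK_t).
\end{equation*}
Iterating from $t=1$ to $T$ gives \cref{eq:endpoint-simple-main}, provided two one-step facts hold:
\begin{itemize}[leftmargin=*]
\item[(a)] \emph{Nonexpansiveness:} $\Wtr(\mu K_t,\nu K_t)\le\Wtr(\mu,\nu)$.
\item[(b)] \emph{Local-to-distributional transfer:} $\Wtr(\mu_{t-1},\mu_tK_t)\le\psi_\gamma(\Delta_t+\varepsilon_{\rm cal})$.
\end{itemize}

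For (a) we use that the kernel is input-independent and latent-conditioned. We write $K_t(\rho,\cdot)=\Law(\Phi_{t,\xi}(\rho))$ with $\xi$ drawn independently of the input and each $\Phi_{t,\xi}$ CPTP. Take an optimal (or $\epsilon$-optimal) coupling $\pi$ of $(\mu,\nu)$, draw one shared latent $\xi$, and push both coordinates through the same $\Phi_{t,\xi}$. This is a coupling of $(\mu K_t,\nu K_t)$. Its cost is at most $\int\dtr(\rho,\sigma)\,\dd\pi$, because trace distance contracts under CPTP maps. This shared-latent coupling is exactly why the input-independence hypothesis is stated.

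For (b) we couple $\rho_{X,t-1}$ with $\widehat\rho_{t-1}\sim K_t(\rho_{X,t},\cdot)$ through the common label $X$. This gives
\begin{equation*}
\Wtr(\mu_{t-1},\mu_tK_t)\le\mathbb E\,\dtr(\rho_{X,t-1},\widehat\rho_{t-1}).
\end{equation*}
By \cref{thm:generalization}, with the stated probability, $L_{\gamma,t}(\widehat\theta)\le a:=\Delta_t+\varepsilon_{\rm cal}$. The remaining step is the scalar conversion from clipped log-fidelity to expected trace distance, and I expect this to be the main obstacle.

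I would argue it as follows. Let $F=\Fr$ and $G=\max\{\gamma,F\}$. Jensen gives $\mathbb E\,G\ge e^{-a/2}$. On the event $\{F<\gamma\}$ we have $G-F\le\gamma$, so $\mathbb E F\ge e^{-a/2}-\gamma\Pr(F<\gamma)$. Markov's inequality on $-2\log G$, which equals $2\log(1/\gamma)$ on that event, gives $\Pr(F<\gamma)\le a/(2\log(1/\gamma))$. Hence
\begin{equation*}
\mathbb E F\ge\pos{e^{-a/2}-\frac{\gamma a}{2\log(1/\gamma)}}.
\end{equation*}
Finally, apply the upper Fuchs--van de Graaf bound \cref{eq:fvdg-main} together with concavity of $u\mapsto\sqrt{1-u^2}$ on $[0,1]$. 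By Jensen, $\mathbb E\,\dtr\le\mathbb E\sqrt{1-F^2}\le\sqrt{1-(\mathbb E F)^2}$, and monotonicity of this function yields $\psi_\gamma(a)$. Since $\psi_\gamma$ is nondecreasing in $a$, the bound holds at $\Delta_t+\varepsilon_{\rm cal}$.

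For the terminal mismatch \cref{eq:terminal-mismatch-main} we take $\nu_T=\delta_{\id/d}$, so that $\Wtr(\mu_T,\nu_T)=\sum_xp_x\dtr(\rho_{x,T},\id/d)$. Since $\rho_{x,T}-\id/d=\lambda_T(\rho_x-\id/d)$, each term equals $\lambda_T\dtr(\rho_x,\id/d)$. This is at most $\lambda_T(1-1/d)$, because the trace distance of any state to the maximally mixed state is bounded by $1-1/d$: it is $\max_P\Tr P(\rho-\id/d)\le 1-\mathrm{rank}(P)/d$, which is maximized at rank one.
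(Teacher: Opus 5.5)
Your proposal is correct and follows the paper's proof essentially step for step. It uses the same triangle split, the label coupling with the clipped-fidelity lemma (Jensen plus Markov) and the Fuchs--van de Graaf/Jensen conversion to $\psi_\gamma$, shared-latent nonexpansiveness, and the unique Dirac coupling with homogeneity for the terminal term; the only cosmetic differences are that you push an optimal coupling of $(\mu,\nu)$ through a shared latent directly instead of bounding the kernel's Dobrushin coefficient and lifting it via Kantorovich--Rubinstein duality, and you bound $\dtr(\rho,\id/d)\le 1-1/d$ via the projector variational formula rather than convexity and pure-state extremality, both of which are valid.
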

The theorem propagates the calibrated local errors through the stochastic reverse chain in the same trace-Wasserstein geometry used for endpoint evaluation. It is a one-sided certificate, not a substitute for distribution matching: coverage remains an independent requirement by \cref{thm:coverage-separation-main}. Stronger contraction-weighted forms and finite-resource extensions are given in Appendices~\ref{app:endpoint}, \ref{app:finite-shot}, and~\ref{app:architecture}.

\section{Experiments}
\label{sec:experiments}

The experiments test the theory in the same order in which it was
developed. We first ask whether the information decrement $\Delta_t$
tracks the realized difficulty of individual reverse steps and whether
enforcing the corresponding budgets improves local recovery. We then
test the recoverability--coverage separation, evaluate the complete
generator at fixed model capacity, study the effect of increasing
reverse-model expressivity, and finally isolate the contribution of the
equal-information schedule on an independently seeded cohort.

To make the comparisons explicit, \cref{tab:experiment-variants-main}
summarizes the main variants and the role of each comparison. A--E use
the same compact depth-$8$ reverse architecture in the primary
controlled comparisons; R is used only for the
recoverability--coverage ablation, and F only for the independently
seeded constrained schedule comparison.
\begin{table}[htbp]
\centering
\caption{\textbf{Experimental variants.} ``Dist.'' denotes the
distribution-matching objective and ``constraints'' denotes the
timestep-wise cq-information recovery budgets.}
\label{tab:experiment-variants-main}
\begin{tabular}{llll}
\toprule
Variant & Forward grid & Reverse training rule & Main role\\
\midrule
A & linear retention & Dist. only & baseline for C\\
B & equal cq-MI & Dist. only & baseline for D\\
C & linear retention & Dist. + constraints & recovery-budget intervention\\
D & equal cq-MI & Dist. + constraints & complete base method\\
E & cosine retention & Dist. only & heterogeneous-schedule diagnostic\\
F & cosine retention & Dist. + constraints & constrained schedule comparison\\
R & equal cq-MI & local criterion only & coverage ablation\\
\bottomrule
\end{tabular}
\end{table}

The controlled two-qubit regime uses three entangled mixed-state
difficulty levels and ten matched seeds ($150$ runs). The primary
four-qubit TFIM study uses $100$ training and $100$ held-out states for
each of ten seeds; held-out states are used only for generation metrics.
A frozen broader-field study stress-tests the local calibration
mechanism. Full protocols are in
Appendix~\ref{app:experimental-protocol}.

\subsection{Calibration mechanism}
\label{sec:exp-twoq}

This subsection tests the two claims underlying the local calibration
mechanism. First, before imposing any recovery constraint, does
$\Delta_t$ order the reverse-step loss realized along the forward path?
Second, at a fixed forward grid, does explicitly enforcing
$L_{\gamma,t}\leq\Delta_t$ improve an independent local-recovery
diagnostic?

For the first question, we evaluate the rank association between
$\Delta_t$ and realized reverse-step loss across heterogeneous
unconstrained schedules. For the second, C--A and D--B are matched
interventions: each pair uses the same forward grid and distributional
objective, while only the latter member adds the cq-information recovery
constraints. The cross-scale numerical results are summarized in
\cref{tab:calibration-summary}, while
\cref{fig:cross-scale-mechanism} visualizes the corresponding mechanism:
panel (a) shows information-loss/reverse-loss alignment, panel (b) shows
the reduction in budget excess under constraint enforcement, and panel
(c) relates remaining excess to local and endpoint errors.
\begin{table}[htbp]
\centering
\caption{\textbf{Calibration mechanism across scales.} Controlled 2q reports matched improvement frequency; primary 4q reports reduction relative to the matched unconstrained mean.}
\label{tab:calibration-summary}
\begin{tabular}{lcc}
\toprule
Check & Controlled 2q & Primary 4q\\
\midrule
$\Delta_t$ vs. reverse loss & $\rho=.800$; 100\% positive & $\rho=.944$; 100\% positive\\
C vs. A: max local error & 83.3\% improve & 41.0\% reduction; 100\% improve\\
D vs. B: max local error & 100\% improve & 26.6\% reduction; 100\% improve\\
\bottomrule
\end{tabular}
\end{table}

\begin{figure}[htbp]
  \centering
  \includegraphics[width=.78\linewidth]{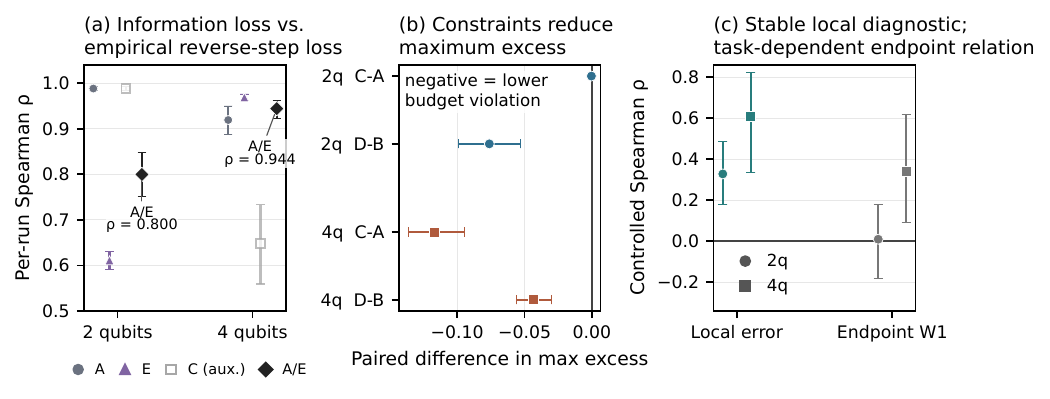}
  \caption{\textbf{Cross-scale calibration mechanism.} (a) Information-loss alignment on unconstrained A/E runs. (b) Budget enforcement reduces maximum excess. (c) Excess remains associated with local error, whereas its endpoint association is task-dependent.}
  \label{fig:cross-scale-mechanism}
\end{figure}

As summarized in \cref{tab:calibration-summary} and illustrated in
\cref{fig:cross-scale-mechanism}, the results consistently validate both
parts of the local calibration mechanism across scales: larger
information decrements are associated with larger realized reverse
losses, while enforcing the information-calibrated budgets
substantially reduces maximum local trace error on matched forward
grids. The controlled interventions are particularly strong on the
primary four-qubit task, where C and D reduce maximum local error by
$41.0\%$ and $26.6\%$, respectively, with improvement on all matched
seeds.

The correlation result is interpreted specifically as evidence for the
predicted ordering along the realized forward paths. Because the
eight-step schedules are largely rank-collinear with simple monotone
noise coordinates, we do not use it as an independent-identifiability
claim beyond timestep or raw noise strength; Appendix~\ref{app:proxy-audit}
gives the corresponding audit.

Independent numerical checks further reinforce the calibration
mechanism. The exact recovery SDP is solved on all $72$ controlled
two-qubit timestep instances, and every numerical optimum lies inside
the continuity-based component of the information-theoretic recovery
bracket within tolerance $7\times10^{-5}$. A frozen broader-field shift
preserves the local calibration mechanism, while the finite-ensemble
endpoint-certificate audit reports no numerical violations across the
retained two-/four-qubit runs. These checks support the robustness of
the local recovery calibration across the tested regimes; details are
in Appendices~\ref{app:validation-diagnostics},
\ref{app:broader-field-results}, and~\ref{app:certificate-audit}.

\subsection{Recoverability--coverage ablation}
\label{sec:exp-coverage-ablation}

The previous subsection tests local calibration. We next test the
distinct claim from \cref{thm:coverage-separation-main}: matching local
recovery need not determine endpoint coverage. R and D use the same
equal-cq-MI grid, data, initialization, stochastic architecture,
optimization length, and endpoint evaluation. R optimizes only the
local criterion, whereas D combines distribution matching with the
cq-information constraints. B provides the complementary
distribution-only control.
\begin{table}[htbp]
\centering
\caption{\textbf{Recoverability--coverage ablation.} Mean $\pm$ s.e. over ten matched seeds; lower is better.}
\label{tab:coverage-ablation-main}
\begin{tabular}{lcc}
\toprule
Variant & Max local trace error & Endpoint $\Wtr$\\
\midrule
B: distribution-only & $.448\!\pm\!.019$ & $.702\!\pm\!.016$\\
R: local-only & $.329\!\pm\!.003$ & $.691\!\pm\!.023$\\
D: combined & $\mathbf{.329\!\pm\!.002}$ & $\mathbf{.622\!\pm\!.036}$\\
\bottomrule
\end{tabular}
\end{table}

As shown in \cref{tab:coverage-ablation-main}, R and D have essentially
identical maximum local trace error ($.328685$ versus $.328600$), yet D
lowers mean endpoint $\Wtr$ by 10.0\% and wins $8/10$ matched seeds.
Thus comparable local recovery does not determine the generated state
distribution. The same table shows the complementary role of the local
constraint: B, which uses distribution matching alone, has substantially
larger local error. The ablation therefore supports the two-part
learning objective of \cref{sec:learning}: recovery constraints and
distribution matching control different aspects of the reverse model.

\subsection{Base-capacity generation}
\label{sec:exp-fourq}

Having tested the two mechanisms separately, we next evaluate the
complete designs under a common model-capacity budget. All A--E
variants use the same compact depth-$8$ architecture, so differences in
Table~\ref{tab:fourq-summary} reflect schedule and training-rule choices
rather than reverse-model size.
\begin{table}[htbp]
\centering
\caption{\textbf{Primary four-qubit TFIM.} Mean $\pm$ s.e. over ten seeds. Diversity ratio is generated/target pairwise trace-distance diversity.}
\label{tab:fourq-summary}
\resizebox{\linewidth}{!}{\begin{tabular}{lrrrrrr}
\toprule
& Excess & Max local & Endpoint $\Wtr$ & HS-MMD$^2$ & Obs. err. & Div. ratio\\
\midrule
A & $.286\!\pm\!.011$ & $.536\!\pm\!.019$ & $.640\!\pm\!.026$ & $.435\!\pm\!.035$ & $.397\!\pm\!.016$ & $3.13\!\pm\!.15$\\
B & $.187\!\pm\!.005$ & $.448\!\pm\!.019$ & $.702\!\pm\!.016$ & $.528\!\pm\!.024$ & $.417\!\pm\!.010$ & $3.05\!\pm\!.11$\\
C & $.169\!\pm\!.002$ & $\mathbf{.316\!\pm\!.005}$ & $.671\!\pm\!.019$ & $.479\!\pm\!.027$ & $.423\!\pm\!.015$ & $2.30\!\pm\!.07$\\
D & $\mathbf{.143\!\pm\!.005}$ & $.329\!\pm\!.002$ & $\mathbf{.622\!\pm\!.036}$ & $\mathbf{.420\!\pm\!.046}$ & $\mathbf{.385\!\pm\!.022}$ & $2.39\!\pm\!.07$\\
E & $.202\!\pm\!.007$ & $.505\!\pm\!.009$ & $.721\!\pm\!.015$ & $.550\!\pm\!.023$ & $.443\!\pm\!.009$ & $2.83\!\pm\!.08$\\
\bottomrule
\end{tabular}}
\end{table}

The full comparison in \cref{tab:fourq-summary} makes the
local-versus-global distinction explicit. C achieves the smallest
maximum local trace error, whereas D attains the lowest mean endpoint
$\Wtr$ and the best mean HS-MMD$^2$ and observable error among A--E.
Thus the design with the strongest local inversion need not be the one
with the best endpoint coverage. Under the controlled depth-$8$
capacity, the combined equal-cq-MI design D provides the strongest
overall generator, while C remains the best purely local-recovery
result.

\subsection{Capacity extension and external QuDDPM benchmark}
\label{sec:exp-capacity-external}

The controlled depth-$8$ comparisons establish the mechanism under a
shared compact architecture. We next test how far the same
information-calibrated design improves when reverse-model expressivity
is increased while all other scientific choices are held fixed.
Keeping $T=8$, the equal-cq-MI grid and budgets, latent support,
objective, optimizer, data, and evaluation unchanged, we increase only
the shared Stinespring depth of D from $8$ to $128$. This controlled
capacity extension raises the model from $144$ to $2304$ trainable
rotations and reduces endpoint $\Wtr$ from $.622$ to $.424$ (31.8\%),
with improvement on all $10/10$ matched seeds and paired bootstrap
interval $[-.262,-.128]$. Local recovery improves in parallel:
maximum local trace error falls from $.329$ to $.287$, and maximum
population excess from $.143$ to $.092$. These gains identify
reverse-model expressivity as a major performance axis within the
fixed information-calibrated framework. We use the term
\emph{capacity extension} rather than infer a general scaling law from
two model sizes.

We then compare the capacity-extended model with the official QuDDPM
implementation on the identical ten TFIM datasets and held-out
evaluator; the headline endpoint comparison is summarized in
\cref{tab:capacity-external-main}. On endpoint trace-Wasserstein error,
the geometry directly matched to our local-to-global theory, our
depth-$128$ learner improves over official QuDDPM from $.498$ to $.424$
on all $10/10$ matched datasets, with paired interval
$[-.083,-.064]$, while using $2304$ rather than $4320$ trainable
parameters. This provides a strong external endpoint benchmark for the
complete information-calibrated learner. The comparison is
metric-specific: complementary HS-MMD$^2$, TFIM observable error, and
runtime favor QuDDPM and are reported with the full comparison profile
in Appendix~\ref{app:capacity-external}.
\begin{table}[htbp]
\centering
\caption{\textbf{Capacity extension and external endpoint benchmark.} Mean endpoint $\Wtr$ over the ten matched primary TFIM seeds; lower is better.}
\label{tab:capacity-external-main}
\begin{tabular}{lrrr}
\toprule
Method & Shared depth & Trainable params. & Endpoint $\Wtr$\\
\midrule
D (base) & $8$ & $144$ & $.622$\\
D (capacity-scaled) & $128$ & $2304$ & $\mathbf{.424}$\\
QuDDPM & -- & $4320$ & $.498$\\
\bottomrule
\end{tabular}
\end{table}

Taken together, the capacity comparison in
\cref{tab:capacity-external-main} and a matched $T=30$ control with the
compact shared model isolate reverse-model expressivity as the effective
axis in this experiment: increasing depth produces the large endpoint
gain, whereas finer discretization leaves endpoint error essentially
unchanged.

\subsection{Independent schedule comparison}
\label{sec:exp-schedule-confirm}

The preceding experiments establish the effects of recovery-budget
enforcement and reverse-model capacity. We finally isolate the
contribution of the forward schedule itself. On a fresh cohort of
dataset seeds $100,\ldots,109$, we hold the architecture, constrained
training rule, optimizer, and endpoint evaluation fixed and change only
the forward grid. The comparison uses C (linear retention), D (equal
cq-MI), and F (cosine retention), all with the same recovery constraints.

\begin{table}[htbp]
\centering
\caption{\textbf{Independently seeded matched constrained schedule comparison at $T=8$.} Endpoint $\Wtr$ is mean $\pm$ s.e. over dataset seeds $100,\ldots,109$. ``Wins'' counts matched seeds on which D has lower endpoint error; intervals report paired D-minus-baseline bootstrap intervals.}
\label{tab:schedule-confirm-main}
\begin{tabular}{lcccc}
\toprule
Schedule & Endpoint $\Wtr$ & D improvement & D wins & Paired interval \\
\midrule
C: linear & $.683\!\pm\!.024$ & $5.1\%$ & $7/10$ & $[-.075,.008]$\\
D: equal-cq-MI & $\mathbf{.648\!\pm\!.022}$ & -- & -- & --\\
F: cosine & $.739\!\pm\!.016$ & $\mathbf{12.3\%}$ & $9/10$ & $[-.129,-.043]$\\
\bottomrule
\end{tabular}
\end{table}

As shown in \cref{tab:schedule-confirm-main}, the independently seeded
comparison directly supports the scheduling principle itself rather
than the effect of model capacity or constraint enforcement. Equal-cq-MI
allocation gives a clear advantage over cosine scheduling under the
matched constrained protocol. Relative to linear retention scheduling,
the result is directionally consistent in favor of equal-cq-MI, although
the paired interval still crosses zero. The schedule experiment therefore
provides independent empirical support for allocating the forward path in
ensemble-information coordinates rather than by a standard heuristic
noise schedule.

\section{Discussion and conclusion}

We introduced an information-calibrated view of quantum diffusion in
which the forward process is organized by ensemble information loss
rather than raw channel strength. Along depolarization, the cq-information
decrement $\Delta_t$ provides a single operational coordinate with two
roles: equalizing it gives the unique minimax discretization of the
forward information loss, while universal recoverability turns the same
quantity into a physically attainable scale for local reverse recovery.
The recoverability--coverage separation then identifies the additional
requirement introduced by stochastic generation: matching the local
physical inverse does not by itself determine how probability mass is
allocated across the generated ensemble. This leads to a complete learner
in which information-calibrated recovery constraints and distribution
matching play complementary, non-substitutable roles.

The experiments validate this decomposition at the mechanism and
generation levels. Across controlled two- and four-qubit settings,
$\Delta_t$ consistently orders realized reverse-step difficulty and
budget enforcement improves independent local-recovery diagnostics. The
recoverability--coverage ablation shows that comparable local recovery
can coexist with materially different endpoint generation quality, while
the combined learner gives the strongest controlled generator at fixed
base capacity. Increasing reverse-model expressivity produces a further
substantial endpoint improvement, and the capacity-extended model attains
lower endpoint trace-Wasserstein error than the official QuDDPM
implementation while using fewer trainable parameters. Independently
seeded experiments further confirm the benefit of equal-information
allocation over cosine scheduling and show a favorable trend over linear
retention scheduling.

More broadly, our results suggest a different way to design future
quantum diffusion models. Forward corruption need not be treated only as
a convenient path from data to a simple terminal state; it can be viewed
as a sequence of inverse problems whose difficulty should be calibrated
by an operational information quantity. This perspective separates three
design decisions that are often entangled: how information is removed in
the forward process, what level of local recovery is physically
reasonable at each reverse step, and how stochastic outputs should be
matched at the distribution level. Treating these roles separately
provides a principled alternative to choosing a noise schedule and a
training objective independently. In particular, information-calibrated
coordinates can serve as a common interface between forward-process
design, reverse-model training, and mechanism-level evaluation.

This viewpoint opens several natural directions for quantum generative
modeling. The cq-information clock can be extended beyond depolarization
to other physically relevant quantum channels whenever the retained
ensemble information can be estimated and related to recoverability,
suggesting adaptive schedules tailored to dephasing, amplitude damping,
hardware noise, or learned forward processes. The same principle also
invites resource-aware implementations in which information budgets are
estimated from finite-copy measurements and used to allocate reverse
model capacity or experimental effort preferentially to harder steps.
At larger system sizes, scalable estimators such as structured
tomography or shadow-based methods could replace full state
reconstruction while preserving the information-calibration principle.
Finally, the separation between local recoverability and global coverage
suggests a broader modular design pattern for quantum generative models:
use operational quantum-information quantities to calibrate what each
local transformation should achieve, and use distributional objectives
only for the stochastic degrees of freedom that local recovery cannot
identify.

Taken together, the central message is that the forward process, the
reverse physical task, and the generative distribution need not be
designed as separate components. Classical--quantum information provides
a bridge between them: it turns forward corruption into an interpretable
diffusion clock, gives reverse recovery a theorem-determined scale, and
clarifies exactly where distribution-level learning must enter. We expect
this information-calibrated perspective to provide a useful foundation
for designing and analyzing future quantum diffusion and related quantum
generative models.

\clearpage
\paragraph{Reproducibility statement.}
Complete proofs and experimental protocols are in the appendix. The anonymized code package contains data generators, schedule construction, Stinespring models, primal--dual training, fixed configurations/seeds, saved run-level results, capacity-extension runs, the audited external QuDDPM evaluation, and scripts that regenerate the reported tables and figures. No held-out endpoint accuracy metric such as $\Wtr$, HS-MMD$^2$, or the TFIM observable is used for architecture or checkpoint selection; the disjoint architecture pilot uses a coarse diversity/non-collapse gate only to exclude degenerate candidates.

\section*{AI Use Statement}
Large language models were used as general-purpose research assistance for language polishing, literature discovery, exploring mathematical formulations and proof strategies, experimental design, and generating portions of the experimental code. All mathematical statements and proofs included in the paper were checked and accepted by the authors; references surfaced with AI assistance were verified against the original sources before citation. AI-assisted code was manually reviewed, tested, and checked against the declared experimental protocol before execution, and all reported numerical values were obtained from executed experiments rather than generated by a language model. The authors take full responsibility for the paper's claims, proofs, citations, code, and experimental results.

\clearpage
\bibliographystyle{plainnat}
\bibliography{refs_iclr}

\clearpage
\appendix

\section{CQ identities and the forward information clock}
\label{app:forward}
The main text uses two properties of the forward coordinate: the cq identity and invertibility of the depolarizing curve. This section proves both before any recovery argument is invoked.

\subsection{The cq mutual information equals the Holevo quantity}
For $\omega^{XQ}=\sum_xp_x\ketbra{x}\otimes\rho_x$, block diagonalization gives
\begin{equation}
  S(\omega^{XQ})=H(p)+\sum_xp_xS(\rho_x),
\end{equation}
where $H(p)=-\sum_xp_x\log p_x$ is the Shannon entropy. Moreover, $S(\omega^X)=H(p)$ and $S(\omega^Q)=S(\bar\rho)$. Substitution into the definition of quantum mutual information in the introduction proves \cref{eq:holevo-identity}.

\subsection{Proof of the forward-clock theorem}
Write $\rho_x(\lambda):=\cD_\lambda(\rho_x)$ and $\bar\rho(\lambda):=\sum_xp_x\rho_x(\lambda)$. Continuity follows from finite-dimensional entropy continuity. For $0<\lambda<1$, all $\rho_x(\lambda)$ and $\bar\rho(\lambda)$ are positive definite, so analyticity follows from analyticity of the matrix logarithm on the positive cone.

For $\lambda=\alpha\lambda_1+(1-\alpha)\lambda_2$, both arguments of each relative entropy in \cref{eq:chi-curve-main} are the corresponding convex combinations. Joint convexity of quantum relative entropy therefore yields
\begin{equation}
  \chi(\lambda)
  \leq\alpha\chi(\lambda_1)+(1-\alpha)\chi(\lambda_2).
\end{equation}
For $a\in[0,1]$,
\begin{equation}
  (\rho_x(a\lambda),\bar\rho(a\lambda))
  =a(\rho_x(\lambda),\bar\rho(\lambda))
  +(1-a)(\id/d,\id/d).
\end{equation}
Joint convexity therefore gives the contraction relation
\begin{equation}
  \chi(a\lambda)\leq a\chi(\lambda).
  \label{eq:linear-contraction-app}
\end{equation}

For $\lambda>0$, the channel $\cD_\lambda$ is injective on trace-one operators. A nontrivial ensemble therefore remains nontrivial, so $\chi(\lambda)>0$. If $0<\mu<\lambda$, apply \cref{eq:linear-contraction-app} with $a=\mu/\lambda$:
\begin{equation}
  \chi(\mu)\leq\frac\mu\lambda\chi(\lambda)<\chi(\lambda).
\end{equation}
Together with $\chi(0)=0$, this proves strict increase. The inverse grid consequently exists and is unique.

For any monotone grid with fixed endpoints,
\begin{equation}
  \sum_{t=1}^T\Delta_t=\Delta_{\rm tot},
\end{equation}
so $\max_t\Delta_t\geq \Delta_{\rm tot}/T$, with equality if and only if all decrements are equal. Strict monotonicity of $\chi$ then gives uniqueness of the grid. Jensen's inequality gives
\begin{equation}
  \sum_{t=1}^T\varphi(\Delta_t)
  \geq T\varphi(\Delta_{\rm tot}/T),
\end{equation}
with equality only at the equal vector when $\varphi$ is strictly convex.
\hfill$\square$

\section{Learner-facing and two-sided common-channel recovery}
\label{app:recovery}
We use the following notation for the recovery proofs. Write $\chi_t=\chi(\lambda_t)$, $\bar\rho_t=\sum_xp_x\rho_{x,t}$, and, whenever $\lambda_{t-1}>0$, $\cN_t=\cD_{\lambda_t/\lambda_{t-1}}$. Also write
\begin{align}
\cL_t^{\log}(\cR)&:=-2\sum_xp_x\log\Fr(\rho_{x,t-1},\cR(\rho_{x,t})),\\
\cL_t^{\rm cq}(\cR)&:=-2\log\Fr\!\left(\omega_{t-1}^{XQ},(\idmap_X\otimes\cR)(\omega_t^{XQ})\right),
\end{align}
where $\omega_t^{XQ}=\sum_xp_x\ketbra{x}\otimes\rho_{x,t}$ and
\begin{equation}
\Fr\!\left(\omega_{t-1}^{XQ},(\idmap_X\otimes\cR)(\omega_t^{XQ})\right)
=\sum_xp_x\Fr(\rho_{x,t-1},\cR(\rho_{x,t})).
\label{eq:cq-fidelity-block-main}
\end{equation}
Let $m=|\mathcal X|$. With $h_2(r)=-r\log r-(1-r)\log(1-r)$ and $0\log0:=0$, define
\begin{align}
g(r)&=(1+r)\log(1+r)-r\log r,\\
\mathfrak f_d(r)&=h_2(\min\{r,1-1/d\})+\min\{r,1-1/d\}\log(d-1),
\label{eq:continuity-functions-main}
\end{align}
and the same-prior Holevo continuity modulus
\begin{equation}
\Omega_{d,m}(r)=\min\!\left\{2\mathfrak f_d(r),\,r\log\min\{d,m\}+g(r),\,\log\min\{d,m\}\right\},
\label{eq:holevo-continuity-modulus-main}
\end{equation}
with generalized inverse $\Omega_{d,m}^{\leftarrow}(a)=\inf\{r\in[0,1]:\Omega_{d,m}(r)\ge a\}$. For two ensembles on the same index set, Shirokov's distance is
\begin{equation}
D_{\rm ens}(\{p_x,\rho_x\},\{q_x,\sigma_x\})
=\frac12\sum_x\|p_x\rho_x-q_x\sigma_x\|_1.
\label{eq:ensemble-distance-main}
\end{equation}

This appendix separates the ingredients of the local recovery theory. Ensemble-level Holevo-loss recoverability is established in prior work \citep{buscemi2016approximate}, and the universal recovery theorem supplies a map depending only on the reference state and forward channel \citep{junge2018universal}. We first show that this same map controls the stricter average-log risk used by our learner. We then combine data processing with established Holevo-continuity bounds and, separately, derive the pairwise-geometric converse used in the strengthened lower bracket. A decision-value witness from the statistical-comparison viewpoint also yields an exact symmetric orthogonal benchmark.

\subsection{Proof of the learner-facing common-channel budget}
The universal recovery theorem states that, for states $\rho,\sigma$ and a quantum channel $\cN$, there is a recovery channel $\cR_{\sigma,\cN}^{\mathrm U}$ depending only on $(\sigma,\cN)$ such that
\begin{equation}
  D(\rho\Vert\sigma)-D(\cN(\rho)\Vert\cN(\sigma))
  \geq-2\log\Fr\bigl(\rho,(\cR_{\sigma,\cN}^{\mathrm U}\circ\cN)(\rho)\bigr),
  \label{eq:universal-recovery}
\end{equation}
and $(\cR_{\sigma,\cN}^{\mathrm U}\circ\cN)(\sigma)=\sigma$ \citep{junge2018universal}.

Because $p_x>0$ and $\bar\rho_{t-1}=\sum_xp_x\rho_{x,t-1}$, one has $\supp\rho_{x,t-1}\subseteq\supp\bar\rho_{t-1}$ for every $x$. Apply \cref{eq:universal-recovery} to every pair $(\rho_{x,t-1},\bar\rho_{t-1})$ with the same channel $\cN_t$. Universality means that the same recovery channel, depending only on $(\bar\rho_{t-1},\cN_t)$, works for all $x$. Moreover,
\begin{align}
  \Delta_t
  =\sum_xp_x\Bigl[
    D(\rho_{x,t-1}\Vert\bar\rho_{t-1})
    -D(\rho_{x,t}\Vert\bar\rho_t)
  \Bigr].
  \label{eq:delta-relative-entropy-expansion}
\end{align}
Let
\begin{equation}
  F_x:=\Fr\bigl(\rho_{x,t-1},\cR_t^{\mathrm U}(\rho_{x,t})\bigr).
\end{equation}
Summing with weights $p_x$ gives
\begin{equation}
  \Delta_t\geq-2\sum_xp_x\log F_x=\cL_t^{\log}(\cR_t^{\mathrm U}).
\end{equation}
Since $-\log$ is convex,
\begin{equation}
  -2\log\sum_xp_xF_x
  \leq-2\sum_xp_x\log F_x.
\end{equation}
Combining this inequality with \cref{eq:log-budget-main} and exponentiating gives the appendix-local consequence
\begin{equation}
  \sum_xp_xF_x\geq e^{-\Delta_t/2}.
  \label{eq:average-fidelity-app}
\end{equation}
By \cref{eq:cq-fidelity-block-main}, the same weighted sum is the root fidelity of the corresponding cq block states.
\hfill$\square$

\subsection{Proof of the two-sided intrinsic recoverability theorem}
Fix a timestep and an arbitrary label-independent CPTP recovery channel $\cR$. Write
\begin{equation}
 \sigma_x:=\cR(\rho_{x,t}),\qquad
 \cE_{\cR}:=\{p_x,\sigma_x\},\qquad
 \varepsilon(\cR):=\sum_xp_x\dtr(\rho_{x,t-1},\sigma_x).
 \label{eq:recovered-ensemble-app}
\end{equation}
Because $\cE_t$ is mapped to $\cE_{\cR}$ by $\cR$, data processing for the Holevo quantity gives $\chi(\cE_{\cR})\leq\chi_t$. Hence
\begin{equation}
 \Delta_t=\chi_{t-1}-\chi_t
 \leq \chi_{t-1}-\chi(\cE_{\cR}).
 \label{eq:converse-dpi-app}
\end{equation}
The two ensembles on the right have the same probabilities. Their Shirokov ensemble distance $D_{\rm ens}$ from \cref{eq:ensemble-distance-main} is exactly $\varepsilon(\cR)$. Proposition~5, Eqs.~(60)--(61), of \citet{shirokov2017continuity} gives separate bounds with coefficients $\log d$ and $\log m$; the same proposition states that when the priors coincide, the generic $2g(D_{\rm ens})$ remainder can be replaced by $g(D_{\rm ens})$. Hence
\begin{equation}
 \left|\chi_{t-1}-\chi(\cE_{\cR})\right|
 \leq \varepsilon(\cR)\log\min\{d,m\}+g(\varepsilon(\cR)).
 \label{eq:shirokov-holevo-app}
\end{equation}
An independent entropy-continuity route yields the other branch in \cref{eq:holevo-continuity-modulus-main}. Let $\bar\sigma=\sum_xp_x\sigma_x$ and $\varepsilon_x=\dtr(\rho_{x,t-1},\sigma_x)$. Convexity of trace distance gives $\dtr(\bar\rho_{t-1},\bar\sigma)\leq\varepsilon(\cR)$. Audenaert continuity and concavity of the monotone modulus $\mathfrak f_d$ then give
\begin{align}
 \left|\chi_{t-1}-\chi(\cE_{\cR})\right|
 &\leq \mathfrak f_d(\varepsilon(\cR))
 +\sum_xp_x\mathfrak f_d(\varepsilon_x)\\
 &\leq 2\mathfrak f_d(\varepsilon(\cR)).
 \label{eq:audenaert-holevo-app}
\end{align}
Finally, every Holevo quantity lies in $[0,\log\min\{d,m\}]$. Combining these three valid bounds with \cref{eq:converse-dpi-app} proves, for every $\cR$,
\begin{equation}
 \Delta_t\leq\Omega_{d,m}(\varepsilon(\cR)).
 \label{eq:converse-modulus-app}
\end{equation}
In finite dimension the set of CPTP maps is compact in the Choi representation, and the objective in \cref{eq:intrinsic-recovery-error-main} is continuous, so the infimum is attained by some $\cR_t^\star$. Applying \cref{eq:converse-modulus-app} to $\cR_t^\star$ gives the main-text converse \cref{eq:information-recovery-converse-main}.

For the upper bound, use the universal channel $\cR_t^{\mathrm U}$ from \cref{thm:recovery-budget}. With $F_x:=\Fr(\rho_{x,t-1},\cR_t^{\mathrm U}(\rho_{x,t}))$, \cref{eq:average-fidelity-app}, the upper Fuchs--van de Graaf inequality in \cref{eq:fvdg-main}, and concavity of $f\mapsto\sqrt{1-f^2}$ give
\begin{align}
 \sum_xp_x\dtr(\rho_{x,t-1},\cR_t^{\mathrm U}(\rho_{x,t}))
 &\leq \sum_xp_x\sqrt{1-F_x^2}\\
 &\leq \sqrt{1-\left(\sum_xp_xF_x\right)^2}\\
 &\leq \sqrt{1-e^{-\Delta_t}}.
\end{align}
Taking the infimum over common channels proves \cref{eq:information-recovery-upper-main}.

The equivalence in \cref{eq:information-recovery-equivalence-main} now follows directly. If $\Delta_t\to0$, the upper bound gives $\varepsilon_t^\star\to0$. Conversely, \cref{eq:information-recovery-converse-main} and continuity of $\Omega_{d,m}$ at the origin, with $\Omega_{d,m}(0)=0$, imply that $\varepsilon_t^\star\to0$ forces $\Delta_t\to0$.
\hfill$\square$

\subsection{Proof of the pairwise-geometric converse}
\label{app:pairwise-proof}
For each pair $x<y$, define the distinguishability erased by the $t$th forward step,
\begin{equation}
\delta_{xy}^{(t)}
:=
\dtr(\rho_{x,t-1},\rho_{y,t-1})
-
\dtr(\rho_{x,t},\rho_{y,t})
\geq0,
\label{eq:pairwise-loss-app}
\end{equation}
where nonnegativity follows from contractivity of trace distance under the forward CPTP map. Define the geometry-only relaxation
\begin{equation}
\varepsilon_t^{\rm geo}
:=
\min_{r_x\geq0}
\left\{
\sum_xp_xr_x:
\ r_x+r_y\geq\delta_{xy}^{(t)}\ \forall x<y
\right\}.
\label{eq:pairwise-lp-app}
\end{equation}

Fix a label-independent CPTP recovery channel $\cR$ and set
\begin{equation}
e_x:=\dtr\bigl(\rho_{x,t-1},\cR(\rho_{x,t})\bigr).
\end{equation}
For any pair $x<y$, the triangle inequality gives
\begin{align}
\dtr(\rho_{x,t-1},\rho_{y,t-1})
&\leq e_x+
\dtr\bigl(\cR(\rho_{x,t}),\cR(\rho_{y,t})\bigr)+e_y\\
&\leq e_x+\dtr(\rho_{x,t},\rho_{y,t})+e_y,
\end{align}
where the second line again uses contractivity under CPTP maps. Hence
\begin{equation}
e_x+e_y\geq\delta_{xy}^{(t)}
\end{equation}
for every pair. The channel-induced vector $(e_x)_x$ is therefore feasible for \cref{eq:pairwise-lp-app}, so
\begin{equation}
\sum_xp_xe_x\geq\varepsilon_t^{\rm geo}.
\end{equation}
Taking the infimum over $\cR$ yields the channel-independent converse
\begin{equation}
\varepsilon_t^\star\geq\varepsilon_t^{\rm geo}.
\label{eq:pairwise-converse-app}
\end{equation}

For the depolarizing path,
\begin{equation}
\rho_{x,t}-\rho_{y,t}
=\lambda_t(\rho_x-\rho_y),
\end{equation}
so homogeneity of the trace norm yields
\begin{equation}
\dtr(\rho_{x,t},\rho_{y,t})
=\lambda_t\dtr(\rho_x,\rho_y).
\end{equation}
Therefore
\begin{equation}
\delta_{xy}^{(t)}
=(\lambda_{t-1}-\lambda_t)\dtr(\rho_x,\rho_y).
\end{equation}
Define the ensemble-geometric LP
\begin{equation}
\mathfrak G(\cE):=\min_{r_x\geq0}\left\{\sum_xp_xr_x:\ r_x+r_y\geq\dtr(\rho_x,\rho_y)\ \forall x<y\right\}.
\label{eq:ensemble-geometry-main}
\end{equation}
Positive homogeneity gives
\begin{equation}
\varepsilon_t^{\rm geo}=(\lambda_{t-1}-\lambda_t)\mathfrak G(\cE).
\label{eq:pairwise-factor-main}
\end{equation}
Finally, writing $d_{xy}=\dtr(\rho_x,\rho_y)$, standard LP duality gives
\begin{equation}
\mathfrak G(\cE)=
\max_{\alpha_{xy}\geq0}\left\{\sum_{x<y}\alpha_{xy}d_{xy}:\ \sum_{y\neq x}\alpha_{xy}\leq p_x\ \forall x\right\},
\label{eq:ensemble-geometry-dual-main}
\end{equation}
with $\alpha_{xy}=\alpha_{yx}$ understood on unordered pairs. Feasibility and boundedness are immediate, so strong duality applies.

For later use, monotonicity of $\Omega_{d,m}$ together with \cref{eq:converse-modulus-app} gives
$\varepsilon_t^\star\geq\Omega_{d,m}^{\leftarrow}(\Delta_t)$. Combining this with \cref{eq:pairwise-converse-app}, define
\begin{equation}
b_t:=\max\{\Omega_{d,m}^{\leftarrow}(\Delta_t),\varepsilon_t^{\rm geo}\},
\qquad
\varepsilon_t^\star\geq b_t.
\label{eq:recovery-lower-bracket-app}
\end{equation}
This appendix-only strengthening also yields a risk-level refinement. For an arbitrary $\cR$, let $d_x=\dtr(\rho_{x,t-1},\cR(\rho_{x,t}))$ and $F_x=\Fr(\rho_{x,t-1},\cR(\rho_{x,t}))$. The upper Fuchs--van de Graaf inequality and concavity of $r\mapsto\sqrt{1-r^2}$ give
\begin{equation}
\sum_xp_xF_x
\leq\sqrt{1-\left(\sum_xp_xd_x\right)^2}
\leq\sqrt{1-b_t^2}.
\end{equation}
Hence
\begin{equation}
-\log(1-b_t^2)
\leq\inf_{\cR}\cL_t^{\rm cq}(\cR)
\leq\inf_{\cR}\cL_t^{\log}(\cR)
\leq\Delta_t.
\label{eq:risk-bracket-app}
\end{equation}
\hfill$\square$

\subsection{Decision-value witness and the exact orthogonal benchmark}
\label{app:decision-witness}
\begin{corollary}[Exact orthogonal depolarizing benchmark]
\label{cor:orthogonal-exact-main}
For the uniform complete orthogonal ensemble
\[
\cE_\lambda^{(d)}=\left\{\frac1d,\rho_x^\lambda=\lambda\ketbra{x}+(1-\lambda)\frac{\id}{d}\right\}_{x=1}^d,
\]
the optimal common-channel average trace error for recovering retention $a$ from retention $b<a$ is
\begin{equation}
\varepsilon^\star(a\leftarrow b)=(a-b)\left(1-\frac1d\right).
\label{eq:orthogonal-exact-main}
\end{equation}
\end{corollary}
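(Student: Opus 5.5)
The plan is to prove matching lower and upper bounds on $\varepsilon^\star(a\leftarrow b)$.

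\textbf{Lower bound.} Use the pairwise-geometric converse \cref{eq:pairwise-converse-app} together with the factorization \cref{eq:pairwise-factor-main}. Here the path runs from retention $a$ to retention $b$, so $\lambda_{t-1}-\lambda_t$ is replaced by $a-b$. For distinct orthogonal pure states, $d_{xy}=\dtr(\ketbra{x},\ketbra{y})=1$ for every pair, so it remains to compute $\mathfrak G(\cE)$ for the uniform prior $p_x=1/d$ and all pairwise distances equal to one.
\begin{itemize}[leftmargin=*]
\item \emph{Primal.} Take $r_x=1/2$ for all $x$. This is feasible and gives value $1/2$. Note that $1/2$ is not the LP optimum: we may instead set $r_{x_0}=0$ and $r_x=1$ for $x\neq x_0$. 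This is feasible, since every pair contains at least one index other than $x_0$, and gives value $1-1/d$. That is larger than $1/2$ for $d\ge3$, so it is not better.
\item \emph{Dual.} Use \cref{eq:ensemble-geometry-dual-main} with uniform weights $\alpha_{xy}=\frac{1}{d(d-1)}$. The row constraint is $\sum_{y\neq x}\alpha_{xy}=1/d\le p_x$, and the objective is $\binom d2\frac1{d(d-1)}=1/2$. So $\mathfrak G=1/2$.
\end{itemize}
This LP alone yields only $\varepsilon^\star\ge (a-b)/2$, which falls short of the target $(a-b)(1-1/d)$ for $d\ge3$.

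Closing that gap needs a sharper, decision-theoretic witness, which I take to be the ``decision-value witness'' announced at the start of this subsection.
\begin{itemize}[leftmargin=*]
\item Consider the measurement in the $\{\ket{x}\}$ basis. For each $x$, $\dtr(\rho_x^a,\sigma_x)\ge \Tr[P_x\rho_x^a]-\Tr[P_x\sigma_x]$ with $P_x=\ketbra{x}$.
\item Averaging over $x$ gives
\[
\varepsilon\ge \tfrac1d\sum_x\Tr[P_x\rho_x^a]-\tfrac1d\sum_x\Tr[P_x\cR(\rho_x^b)].
\]
\item The first term is $a+(1-a)/d$.
\item The second term is the average success probability of identifying $x$ from $\rho_x^b$ using the POVM $\{\cR^\dagger(P_x)\}$. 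By the optimal discrimination value for the symmetric orthogonal depolarized ensemble, this is at most $b+(1-b)/d$; it is attained by measuring in the basis, and optimality follows by writing it as $\frac1d\sum_x\Tr[M_x(b\ketbra{x}+(1-b)\id/d)]\le \frac1d(b\,d+(1-b))$.
\item Subtracting gives $\varepsilon\ge (a-b)(1-1/d)$ for every common $\cR$.
\end{itemize}

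\textbf{Achievability.} Use the depolarizing channel $\cR=\cD_{\lambda}$ composed appropriately. Since $b<a$, we cannot re-purify unitarily. Instead, use the measure-and-prepare channel that measures in the basis and, on outcome $y$, prepares $\rho_y^{a}$. Then
\[
\cR(\rho_x^b)=\bigl(b+\tfrac{1-b}{d}\bigr)\rho_x^a+\tfrac{1-b}{d}\sum_{y\neq x}\rho_y^a .
\]
Both this state and $\rho_x^a$ are diagonal, so the trace distance is a classical total-variation distance. The diagonal entry at $x$ and the off-diagonal entries can be computed explicitly. The step I expect to be the main obstacle is verifying that this value equals exactly $(a-b)(1-1/d)$; it may fall short. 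If it does, I would try alternatives instead: the identity channel, whose error is also $(a-b)(1-1/d)$ because $\rho_x^a-\rho_x^b=(a-b)(\ketbra{x}-\id/d)$ has trace norm $2(a-b)(1-1/d)$; or a classical stochastic-matrix channel on the basis. The lower bound is the tight part, and the identity channel already achieves the matching value.
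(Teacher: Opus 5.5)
Your final argument is the paper's: the lower bound is the decision-value (guessing-probability) witness with the basis test $P_x$, bounding the success probability of the POVM $\{\cR^\dagger(P_x)\}$ by $P_{\rm guess}=b+(1-b)/d$, and the upper bound is the identity channel. Your direct optimality bound via $\langle x|M_x|x\rangle\le1$ and $\sum_x\Tr M_x=d$ replaces the paper's dephasing argument. The pairwise-LP and measure-and-prepare detours are unnecessary, and you were right to fall back on the identity channel: measure-and-prepare outputs $\rho_x^{ab}$, with error $a(1-b)(1-1/d)$, which exceeds the target unless $a=1$ or $b=0$.
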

A simple operational witness follows from the randomization/deficiency viewpoint for quantum statistical experiments \citep{jencova2016comparison}. Let $\mathcal A$ be a finite action set and let $u(x,a)\in[0,1]$ be a bounded utility. Define the Bayes value
\begin{equation}
V_u(\cE):=\max_{\{M_a\}}
\sum_{x,a}p_xu(x,a)\Tr(M_a\rho_x),
\label{eq:bayes-value-app}
\end{equation}
where the maximum ranges over POVMs. Then every common recovery channel obeys
\begin{equation}
\sum_xp_x\dtr\bigl(\rho_{x,t-1},\cR(\rho_{x,t})\bigr)
\geq
\pos{V_u(\cE_{t-1})-V_u(\cE_t)}.
\label{eq:decision-witness-app}
\end{equation}
Indeed, let $\sigma_x=\cR(\rho_{x,t})$ and choose a POVM $\{M_a\}$ optimal for $\cE_{t-1}$. For each $x$, the operator $A_x:=\sum_a u(x,a)M_a$ satisfies $0\leq A_x\leq\id$, hence
\begin{equation}
\Tr[A_x(\rho_{x,t-1}-\sigma_x)]
\leq\dtr(\rho_{x,t-1},\sigma_x).
\end{equation}
Summing with $p_x$ gives
\begin{equation}
V_u(\cE_{t-1})-V_u(\{p_x,\sigma_x\})
\leq\sum_xp_x\dtr(\rho_{x,t-1},\sigma_x).
\end{equation}
Since $\{p_x,\sigma_x\}$ is a post-processing of $\cE_t$, its Bayes value is at most $V_u(\cE_t)$. This proves \cref{eq:decision-witness-app}. Choosing $u(x,a)=\mathbf 1\{a=x\}$ yields
\begin{equation}
\varepsilon_t^\star
\geq P_{\rm guess}(\cE_{t-1})-P_{\rm guess}(\cE_t).
\label{eq:guessing-witness-app}
\end{equation}

We now prove \cref{cor:orthogonal-exact-main}. For the uniform complete orthogonal ensemble $\cE_\lambda^{(d)}$, all states commute in the computational basis. Dephasing any POVM in this common eigenbasis leaves its outcome probabilities unchanged, reducing minimum-error discrimination to the corresponding classical problem; the maximum-likelihood basis decision is therefore optimal. Hence
\begin{equation}
P_{\rm guess}(\cE_\lambda^{(d)})
=\lambda+\frac{1-\lambda}{d}.
\end{equation}
Therefore \cref{eq:guessing-witness-app} gives
\begin{equation}
\varepsilon^\star(a\leftarrow b)
\geq(a-b)\left(1-\frac1d\right).
\end{equation}
For the matching upper bound, use the identity channel. Since
\begin{equation}
\rho_x^a-\rho_x^b
=(a-b)\left(\ketbra{x}-\frac{\id}{d}\right),
\end{equation}
and $\ketbra{x}-\id/d$ has one eigenvalue $1-1/d$ and $d-1$ eigenvalues $-1/d$,
\begin{equation}
\dtr(\rho_x^a,\rho_x^b)
=(a-b)\left(1-\frac1d\right).
\end{equation}
Averaging over $x$ matches the lower bound and proves the corollary.
\hfill$\square$

\begin{corollary}[Stochastic-branch converse]
\label{cor:stochastic-converse}
If $Z\sim\pi_t$ is input-independent and every $\cR_{t,z}$ is label-independent CPTP, then
\begin{equation}
\mathbb E_{X,Z}\dtr(\rho_{X,t-1},\cR_{t,Z}(\rho_{X,t}))
\ge\varepsilon_t^\star
\ge\max\{\Omega_{d,m}^{\leftarrow}(\Delta_t),\varepsilon_t^{\rm geo}\}.
\label{eq:stochastic-converse-main}
\end{equation}
\end{corollary}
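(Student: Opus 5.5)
The plan is to reduce the stochastic statement to the deterministic converse by conditioning on the latent variable $Z$. Because $Z\sim\pi_t$ is drawn independently of the label $X$ and of the input state, each realized branch $\cR_{t,z}$ is a single label-independent CPTP map applied to the whole timestep-$t$ ensemble. Concretely, by independence and Fubini (all quantities are bounded and nonnegative, so no integrability issues arise),
\begin{equation*}
\mathbb E_{X,Z}\,\dtr\bigl(\rho_{X,t-1},\cR_{t,Z}(\rho_{X,t})\bigr)
=\int \sum_x p_x\,\dtr\bigl(\rho_{x,t-1},\cR_{t,z}(\rho_{x,t})\bigr)\,\pi_t(\dd z).
\end{equation*}
For each fixed $z$ the inner sum is exactly the objective of \cref{eq:intrinsic-recovery-error-main} evaluated at $\cR=\cR_{t,z}$. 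It is therefore at least $\varepsilon_t^\star$. Integrating against the probability measure $\pi_t$ preserves this pointwise lower bound, which gives the first inequality.

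The second inequality is already contained in the proof of the pairwise-geometric converse. There, \cref{eq:recovery-lower-bracket-app} establishes $\varepsilon_t^\star\ge b_t=\max\{\Omega_{d,m}^{\leftarrow}(\Delta_t),\varepsilon_t^{\rm geo}\}$. Its two ingredients are the following.
\begin{enumerate}[label=(\roman*),leftmargin=*]
\item The LP converse \cref{eq:pairwise-converse-app}.
\item The information converse $\Delta_t\le\Omega_{d,m}(\varepsilon_t^\star)$ from \cref{thm:two-sided-recoverability}, combined with the definition of the generalized inverse. If $\Omega_{d,m}(\varepsilon_t^\star)\ge\Delta_t$, then $\varepsilon_t^\star$ belongs to the set $\{r:\Omega_{d,m}(r)\ge\Delta_t\}$, so $\varepsilon_t^\star\ge\Omega_{d,m}^{\leftarrow}(\Delta_t)$.
\end{enumerate}
The minimizer $\cR_t^\star$ exists by the compactness argument already given, and $\varepsilon_t^\star\in[0,1]$ lies in the domain over which the infimum defining $\Omega_{d,m}^{\leftarrow}$ is taken. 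Chaining the two inequalities completes the corollary.

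The only point needing care is measurability in the first step: the map $z\mapsto\cR_{t,z}$ must be measurable so that the expectation is well defined. For the finite latent supports used by the learner this is trivial, since the integral is just a finite sum. In general one takes measurability as part of what "stochastic CPTP kernel" means. Apart from this, the argument rests on one observation, that input-independence of $Z$ is exactly what prevents the mixture from depending on the label, and everything else is inherited from the earlier results.
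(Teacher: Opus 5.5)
Your proposal is correct and matches the paper's proof: each fixed branch $\cR_{t,z}$ is feasible for the infimum defining $\varepsilon_t^\star$, so averaging over the input-independent law $\pi_t$ gives the first inequality, and the second inequality is the appendix bracket $\varepsilon_t^\star\ge b_t$. Your remarks on independence and Fubini, on measurability, and on the generalized-inverse step are sound elaborations of details the paper leaves implicit.
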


\subsection{Proof of the stochastic-branch converse}
For every fixed latent value $z$, the branch $\cR_{t,z}$ is itself a label-independent CPTP channel and is therefore feasible in the infimum defining $\varepsilon_t^\star$. Hence
\[
 \sum_xp_x\dtr\!\left(\rho_{x,t-1},\cR_{t,z}(\rho_{x,t})\right)\geq\varepsilon_t^\star
\]
for every $z$. Averaging with respect to the input-independent latent law $\pi_t$ proves the first inequality in \cref{eq:stochastic-converse-main}; the second is the strengthened appendix bound \cref{eq:recovery-lower-bracket-app}.
\hfill$\square$

\begin{corollary}[Blockwise common-channel recovery]
\label{cor:blockwise-recovery-main}
For $0\le s<t\le T$ with $\lambda_s>0$, let $\Delta_{s:t}=\chi_s-\chi_t$ and define $\varepsilon_{s:t}^\star$ analogously. Then
\begin{equation}
\max\!\left\{\Omega_{d,m}^{\leftarrow}(\Delta_{s:t}),\,(\lambda_s-\lambda_t)\mathfrak G(\cE)\right\}
\le\varepsilon_{s:t}^\star
\le\sqrt{1-e^{-\Delta_{s:t}}}.
\label{eq:block-sandwich-main}
\end{equation}
Moreover, a common channel $\cR_{s\leftarrow t}$ satisfies
\begin{equation}
\sum_xp_x\Fr(\rho_{x,s},\cR_{s\leftarrow t}(\rho_{x,t}))\ge e^{-\Delta_{s:t}/2},
\quad
\sum_xp_x\dtr(\rho_{x,s},\cR_{s\leftarrow t}(\rho_{x,t}))\le\sqrt{1-e^{-\Delta_{s:t}}}.
\label{eq:block-fidelity-main}
\end{equation}
\end{corollary}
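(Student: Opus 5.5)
The plan is to run the single-step arguments of \cref{thm:recovery-budget,thm:two-sided-recoverability} and the pairwise converse verbatim, replacing the one-step forward channel $\cN_t$ by the block channel $\cN_{s:t}:=\cD_{\lambda_t/\lambda_s}$. This replacement is legitimate because $\lambda_s>0$ and $\cD_a\circ\cD_b=\cD_{ab}$, so $\rho_{x,t}=\cN_{s:t}(\rho_{x,s})$ for every $x$. Nothing in the earlier proofs used that $s$ and $t$ were adjacent; they only used that the later ensemble is the image of the earlier one under a known CPTP map. Concretely, I will define $\varepsilon_{s:t}^\star:=\inf_{\cR}\sum_xp_x\dtr(\rho_{x,s},\cR(\rho_{x,t}))$ and proceed in three steps.

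\textbf{Step 1 (achievability, \cref{eq:block-fidelity-main}).} Apply the universal recovery inequality \cref{eq:universal-recovery} to each pair $(\rho_{x,s},\bar\rho_s)$ with channel $\cN_{s:t}$. This gives one map $\cR_{s\leftarrow t}:=\cR^{\mathrm U}_{\bar\rho_s,\cN_{s:t}}$, the same for every $x$. Since $\cN_{s:t}(\bar\rho_s)=\bar\rho_t$, the relative-entropy expansion \cref{eq:delta-relative-entropy-expansion} holds with $(s,t)$ in place of $(t-1,t)$, and yields
\[
\Delta_{s:t}\ \ge\ -2\sum_xp_x\log\Fr\bigl(\rho_{x,s},\cR_{s\leftarrow t}(\rho_{x,t})\bigr).
\]
Jensen's inequality for $-\log$ then gives the fidelity bound $\sum_x p_x F_x\ge e^{-\Delta_{s:t}/2}$. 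Next, the upper Fuchs--van de Graaf inequality together with concavity of $f\mapsto\sqrt{1-f^2}$ gives the trace bound $\sqrt{1-e^{-\Delta_{s:t}}}$, exactly as in the proof of \cref{thm:two-sided-recoverability}. Taking the infimum over channels gives the upper half of \cref{eq:block-sandwich-main}.

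\textbf{Step 2 (information converse).} For an arbitrary common $\cR$, data processing gives $\chi(\{p_x,\cR(\rho_{x,t})\})\le\chi_t$, so $\Delta_{s:t}\le\chi_s-\chi(\cE_\cR)$. The two ensembles share their prior, and their Shirokov distance equals the average trace error. The three branches of $\Omega_{d,m}$ therefore apply unchanged, giving $\Delta_{s:t}\le\Omega_{d,m}(\varepsilon(\cR))$. Compactness of CPTP maps makes the infimum attained. Since $\Omega_{d,m}$ is monotone, the definition of the generalized inverse gives $\varepsilon^\star_{s:t}\ge\Omega^{\leftarrow}_{d,m}(\Delta_{s:t})$. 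This inversion is the only point that needs care: if $\Omega_{d,m}(r)<\Delta_{s:t}$, then $r$ cannot be the optimal error. Monotonicity of the minimum of three nondecreasing branches, and its continuity, should settle it exactly as in \cref{eq:recovery-lower-bracket-app}.

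\textbf{Step 3 (geometric converse).} The triangle inequality and contractivity give $e_x+e_y\ge\dtr(\rho_{x,s},\rho_{y,s})-\dtr(\rho_{x,t},\rho_{y,t})$. For depolarization the right-hand side equals $(\lambda_s-\lambda_t)\dtr(\rho_x,\rho_y)$ by homogeneity of the trace norm. The error vector is therefore feasible for the LP $\mathfrak G$ scaled by $\lambda_s-\lambda_t$, and positive homogeneity of the LP yields $\varepsilon^\star_{s:t}\ge(\lambda_s-\lambda_t)\mathfrak G(\cE)$. Taking the maximum of Steps 2 and 3 completes the lower half of the sandwich.

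I expect no step to be a real obstacle. The only genuinely new point is verifying the semigroup identity $\cD_{\lambda_t/\lambda_s}\circ\cD_{\lambda_s}=\cD_{\lambda_t}$. This identity is what lets a single known channel, depending only on $\bar\rho_s$ and the block, serve as the forward map, which the universal recovery theorem requires.
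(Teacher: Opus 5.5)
Your proposal is correct and follows the paper's proof essentially step for step. Both use the semigroup identity $\rho_{x,t}=\cD_{\lambda_t/\lambda_s}(\rho_{x,s})$, universal recovery with reference $\bar\rho_s$ and the cumulative channel for the fidelity and trace upper bounds, the same-prior Holevo-continuity converse, and the pairwise LP scaled by $\lambda_s-\lambda_t$. Two small remarks. You leave implicit a point the paper states: $p_x>0$ gives $\supp\rho_{x,s}\subseteq\supp\bar\rho_s$, so the universal recovery inequality applies with finite relative entropies. Also, the generalized-inverse step you flag as needing care is immediate, since $\Delta_{s:t}\le\Omega_{d,m}(\varepsilon(\cR))$ already places every $\varepsilon(\cR)$ in the set defining $\Omega_{d,m}^{\leftarrow}(\Delta_{s:t})$; no monotonicity, continuity, or attainment of the infimum is required.
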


\subsection{Proof of the blockwise recovery corollary}
For $\lambda_s>0$, the depolarizing semigroup identity gives
\begin{equation}
  \rho_{x,t}=\cD_{\lambda_t/\lambda_s}(\rho_{x,s}),
  \qquad
  \bar\rho_t=\cD_{\lambda_t/\lambda_s}(\bar\rho_s).
\end{equation}
Apply the universal recovery theorem to the reference state $\bar\rho_s$ and the cumulative channel $\cD_{\lambda_t/\lambda_s}$. The support argument used in the one-step proof applies unchanged, so one channel works for all labels. The Holevo-continuity argument from \cref{thm:two-sided-recoverability} applies verbatim to this cumulative channel. The pairwise argument also applies to the block endpoints; depolarizing homogeneity gives pairwise distinguishability loss $(\lambda_s-\lambda_t)\dtr(\rho_x,\rho_y)$, and the same LP-scaling argument as in \cref{eq:pairwise-factor-main} contributes $(\lambda_s-\lambda_t)\mathfrak G(\cE)$. Together these give the lower side of \cref{eq:block-sandwich-main}. If
\begin{equation}
  \delta^{\rm rel}_{x;s:t}:=D(\rho_{x,s}\Vert\bar\rho_s)-D(\rho_{x,t}\Vert\bar\rho_t),
\end{equation}
then $\delta^{\rm rel}_{x;s:t}\geq0$, $\sum_xp_x\delta^{\rm rel}_{x;s:t}=\Delta_{s:t}$. Define the blockwise fidelity
\begin{equation}
  F_{x;s:t}:=\Fr\bigl(\rho_{x,s},\cR_{s\leftarrow t}(\rho_{x,t})\bigr).
\end{equation}
Universal recoverability gives $F_{x;s:t}\geq e^{-\delta^{\rm rel}_{x;s:t}/2}$. Averaging and applying Jensen to the convex function $r\mapsto e^{-r/2}$ proves \cref{eq:block-fidelity-main}. For the trace-distance statement, apply the upper Fuchs--van de Graaf inequality and then Jensen to the concave function $f\mapsto\sqrt{1-f^2}$:
\begin{align}
  \sum_xp_x\dtr\bigl(\rho_{x,s},\cR_{s\leftarrow t}(\rho_{x,t})\bigr)
  &\leq \sum_xp_x\sqrt{1-F_{x;s:t}^2}\\
  &\leq \sqrt{1-\left(\sum_xp_xF_{x;s:t}\right)^2}\\
  &\leq \sqrt{1-e^{-\Delta_{s:t}}}.
\end{align}
The same channel proves the upper side of \cref{eq:block-sandwich-main}, completing \cref{cor:blockwise-recovery-main}.
\hfill$\square$

\begin{corollary}[Clipped-loss attainability]
\label{cor:clipped-budget-attainable}
Viewing $\cR_t^{\mathrm U}$ as a degenerate stochastic kernel, for every $\gamma\in(0,1)$,
\begin{equation}
  L_{\gamma,t}(\cR_t^{\mathrm U})
  \leq\cL_t^{\log}(\cR_t^{\mathrm U})
  \leq\Delta_t.
\end{equation}
Thus the same decrement is an attainable budget for the clipped loss used in training.
\end{corollary}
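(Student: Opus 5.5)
The plan is a two-inequality sandwich: first compare the clipped loss pointwise with the unclipped log-fidelity loss of the same channel, then invoke the budget already proved for that channel.

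\textbf{Step 1 (degenerate kernel).} View $\cR_t^{\mathrm U}$ as a stochastic kernel whose latent variable has a single branch. Then $\widehat\rho_{t-1}=\cR_t^{\mathrm U}(\rho_{X,t})$ deterministically given $X$, and the expectation in \cref{eq:population-clipped-loss-main} reduces to the $p_x$-weighted sum over labels. So $L_{\gamma,t}(\cR_t^{\mathrm U})$ and $\cL_t^{\log}(\cR_t^{\mathrm U})$ are both sums over $x$ with the same weights, and the two can be compared term by term.

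\textbf{Step 2 (pointwise comparison).} For each $x$, write $F_x=\Fr(\rho_{x,t-1},\cR_t^{\mathrm U}(\rho_{x,t}))\in[0,1]$. Since $\max\{\gamma,F_x\}\ge F_x$ and $-2\log$ is decreasing,
\[
-2\log\max\{\gamma,F_x\}\le -2\log F_x .
\]
If some $F_x=0$, the right-hand side is $+\infty$ and the inequality holds trivially. Weighting by $p_x$ and summing gives $L_{\gamma,t}(\cR_t^{\mathrm U})\le\cL_t^{\log}(\cR_t^{\mathrm U})$.

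\textbf{Step 3 (budget).} The second inequality, $\cL_t^{\log}(\cR_t^{\mathrm U})\le\Delta_t$, is exactly \cref{thm:recovery-budget}, established in its proof via the universal recovery channel. That bound is finite, so all $F_x>0$ and the infinite case in Step 2 never actually arises.

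\textbf{Potential obstacles.} The only point needing care is that the universal channel fits the stochastic-kernel formalism of \cref{sec:learning}. It does: the kernel is input-independent in its latent variable, and its single branch is the CPTP map $\cR_t^{\mathrm U}$. Beyond this, every step is elementary.
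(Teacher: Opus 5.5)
Your proposal is correct and matches the paper's proof essentially verbatim: you apply the pointwise bound $-2\log\max\{\gamma,F\}\le -2\log F$ to the deterministic universal channel, average over $x$, and then invoke \cref{thm:recovery-budget}. Your extra remark that the finiteness of $\Delta_t$ forces every $F_x>0$ is correct, and it slightly tightens the paper's $+\infty$ convention.
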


\subsection{Proof of clipped-budget attainability}
For every fidelity value $F\in[0,1]$,
\begin{equation}
  -2\log\max\{\gamma,F\}\leq-2\log F,
\end{equation}
with the right-hand side interpreted as $+\infty$ at $F=0$. Apply this pointwise to the deterministic universal recovery channel and average over $x$. The second inequality is \cref{thm:recovery-budget}.
\hfill$\square$

\begin{proposition}[Diamond-norm degradation of the budget]
\label{prop:approximation}
Let $\cR_{\theta,t}$ be a deterministic CPTP channel and define
\begin{equation}
  \varepsilon_{\diamond,t}(\theta)=\half\norm{\cR_{\theta,t}-\cR_t^{\mathrm U}}_\diamond.
\end{equation}
Then
\begin{equation}
  \sum_xp_x\Fr\bigl(\rho_{x,t-1},\cR_{\theta,t}(\rho_{x,t})\bigr)
  \geq\pos{e^{-\Delta_t/2}-\sqrt{2\varepsilon_{\diamond,t}(\theta)}}.
  \label{eq:approx-fidelity}
\end{equation}
Whenever the positive part is nonzero, the cq risk is at most minus twice its logarithm. No analogous statewise average-log bound follows without an individual fidelity lower bound.
\end{proposition}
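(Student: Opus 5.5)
The plan is to compare the trained channel with the universal recovery channel $\cR_t^{\mathrm U}$ of \cref{thm:recovery-budget} statewise, and then average. From the proof of \cref{thm:recovery-budget} we already have the averaged fidelity bound \cref{eq:average-fidelity-app}:
\[
\sum_xp_xF_x^{\mathrm U}\ge e^{-\Delta_t/2},
\qquad
F_x^{\mathrm U}:=\Fr\bigl(\rho_{x,t-1},\cR_t^{\mathrm U}(\rho_{x,t})\bigr).
\]
So it suffices to show that replacing $\cR_t^{\mathrm U}$ by $\cR_{\theta,t}$ decreases each $F_x$ by at most $\sqrt{2\varepsilon_{\diamond,t}(\theta)}$. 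Averaging over $x$ and clipping at zero, since fidelities are nonnegative, then gives \cref{eq:approx-fidelity}.

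For the statewise step, fix $x$ and write $\sigma_x=\cR_t^{\mathrm U}(\rho_{x,t})$ and $\tau_x=\cR_{\theta,t}(\rho_{x,t})$. The diamond norm dominates the trace norm on any single input, so $\dtr(\sigma_x,\tau_x)\le\varepsilon_{\diamond,t}(\theta)$. Next I convert this into a fidelity perturbation through the purified (Bures) distance. Set $B(\rho,\sigma)=\sqrt{2(1-\Fr(\rho,\sigma))}$; this is a metric. The lower Fuchs--van de Graaf inequality \cref{eq:fvdg-main} gives $1-\Fr\le\dtr$, hence $B(\sigma_x,\tau_x)\le\sqrt{2\varepsilon_{\diamond,t}(\theta)}$.

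Convert back using the angle or Bures triangle inequality. The cleanest route I would try is the triangle inequality for the Bures angle $A=\arccos\Fr$ together with the elementary bound $\cos(a+b)\ge\cos a-\sin b\ge \cos a - b$, where $b=A(\sigma_x,\tau_x)$. Then check $b\le\sqrt{2(1-\cos b)}\cdot c$ for a suitable constant. Alternatively, use the known inequality $|\Fr(\rho,\sigma)-\Fr(\rho,\tau)|\le \sqrt{1-\Fr(\sigma,\tau)^2}\le\sqrt{2(1-\Fr(\sigma,\tau))}$, which holds via $\sqrt{1-F^2}=\sqrt{(1-F)(1+F)}$. Combined with $1-\Fr(\sigma_x,\tau_x)\le\varepsilon_{\diamond,t}$, this yields exactly $F_x^\theta\ge F_x^{\mathrm U}-\sqrt{2\varepsilon_{\diamond,t}}$. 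The continuity estimate $|\Fr(\rho,\sigma)-\Fr(\rho,\tau)|\le\sqrt{1-\Fr(\sigma,\tau)^2}$ is the one step to nail down; I would derive it from the sine-distance (purified-distance) triangle inequality, $C(\rho,\tau)\le C(\rho,\sigma)+C(\sigma,\tau)$ with $C=\sqrt{1-\Fr^2}$, applied through Uhlmann purifications. This is the main obstacle, and getting the constant $\sqrt2$ rather than something weaker depends on choosing this route.

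The remaining claims are routine. When the positive part is nonzero, \cref{eq:cq-fidelity-block-main} identifies the averaged fidelity with the cq block fidelity, so $\cL_t^{\rm cq}(\cR_{\theta,t})=-2\log\sum_xp_xF^\theta_x$ is at most minus twice the logarithm of the bound, by monotonicity of $-\log$. For the negative remark, I would note that Jensen only runs in the direction $-\log\sum\le\sum-\log$. Hence an average-fidelity lower bound cannot control $-\sum_xp_x\log F_x$ when a single $F_x$ can approach $0$. A two-label example in which one state's fidelity vanishes while the average stays large illustrates this.
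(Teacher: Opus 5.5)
Your proof is correct and has the same skeleton as the paper's: a statewise continuity estimate $\Fr(\rho_{x,t-1},\cR_{\theta,t}(\rho_{x,t}))\ge F_x^{\mathrm U}-\sqrt{2\varepsilon_{\diamond,t}(\theta)}$, averaged against \cref{eq:average-fidelity-app} and clipped at zero. The difference is in the continuity lemma. The paper gets $\abs{\Fr(\sigma,\rho)-\Fr(\sigma,\rho')}\le\sqrt{\norm{\rho-\rho'}_1}$ in two lines from Powers--St\o rmer, $\norm{\sqrt\rho-\sqrt{\rho'}}_2^2\le\norm{\rho-\rho'}_1$, together with H\"older. That route needs no purifications and is stated directly in trace norm, which is why the paper can reuse it verbatim in the later generalization, Stinespring and finite-shot arguments. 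You instead pass through the purified distance, $\abs{\Fr(\rho,\sigma)-\Fr(\rho,\tau)}\le\sqrt{1-\Fr(\sigma,\tau)^2}\le\sqrt{2(1-\Fr(\sigma,\tau))}$, and only then apply the lower Fuchs--van de Graaf inequality. The final constant is the same, and your intermediate bound is slightly stronger, since $\sqrt{1-\Fr^2}\le\sqrt{2\dtr}$.

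One caution concerns how you plan to justify the key step. The triangle inequality for $C=\sqrt{1-\Fr^2}$ does not give it by itself. It only controls $\abs{C(\rho,\sigma)-C(\rho,\tau)}$, and $F\mapsto\sqrt{1-F^2}$ is flat near $F=0$: if $\Fr(\rho,\sigma)=0$ and $\Fr(\rho,\tau)=\sin\delta$, the $C$-difference is only $1-\cos\delta$.

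Your first route does work if you stop one step earlier. Set $a=\arccos\Fr(\rho,\sigma)$ and $b=\arccos\Fr(\sigma,\tau)$. When $a+b\le\pi/2$, the Bures-angle triangle inequality gives $\Fr(\rho,\tau)\ge\cos(a+b)\ge\cos a-\sin b$. When $a+b>\pi/2$, the bound $\Fr(\rho,\tau)\ge\cos a-\sin b$ holds trivially because $\cos a<\sin b$. Relaxing $\sin b$ further to $b$ would lose a constant factor.

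A more direct argument uses Uhlmann purifications with $\langle\psi_\rho|\psi_\sigma\rangle=\Fr(\rho,\sigma)$ and $\langle\psi_\sigma|\psi_\tau\rangle=\Fr(\sigma,\tau)$. These give $\Fr(\rho,\tau)\ge\operatorname{Re}\langle\psi_\rho|\psi_\tau\rangle\ge\Fr(\rho,\sigma)-\norm{\psi_\sigma-\psi_\tau}=\Fr(\rho,\sigma)-\sqrt{2(1-\Fr(\sigma,\tau))}$. This is exactly the Bures-distance bound you set up.

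Your treatment of the cq-risk consequence and of the closing remark on the statewise log risk is fine.
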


\subsection{Proof of the channel-approximation proposition}
By the definition of diamond norm,
\begin{equation}
  \dtr\bigl(\cR_{\theta,t}(\rho),\cR_t^{\mathrm U}(\rho)\bigr)
  \leq\varepsilon_{\diamond,t}(\theta)
\end{equation}
for every input state. The Powers--St\o rmer inequality and H\"older's inequality imply the standard fidelity continuity estimate
\begin{equation}
  \abs{\Fr(\sigma,\rho)-\Fr(\sigma,\rho')}
  \leq\sqrt{\norm{\rho-\rho'}_1}
  =\sqrt{2\dtr(\rho,\rho')}.
  \label{eq:fidelity-continuity}
\end{equation}
Hence every statewise fidelity is reduced by at most $\sqrt{2\varepsilon_{\diamond,t}(\theta)}$. Averaging and applying the appendix-local fidelity consequence \cref{eq:average-fidelity-app} proves \cref{eq:approx-fidelity}.
\hfill$\square$

\section{Proofs for recoverability--coverage separation}
\label{app:coverage-separation}

\subsection{Assignment blindness of product-coupled scores}
For the good model in \cref{prop:assignment-blindness}, conditioning on $Z=j$ gives the score law in \cref{eq:score-regularity-main}. Therefore
\begin{equation}
\Law\!\left(S(\rho_X,\rho_Z)\right)
=\frac1m\sum_{j=1}^m\nu_S
=\nu_S
=\Law\!\left(S(\rho_X,\rho_{j_0})\right).
\end{equation}
The good output law is $\mu$, so its Wasserstein error is zero. The collapsed output law is the Dirac measure $\delta_{\rho_{j_0}}$; its coupling with $\mu$ is unique, giving
\begin{equation}
\Wtr(\mu,\delta_{\rho_{j_0}})
=\frac1m\sum_x\dtr(\rho_x,\rho_{j_0}).
\end{equation}
This proves \cref{prop:assignment-blindness}. \hfill$\square$

\subsection{Fixed two-qubit noncommuting separation}
Four-dimensional SIC-POVMs exist and admit an explicit two-qubit realization \citep{renes2004sic,zhu2010twoqubitsic}. Let $P_j=\ketbra{\psi_j}$, $j=1,\ldots,16$, denote its normalized rank-one projectors. The SIC identities are
\begin{equation}
\frac1{16}\sum_{j=1}^{16}P_j=\frac{\id}{4},
\qquad
\Tr(P_iP_j)=\frac15\quad(i\neq j).
\label{eq:sic-identities-app}
\end{equation}
Identifying the four-dimensional Hilbert space with two qubits, define
\begin{equation}
\rho_j:=\frac23 P_j+\frac1{12}\id,
\qquad
\mu_{\rm SIC}:=\frac1{16}\sum_{j=1}^{16}\delta_{\rho_j},
\qquad
\tau:=\frac{\id}{4}.
\label{eq:sic-target-states-app}
\end{equation}
The average state is $\tau$, and complete depolarization maps every $\rho_j$ to $\tau$. Moreover, for $i\neq j$,
\begin{equation}
[\rho_i,\rho_j]=\frac49[P_i,P_j]\neq0.
\end{equation}
Indeed, two distinct rank-one projectors commute only if their ranges are orthogonal, whereas \cref{eq:sic-identities-app} gives nonzero, nonunit overlap. Thus the target ensemble is pairwise noncommuting.

Let $X,Z$ be independent and uniform on $\{1,\ldots,16\}$. Define the good and collapsed stochastic reverse models by replacer branches,
\begin{equation}
\cR_z^{\rm good}(A)=\Tr(A)\rho_z,
\qquad
\cR_z^{\rm col}(A)=\Tr(A)\rho_1.
\label{eq:sic-replacer-branches-app}
\end{equation}
The good output law is exactly $\mu_{\rm SIC}$, while the collapsed output law is $\delta_{\rho_1}$.

We first compute the off-diagonal root fidelity. More generally, for
\begin{equation}
\rho_P=v\id+\lambda P,
\qquad
\rho_Q=v\id+\lambda Q,
\qquad
u:=v+\lambda,
\qquad
s:=\Tr(PQ),
\end{equation}
with rank-one projectors $P,Q$ in dimension $d$, the orthogonal complement of $\operatorname{span}\{P,Q\}$ contributes $(d-2)v$ to the root fidelity. On the two-dimensional span, the positive matrix $\sqrt{\rho_P}\rho_Q\sqrt{\rho_P}$ has trace $2u v+\lambda^2s$ and determinant $(u v)^2$. Hence
\begin{equation}
\Fr(\rho_P,\rho_Q)
=(d-2)v+\sqrt{4u v+\lambda^2s}.
\label{eq:depolarized-pure-fidelity-app}
\end{equation}
For \cref{eq:sic-target-states-app}, $d=4$, $\lambda=2/3$, $v=1/12$, $u=3/4$, and $s=1/5$ off the diagonal. Therefore
\begin{equation}
f_{\rm SIC}
:=\Fr(\rho_i,\rho_j)
=\frac16+\sqrt{\frac{61}{180}}
=\frac{5+\sqrt{305}}{30}
\qquad(i\neq j).
\label{eq:sic-offdiag-fidelity-app}
\end{equation}
The fidelity equals one on the diagonal. In the good model $X=Z$ with probability $1/16$; in the collapsed model $X=1$ with the same probability. Thus both complete scalar fidelity laws are
\begin{equation}
F=\begin{cases}
1,&\text{with probability }1/16,\\
f_{\rm SIC},&\text{with probability }15/16,
\end{cases}
\label{eq:sic-fidelity-law-app}
\end{equation}
proving \cref{eq:fidelity-law-separation-main}. Their common unclipped local log-fidelity risk is
\begin{equation}
L_{\rm SIC}^{\log}
=-\frac{15}{8}\log f_{\rm SIC}.
\label{eq:sic-logrisk-app}
\end{equation}

Each $\rho_j$ has spectrum $\{3/4,1/12,1/12,1/12\}$. Since the average state is $\tau$, the complete-depolarization cq-information decrement is
\begin{equation}
\Delta_{\rm SIC}
=\log4-S(\rho_j)
=\frac12\log3.
\label{eq:sic-budget-app}
\end{equation}
The budget inequality is strict. To verify it without a floating-point comparison, note first that $\sqrt{305}>87/5$, hence
\begin{equation}
f_{\rm SIC}>\frac{56}{75}.
\end{equation}
Moreover,
\begin{equation}
\left(\frac{56}{75}\right)^3>\frac{27}{65}
\quad\text{and}\quad
\left(\frac{27}{65}\right)^5>\frac1{81};
\end{equation}
the two cross-multiplied numerator differences are respectively $24415$ and $1970842$. Consequently $f_{\rm SIC}^{15}>1/81$, which is equivalent to
\begin{equation}
-\frac{15}{8}\log f_{\rm SIC}<\frac12\log3=\Delta_{\rm SIC}.
\label{eq:sic-budget-verified-app}
\end{equation}
This proves the exact budget-feasibility claim in \cref{eq:twoq-budget-separation-main}.

Finally, for distinct SIC projectors,
\begin{equation}
\dtr(P_i,P_j)=\sqrt{1-\Tr(P_iP_j)}=\frac2{\sqrt5},
\end{equation}
and homogeneity under the common depolarizing mixture gives
\begin{equation}
\dtr(\rho_i,\rho_j)=\frac{4}{3\sqrt5}
\qquad(i\neq j).
\end{equation}
The good model has zero endpoint error. For the collapsed model the second Wasserstein marginal is a Dirac measure, so the coupling is unique and
\begin{align}
\Wtr(\mu_{\rm SIC},\delta_{\rho_1})
&=\frac1{16}\sum_{j=1}^{16}\dtr(\rho_j,\rho_1)\\
&=\frac{15}{16}\frac{4}{3\sqrt5}
=\frac{\sqrt5}{4}>\frac12.
\end{align}
This proves \cref{thm:coverage-separation-main}. \hfill$\square$

\subsection{Amplification to an arbitrary constant below one}
We prove \cref{cor:coverage-separation-amplified} by the symmetric commuting family used to amplify the separation. Fix $d\geq2$ and $\lambda\in(0,1)$, let $\tau=\id/d$, and define
\begin{equation}
\rho_x^\lambda=\lambda\ketbra{x}+(1-\lambda)\tau,
\qquad
\mu_\lambda^{(d)}=\frac1d\sum_{x=1}^{d}\delta_{\rho_x^\lambda}.
\end{equation}
The complete-depolarization channel maps every $\rho_x^\lambda$ to $\tau$, so the information decrement of this one-step forward problem is
\begin{equation}
\Delta_d(\lambda)=\chi(\cE_\lambda^{(d)}).
\label{eq:separation-delta-app}
\end{equation}

Let $Z$ be uniform on $\{1,\ldots,d\}$ and independent of the target label $X$. Define the good branch family and the collapsed branch family by the CPTP replacer channels
\begin{align}
\cR_z^{\rm good}(A)&=\Tr(A)\rho_z^\lambda,\\
\cR_z^{\rm col}(A)&=\Tr(A)\rho_1^\lambda
\qquad\text{for every }z.
\end{align}
The good output law is exactly $\mu_\lambda^{(d)}$, whereas the collapsed output law is $\delta_{\rho_1^\lambda}$.

Set
\begin{equation}
u_\lambda:=\frac{1+(d-1)\lambda}{d},
\qquad
v_\lambda:=\frac{1-\lambda}{d}.
\end{equation}
The states commute in the computational basis. For $x\neq z$, their root fidelity is
\begin{equation}
f_d(\lambda)
=2\sqrt{u_\lambda v_\lambda}+(d-2)v_\lambda,
\label{eq:offdiag-fidelity-app}
\end{equation}
while the fidelity is one when $x=z$. In the good model, $X=Z$ with probability $1/d$; in the collapsed model, $X=1$ with probability $1/d$. Hence both scalar fidelity random variables have the same law,
\begin{equation}
F=\begin{cases}
1,&\text{with probability }1/d,\\
f_d(\lambda),&\text{with probability }(d-1)/d,
\end{cases}
\end{equation}
and their common unclipped branch-averaged log risk is
\begin{equation}
\cL_d^{\log,\rm br}(\lambda)
=-2\frac{d-1}{d}\log f_d(\lambda).
\label{eq:separation-logrisk-app}
\end{equation}

The good model reproduces $\mu_\lambda^{(d)}$, so its Wasserstein error is zero. For the collapsed model the second marginal is a Dirac measure, hence the coupling is unique and
\begin{align}
\Wtr\bigl(\mu_\lambda^{(d)},\delta_{\rho_1^\lambda}\bigr)
&=\frac1d\sum_x\dtr(\rho_x^\lambda,\rho_1^\lambda)\\
&=\lambda\left(1-\frac1d\right),
\label{eq:collapsed-wasserstein-app}
\end{align}
because $\dtr(\rho_x^\lambda,\rho_y^\lambda)=\lambda$ for $x\neq y$.

The average state of $\cE_\lambda^{(d)}$ is $\tau$, so
\begin{equation}
\Delta_d(\lambda)
=\log d-S(\rho_x^\lambda)
=\log d+\nu_\lambda\log \nu_\lambda+(d-1)v_\lambda\log v_\lambda.
\label{eq:chi-orthogonal-app}
\end{equation}
For every fixed $\lambda\in(0,1)$, as $d\to\infty$,
\begin{equation}
\Delta_d(\lambda)
=\lambda\log d-h_2(\lambda)+o(1),
\end{equation}
whereas \cref{eq:offdiag-fidelity-app} gives
\begin{equation}
f_d(\lambda)\longrightarrow1-\lambda,
\qquad
\cL_d^{\log,\rm br}(\lambda)
\longrightarrow-2\log(1-\lambda)<\infty.
\end{equation}
Thus for every fixed $\lambda\in(0,1)$ there is a finite $d_0(\lambda)$ such that
\begin{equation}
\cL_d^{\log,\rm br}(\lambda)\leq\Delta_d(\lambda)
\qquad\text{for all }d\geq d_0(\lambda).
\end{equation}
Given $c<1$, choose $\lambda\in(c,1)$. Equation~\eqref{eq:collapsed-wasserstein-app} exceeds $c$ for all sufficiently large $d$, while the budget inequality also holds for sufficiently large $d$. We may choose such a $d$ along the subsequence $d=2^n$, proving the corollary for qubit Hilbert spaces. \hfill$\square$

\paragraph{Small-noise complement.}
The amplification family is not purely a large-budget construction. For fixed $d>2$ and small $\lambda$, direct Taylor expansion gives
\begin{align}
\chi(\cE_\lambda^{(d)})
&=\frac{d-1}{2}\lambda^2-
\frac{(d-1)(d-2)}{6}\lambda^3+O(\lambda^4),\\
\cL_d^{\log,\rm br}(\lambda)
&=\frac{d-1}{2}\lambda^2-
\frac{(d-1)(d-2)}{4}\lambda^3+O(\lambda^4).
\end{align}
Hence
\begin{equation}
\chi(\cE_\lambda^{(d)})-\cL_d^{\log,\rm br}(\lambda)
=\frac{(d-1)(d-2)}{12}\lambda^3+O(\lambda^4)>0
\end{equation}
for all sufficiently small positive $\lambda$.

\section{Petz and averaged rotated-Petz maps}
\label{app:petz}
The recovery theorem is used as an existence result, but an explicit representative is useful for teacher construction. Let $\sigma$ be a state and $\cN$ a channel. The Hilbert--Schmidt adjoint $\cN^\dagger$ is defined by $\Tr[Y\cN(X)]=\Tr[\cN^\dagger(Y)X]$. On $\supp\cN(\sigma)$, the Petz map is \citep{petz1986sufficient,fawzi2015approximate,junge2018universal}
\begin{equation}
  \cP_{\sigma,\cN}(A)
  =\sigma^{1/2}\cN^\dagger\!\left(
  \cN(\sigma)^{-1/2}A\cN(\sigma)^{-1/2}
  \right)\sigma^{1/2},
  \label{eq:petz}
\end{equation}
where inverse powers are generalized inverses on their supports. Following the convention of \citet{junge2018universal}, define the rotated map
\begin{equation}
  \cR^{s}_{\sigma,\cN}(A)
  =\sigma^{-is}\cP_{\sigma,\cN}\!\left(
  \cN(\sigma)^{is}A\cN(\sigma)^{-is}
  \right)\sigma^{is}.
  \label{eq:rotated-petz}
\end{equation}
Each map is CPTP on $\supp\cN(\sigma)$. Let $\Pi$ project onto this support and choose a state $\tau_0$ supported on $\supp\sigma$. The completion
\begin{equation}
  \widetilde\cT(A)=\cT(\Pi A\Pi)+\Tr[(\id-\Pi)A]\tau_0
  \label{eq:recovery-completion}
\end{equation}
is CPTP on the full output space and agrees with $\cT$ on every relevant output $\cN(\rho)$ with $\supp\rho\subseteq\supp\sigma$. Indeed, $\rho\leq c\sigma$ for some finite $c$, so $\cN(\rho)\leq c\cN(\sigma)$ and the output support is contained in $\supp\cN(\sigma)$.

A universal recovery channel satisfying the inequality used in \cref{eq:universal-recovery} is
\begin{equation}
  \cR^{\mathrm U}_{\sigma,\cN}
  =\int_{\mathbb R}\beta_0(s)\cR^{s/2}_{\sigma,\cN}\,\dd s,
  \qquad
  \beta_0(s)=\frac{\pi}{2\,[\cosh(\pi s)+1]}.
  \label{eq:averaged-rotated-petz}
\end{equation}
The density $\beta_0$ integrates to one. The ordinary Petz map is the zero-rotation member of this family; it is not assumed to satisfy the universal fidelity remainder by itself for every ensemble.

\section{Uniform finite-sample calibration}
\label{app:generalization}
For the covering-number proof only, write
\begin{equation}
\ell_{\gamma,t}(\theta;x,z):=-2\log\max\{\gamma,\Fr(\rho_{x,t-1},\cR_{\theta,t,z}(\rho_{x,t}))\}.
\label{eq:sample-clipped-loss-main}
\end{equation}
This section supplies the technical details behind \cref{thm:generalization}.  Assume the branch unitary has $P$ trainable rotations,
\begin{equation}
  U_\theta(u,h,z)=W_P(u,h,z)e^{-i\theta_PH_P/2}\cdots
  W_1(u,h,z)e^{-i\theta_1H_1/2}W_0(u,h,z),
  \label{eq:pqc-main}
\end{equation}
with $\theta\in[-\pi,\pi]^P$, $\norm{H_j}_\infty\leq1$, and fixed conditional unitaries $W_j(u,h,z)$. The theorem therefore certifies trainable parameters entering through the displayed bounded-generator rotations; the time/latent conditioning inside $W_j$ is a fixed function of $(u,h,z)$. A separately trainable classical conditioner would require its own Lipschitz and covering-number control and is not included in the bound below. The resulting parameter dependence is consistent with general QML bounds based on trainable gates \citep{caro2022generalization,cai2022sample}.

For the concrete net tolerance $\varepsilon=2\log(1/\gamma)/\sqrt n$, define
\begin{equation}
  \varepsilon_{\rm gen}
  :=\frac{4\log(1/\gamma)}{\sqrt n}
  +2\log(1/\gamma)\sqrt{
  \frac{P\log\!\left(1+\frac{2\pi Pn}{\gamma^2\log^2(1/\gamma)}\right)
  +\log(T/\delta)}{2n}}.
  \label{eq:epsilon-gen-app}
\end{equation}
For a trained parameter $\widehat\theta$, the main-text calibration slack may be taken as
\begin{equation}
\varepsilon_{\rm cal}:=\eta_{\rm train}+\varepsilon_{\rm gen}+\max_t|\widehat\Delta_t-\Delta_t|.
\label{eq:epsilon-cal-app}
\end{equation}
The proof below keeps a generic net tolerance $\varepsilon$ until the final substitution.

A telescoping expansion of \cref{eq:pqc-main} gives
\begin{equation}
  \norm{U_\theta-U_{\theta'}}_\infty
  \leq\half\norm{\theta-\theta'}_1
  \leq\frac{\sqrt P}{2}\norm{\theta-\theta'}_2.
  \label{eq:unitary-parameter}
\end{equation}
Unitary conjugation followed by partial trace implies
\begin{equation}
  \dtr\bigl(\cR_{\theta,t,z}(\rho),\cR_{\theta',t,z}(\rho)\bigr)
  \leq\norm{U_\theta(u,h,z)-U_{\theta'}(u,h,z)}_\infty.
\end{equation}
Combining this with \cref{eq:fidelity-continuity} yields
\begin{equation}
  \abs{F_\theta-F_{\theta'}}
  \leq P^{1/4}\norm{\theta-\theta'}_2^{1/2}.
\end{equation}
The scalar function $f\mapsto-2\log\max\{\gamma,f\}$ is $2/\gamma$-Lipschitz on $[0,1]$. Therefore
\begin{equation}
  \abs{\ell_{\gamma,t}(\theta)-\ell_{\gamma,t}(\theta')}
  \leq\frac{2P^{1/4}}{\gamma}\norm{\theta-\theta'}_2^{1/2}.
  \label{eq:loss-holder}
\end{equation}

Choose a Euclidean parameter-net radius
\begin{equation}
  r=\frac{\gamma^2\varepsilon^2}{4\sqrt P}.
\end{equation}
Then \cref{eq:loss-holder} makes the induced loss functions uniformly $\varepsilon$-close. The parameter cube is contained in the Euclidean ball of radius $\pi\sqrt P$, whose $r$-covering number is at most
\begin{equation}
  N_r\leq\left(1+\frac{2\pi\sqrt P}{r}\right)^P
  =\left(1+\frac{8\pi P}{\gamma^2\varepsilon^2}\right)^P.
  \label{eq:covering-number}
\end{equation}
For every fixed net point and timestep, Hoeffding's inequality applies because $0\leq\ell_{\gamma,t}\leq 2\log(1/\gamma)$. A union bound over $T N_r$ pairs shows that, with probability at least $1-\delta$, every net loss satisfies
\begin{equation}
  L_{\gamma,t}(\theta_j)
  \leq\widehat L_{\gamma,t}(\theta_j)
  +2\log(1/\gamma)\sqrt{\frac{\log(TN_r/\delta)}{2n}}.
\end{equation}
For an arbitrary $\theta$, choose a net point $\theta_j$ within radius $r$. The population and empirical risks each change by at most $\varepsilon$, adding $2\varepsilon$. Substituting \cref{eq:covering-number} gives the corresponding bound for a generic $\varepsilon$; setting $\varepsilon=2\log(1/\gamma)/\sqrt n$ yields \cref{eq:epsilon-gen-app} and proves \cref{thm:generalization}.
\hfill$\square$

\subsection{Proof of population calibration}
For every $t$,
\begin{align}
  L_{\gamma,t}(\widehat\theta)-\Delta_t
  \leq{}&\widehat L_{\gamma,t}(\widehat\theta)-\widehat\Delta_t
  +\varepsilon_{\rm gen}
  +\abs{\widehat\Delta_t-\Delta_t}.
\end{align}
Taking positive parts and the maximum over $t$ proves the result.
\hfill$\square$

Define the fidelity lower-bound function
\begin{equation}
  \underline F_\gamma(a)
  :=\pos{e^{-a/2}-\frac{\gamma a}{2\log(1/\gamma)}}.
  \label{eq:fidelity-lower-bound-function}
\end{equation}
\begin{lemma}[Clipped risk implies average fidelity]
\label{lem:clipped-fidelity}
If $L_{\gamma,t}(\theta)\leq a$, then, for $(X,Z)\sim p\times\pi_t$,
\begin{equation}
  \mathbb E_{X,Z}\Fr\!\left(
  \rho_{X,t-1},\cR_{\theta,t,Z}(\rho_{X,t})
  \right)
  \geq \underline F_\gamma(a).
  \label{eq:clipped-to-fidelity}
\end{equation}
\end{lemma}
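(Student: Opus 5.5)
The plan is to pass from the clipped log-risk to an average fidelity in two moves. First use Jensen on the clipped fidelity, then pay for the clipping with a Markov-type bound on the probability that clipping is active. Fix $\theta$ and $t$, and write $F:=\Fr(\rho_{X,t-1},\cR_{\theta,t,Z}(\rho_{X,t}))\in[0,1]$ for $(X,Z)\sim p\times\pi_t$. Let $G:=\max\{\gamma,F\}$, so the hypothesis reads $\mathbb E[-2\log G]=L_{\gamma,t}(\theta)\leq a$.

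\emph{Step 1 (Jensen on the clipped variable).} Since $-\log$ is convex and $G\in[\gamma,1]$ is bounded away from zero, Jensen gives
\begin{equation}
-2\log\mathbb E G\leq\mathbb E[-2\log G]\leq a,
\qquad\text{hence}\qquad
\mathbb E G\geq e^{-a/2}.
\end{equation}
This is the same convexity step used in the proof of \cref{thm:recovery-budget} to obtain \cref{eq:average-fidelity-app}. Here it is applied to the clipped fidelity instead of the raw fidelity.

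\emph{Step 2 (cost of clipping).} Pointwise $G-F=(\gamma-F)\mathbf 1\{F<\gamma\}\leq\gamma\,\mathbf 1\{F<\gamma\}$. Therefore
\begin{equation}
\mathbb E F\geq\mathbb E G-\gamma\,\Pr(F<\gamma).
\end{equation}
To bound $\Pr(F<\gamma)$, I would use Markov's inequality on the nonnegative loss $-2\log G$. On the event $\{F<\gamma\}$ this loss equals exactly $2\log(1/\gamma)$, so
\begin{equation}
2\log(1/\gamma)\,\Pr(F<\gamma)\leq\mathbb E[-2\log G]\leq a,
\qquad\text{i.e.}\qquad
\Pr(F<\gamma)\leq\frac{a}{2\log(1/\gamma)}.
\end{equation}
Combining the two displays yields $\mathbb E F\geq e^{-a/2}-\gamma a/(2\log(1/\gamma))$.

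\emph{Step 3 (positive part).} Trivially $\mathbb E F\geq0$, so the bound may be replaced by its positive part. That positive part is exactly $\underline F_\gamma(a)$ from \cref{eq:fidelity-lower-bound-function}.

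No step is genuinely hard. The only point needing care is Step 2, which must not lose more than the stated $\gamma a/(2\log(1/\gamma))$. The key observation is that the loss is \emph{constant}, equal to $2\log(1/\gamma)$, on the clipping event. This makes Markov's inequality tight enough, and the factor $\gamma$ arises because each clipped fidelity is underestimated by at most $\gamma$.
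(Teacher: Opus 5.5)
Your proof is correct and follows the paper's argument essentially step for step. You apply Jensen to the clipped variable $\max\{\gamma,F\}$ to get $\mathbb E\max\{\gamma,F\}\geq e^{-a/2}$, bound $\Pr(F<\gamma)$ using that the clipped loss equals $2\log(1/\gamma)$ on $\{F<\gamma\}$, pay at most $\gamma$ per clipped sample, and take the positive part since $F\geq0$; this is exactly the paper's decomposition.
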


\subsection{Proof of the clipped-fidelity lemma}
Let $F=\Fr(\rho_{X,t-1},\cR_{\theta,t,Z}(\rho_{X,t}))$ under $(X,Z)\sim p\times\pi_t$, and let $M_\gamma(F)=\max\{\gamma,F\}$. If $L_{\gamma,t}\leq a$, then
\begin{equation}
  \mathbb E\log M_\gamma(F)=-\half L_{\gamma,t}\geq-a/2.
\end{equation}
By concavity of $\log$,
\begin{equation}
  \mathbb E M_\gamma(F)\geq e^{-a/2}.
\end{equation}
On the event $\{F<\gamma\}$, the clipped loss equals $2\log(1/\gamma)$, hence
\begin{equation}
  \Pr(F<\gamma)\leq\frac{a}{2\log(1/\gamma)}.
\end{equation}
Finally,
\begin{align}
  \mathbb EF
  &=\mathbb EM_\gamma(F)-\mathbb E(\gamma-F)_+\\
  &\geq e^{-a/2}-\gamma\Pr(F<\gamma)\\
  &\geq e^{-a/2}-\frac{\gamma a}{2\log(1/\gamma)}.
\end{align}
Since fidelity is nonnegative, taking the positive part proves \cref{eq:clipped-to-fidelity}.
\hfill$\square$

\section{Local-to-global distributional bounds}
\label{app:endpoint}
The main text states the nonexpansive bound directly in Wasserstein distance. For the refined proof, define
\begin{equation}
D_t:=\Wtr(\mu_t,\nu_t),
\qquad
\kappa(K):=\sup_{\rho\ne\sigma}
\frac{\Wtr(K(\rho,\cdot),K(\sigma,\cdot))}{\dtr(\rho,\sigma)},
\end{equation}
and write $\kappa_t=\kappa(K_{\widehat\theta,t})$. The contraction-weighted form is
\begin{equation}
D_0\le
\left(\prod_{j=1}^T\kappa_j\right)D_T+
\sum_{t=1}^T\left(\prod_{j=1}^{t-1}\kappa_j\right)
\psi_\gamma(\Delta_t+\varepsilon_{\rm cal}).
\label{eq:endpoint-certificate-main}
\end{equation}
For the input-independent CPTP branch kernels considered in the main text, $\kappa_t\le1$, and this reduces to \cref{eq:endpoint-simple-main}.
The preceding results are local in time. Here they are lifted to probability measures on quantum-state space and composed along the learned reverse chain.

Let $\pi_t$ be a probability measure on a standard Borel latent space $\mathcal Z$.  The state-valued Markov kernel used in the main text is, for every Borel set $B\subseteq\cS(Q)$,
\begin{equation}
  K_{\theta,t}(\rho,B)
  =\int_{\mathcal Z}\mathbf 1_B\!\left(\cR_{\theta,t,z}(\rho)\right)\pi_t(\dd z),
  \label{eq:kernel-induced-by-channel-main}
\end{equation}
which is equivalent to the law notation in \cref{eq:kernel-law-main}. Using the fidelity lower bound in \cref{eq:fidelity-lower-bound-function}, define
\begin{equation}
  \psi_\gamma(a)=\sqrt{1-\underline F_\gamma(a)^2}.
  \label{eq:q-psi-app}
\end{equation}
This is exactly the main-text transfer function in \cref{eq:psi-main}.

\begin{proposition}[Local distributional recovery]
\label{prop:local-distribution}
If $L_{\gamma,t}(\theta)\leq a_t$, then
\begin{equation}
  \Wtr(\mu_{t-1},K_{\theta,t\#}\mu_t)
  \leq\psi_\gamma(a_t).
  \label{eq:local-distribution-bound}
\end{equation}
Without any additional positivity assumption, the unclipped average fidelity also gives the sharper bound
\begin{equation}
  \Wtr(\mu_{t-1},K_{\theta,t\#}\mu_t)
  \leq
  \sqrt{1-
  \left(
  \mathbb E_{X,Z}\Fr\!\left(
  \rho_{X,t-1},\cR_{\theta,t,Z}(\rho_{X,t})
  \right)
  \right)^2}.
  \label{eq:unclipped-local-distribution-bound}
\end{equation}
\end{proposition}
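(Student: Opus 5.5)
The plan is to exhibit one explicit coupling of $\mu_{t-1}$ and the pushforward $K_{\theta,t\#}\mu_t$, and then bound its trace cost through fidelity. Since $\Wtr$ is an infimum over couplings, any admissible coupling gives an upper bound.

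\textbf{The coupling.} Draw $(X,Z)\sim p\times\pi_t$ and form the pair
\[
\bigl(\rho_{X,t-1},\ \cR_{\theta,t,Z}(\rho_{X,t})\bigr).
\]
Its first marginal is $\sum_x p_x\delta_{\rho_{x,t-1}}=\mu_{t-1}$. The second marginal is $K_{\theta,t\#}\mu_t$. This follows from \cref{eq:kernel-induced-by-channel-main}, because $X$ indexes $\mu_t$ through $\rho_{X,t}$ and $Z$ is independent of $X$. The only care needed is that the map $(x,z)\mapsto\cR_{\theta,t,z}(\rho_{x,t})$ is measurable. This holds since $x$ ranges over a finite set and the branch channels depend measurably on $z$ (continuously, for the parametrized Stinespring branches). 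Hence
\[
\Wtr(\mu_{t-1},K_{\theta,t\#}\mu_t)\le\mathbb E_{X,Z}\,\dtr\bigl(\rho_{X,t-1},\cR_{\theta,t,Z}(\rho_{X,t})\bigr).
\]

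\textbf{Fidelity conversion.} Write $F=\Fr(\rho_{X,t-1},\cR_{\theta,t,Z}(\rho_{X,t}))$. Apply the upper Fuchs--van de Graaf inequality \cref{eq:fvdg-main} pointwise, which gives $\dtr\le\sqrt{1-F^2}$. The map $f\mapsto\sqrt{1-f^2}$ is concave on $[0,1]$, so Jensen's inequality gives
\[
\mathbb E\sqrt{1-F^2}\le\sqrt{1-(\mathbb E F)^2}.
\]
This already proves the sharper bound \cref{eq:unclipped-local-distribution-bound}. It needs no positivity assumption, because $F\in[0,1]$ always.

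\textbf{Clipped bound.} By \cref{lem:clipped-fidelity}, the hypothesis $L_{\gamma,t}(\theta)\le a_t$ gives $\mathbb E F\ge\underline F_\gamma(a_t)\ge0$. The function $f\mapsto\sqrt{1-f^2}$ is decreasing on $[0,1]$, so
\[
\sqrt{1-(\mathbb E F)^2}\le\sqrt{1-\underline F_\gamma(a_t)^2}=\psi_\gamma(a_t),
\]
using \cref{eq:q-psi-app}. This proves \cref{eq:local-distribution-bound}.

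\textbf{Main obstacle.} The only delicate point is checking that the constructed pair really has the kernel pushforward as its second marginal. This relies on the latent $Z$ being input-independent: its law $\pi_t$ must not depend on $X$. Everything after that is the Fuchs--van de Graaf inequality plus one application of Jensen's inequality.
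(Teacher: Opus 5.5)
Your proposal is correct and follows the paper's proof essentially step for step. Both arguments use the same coupling (draw $X\sim p$, then push $\rho_{X,t}$ through an independently sampled branch), apply the upper Fuchs--van de Graaf inequality pointwise and Jensen for the concave map $f\mapsto\sqrt{1-f^2}$, and then combine the clipped-fidelity lemma with monotonicity of that map to reach $\psi_\gamma(a_t)$.
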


\subsection{Proof of the local distributional bound}
Construct a coupling by sampling $X\sim p$, setting $\rho=\rho_{X,t-1}$ and $\rho_t=\cN_t(\rho)$, and sampling $\widehat\rho\sim K_{\theta,t}(\rho_t,\cdot)$. Its marginals are $\mu_{t-1}$ and $K_{\theta,t\#}\mu_t$, so
\begin{equation}
  \Wtr(\mu_{t-1},K_{\theta,t\#}\mu_t)
  \leq\mathbb E\dtr(\rho,\widehat\rho).
\end{equation}
By \cref{eq:fvdg-main},
\begin{equation}
  \dtr(\rho,\widehat\rho)
  \leq\sqrt{1-\Fr(\rho,\widehat\rho)^2}.
\end{equation}
The function $g(f)=\sqrt{1-f^2}$ is concave on $[0,1]$, so Jensen's inequality and \cref{lem:clipped-fidelity} give
\begin{equation}
  \mathbb E\dtr(\rho,\widehat\rho)
  \leq\sqrt{1-(\mathbb E\Fr)^2}
  \leq\psi_\gamma(a_t).
\end{equation}
The unclipped statement is the same Jensen bound applied directly to the positive average fidelity, without the clipping relaxation.
\hfill$\square$

\subsection{Proof of latent-channel nonexpansiveness}
For two inputs $\rho,\sigma$, couple their outputs using the same latent variable $Z$. Contractivity of trace distance under each CPTP channel gives
\begin{align}
  \Wtr(K_{\theta,t}(\rho,\cdot),K_{\theta,t}(\sigma,\cdot))
  &\leq\mathbb E_Z\dtr(\cR_{\theta,t,Z}(\rho),\cR_{\theta,t,Z}(\sigma))\\
  &\leq\dtr(\rho,\sigma).
\end{align}
Taking the supremum proves $\kappa(K_{\theta,t})\leq1$; in particular, $\kappa_t\leq1$ for the trained kernel.
\hfill$\square$

\begin{lemma}[Dobrushin contraction]
\label{lem:dobrushin-pushforward}
For any Borel Markov kernel $K$ on $(\cS(Q),\dtr)$ with Dobrushin coefficient $\kappa$,
\begin{equation}
  \Wtr(K_\#\mu,K_\#\nu)\leq\kappa\Wtr(\mu,\nu)
\end{equation}
for all probability measures $\mu,\nu$ on $\cS(Q)$.
\end{lemma}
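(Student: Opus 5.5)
The plan is to prove the lemma through its coupling characterization. Two facts are needed: the Kantorovich formulation of $\Wtr$ with an optimal coupling, and a measurable coupling of the kernel outputs whose expected cost is at most $\kappa\,\dtr(\rho,\sigma)$ pointwise. Here $\kappa$ is read as the Wasserstein--Dobrushin coefficient $\kappa(K)=\sup_{\rho\neq\sigma}\Wtr(K(\rho,\cdot),K(\sigma,\cdot))/\dtr(\rho,\sigma)$ defined above. The state space $\cS(Q)$ is a compact convex subset of a finite-dimensional real vector space, hence a compact Polish space. Since $\dtr\le 1$, every probability measure has finite first moment, and $\Wtr$ is a genuine (finite) Wasserstein-1 distance.

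\textbf{Step 1.} Fix $\mu,\nu$. By compactness and lower semicontinuity of the cost, there is an optimal coupling $\pi\in\Pi(\mu,\nu)$ with $\int\dtr\,\dd\pi=\Wtr(\mu,\nu)$. If no optimizer is used, an $\epsilon$-optimal coupling works equally well, with $\epsilon\to0$ taken at the end.

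\textbf{Step 2.} For each pair $(\rho,\sigma)$, choose an optimal coupling $\gamma_{\rho,\sigma}\in\Pi(K(\rho,\cdot),K(\sigma,\cdot))$. By definition of $\kappa$, it satisfies $\int\dtr\,\dd\gamma_{\rho,\sigma}=\Wtr(K(\rho,\cdot),K(\sigma,\cdot))\le\kappa\,\dtr(\rho,\sigma)$. This requires the map $(\rho,\sigma)\mapsto\gamma_{\rho,\sigma}$ to be a measurable kernel.

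\textbf{Step 3.} Define the glued measure $\Gamma(\dd\rho',\dd\sigma')=\int\gamma_{\rho,\sigma}(\dd\rho',\dd\sigma')\,\pi(\dd\rho,\dd\sigma)$. Its marginals are $\int K(\rho,\cdot)\,\mu(\dd\rho)=K_\#\mu$ and, likewise, $K_\#\nu$. Therefore
\[
\Wtr(K_\#\mu,K_\#\nu)\le\int\!\!\int\dtr\,\dd\gamma_{\rho,\sigma}\,\dd\pi\le\kappa\int\dtr\,\dd\pi=\kappa\,\Wtr(\mu,\nu).
\]

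\textbf{Main obstacle.} The hard step is the measurable selection in Step 2. I would handle it with the standard theorem (Villani, \emph{Optimal Transport}, Cor.~5.22): on Polish spaces, when $(\rho,\sigma)\mapsto(K(\rho,\cdot),K(\sigma,\cdot))$ is measurable, there is a measurable selection of optimal couplings. Measurability of $\rho\mapsto K(\rho,\cdot)$ into the space of probability measures with the weak topology follows from $K$ being a Borel Markov kernel.

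A simpler alternative avoids selection altogether and is likely the cleanest route. Use Kantorovich--Rubinstein duality: $\Wtr(\mu,\nu)=\sup_{\mathrm{Lip}(f)\le1}\int f\,\dd(\mu-\nu)$. For 1-Lipschitz $f$, the function $Kf(\rho)=\int f\,\dd K(\rho,\cdot)$ satisfies $|Kf(\rho)-Kf(\sigma)|\le\Wtr(K(\rho,\cdot),K(\sigma,\cdot))\le\kappa\,\dtr(\rho,\sigma)$, so $Kf$ is $\kappa$-Lipschitz and Borel. It follows that $\int f\,\dd(K_\#\mu-K_\#\nu)=\int Kf\,\dd(\mu-\nu)\le\kappa\,\Wtr(\mu,\nu)$. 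Taking the supremum over $f$ gives the claim. The case $\kappa=\infty$ is trivial.

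Both routes rely on the same standard facts about $\Wtr$ on a compact Polish space: existence of optimal couplings (or near-optimal ones), and the duality theorem, which applies because $\dtr$ is a metric on $\cS(Q)$.
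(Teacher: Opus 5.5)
Your proof is correct. The duality argument you offer as the ``simpler alternative'' is exactly the paper's proof. The paper shows that $Kf$ is $\kappa$-Lipschitz for every $1$-Lipschitz $f$ and then applies Kantorovich--Rubinstein duality on the compact space $\cS(Q)$. You also make explicit the Borel measurability of $Kf$ and the identity $\int f\,\dd(K_\#\mu)=\int Kf\,\dd\mu$, both of which the paper leaves implicit.

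Your primary gluing route is a genuinely different argument, and it is also valid. It replaces duality with two ingredients: existence of optimal couplings, and a measurable selection of them. Villani's Corollary~5.22 does apply, since a Borel Markov kernel on a Polish space is a measurable map into the space of Borel probability measures with its weak Borel structure.

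The gluing route buys two things. First, it produces an explicit Markovian coupling of $K_\#\mu$ and $K_\#\nu$, which can be iterated to couple entire reverse trajectories. Second, it does not rely on $W_1$-specific duality. The same gluing gives $W_p(K_\#\mu,K_\#\nu)\le\kappa_p W_p(\mu,\nu)$ for the analogous $W_p$ Dobrushin coefficient $\kappa_p$, whereas the duality proof is tied to $p=1$.

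The price is the selection theorem. For the input-independent latent kernels actually used in the paper, the synchronous coupling $\Law\bigl(\cR_{t,Z}(\rho),\cR_{t,Z}(\sigma)\bigr)$ is measurable in $(\rho,\sigma)$ by construction. So gluing needs no selection there, although it then gives the constant $1$ (via channel contractivity) rather than $\kappa_t$.
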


\subsection{Proof of the Dobrushin pushforward lemma}
Because $\cS(Q)$ is compact, Kantorovich--Rubinstein duality applies. For any real-valued 1-Lipschitz function $f$, define $(Kf)(\rho)=\int f(\sigma)K(\rho,\dd\sigma)$. The definition of $\kappa$ and the dual formula for $W_1$ imply
\begin{equation}
  |(Kf)(\rho)-(Kf)(\rho')|
  \leq\Wtr(K(\rho,\cdot),K(\rho',\cdot))
  \leq\kappa\dtr(\rho,\rho'),
\end{equation}
so $Kf$ is $\kappa$-Lipschitz. Applying duality once more,
\begin{align}
  \Wtr(K_\#\mu,K_\#\nu)
  &=\sup_{\operatorname{Lip}(f)\leq1}
    \left[\int Kf\,\dd\mu-\int Kf\,\dd\nu\right]\\
  &\leq\kappa\Wtr(\mu,\nu).
\end{align}
\hfill$\square$

\subsection{Proof of the compositional endpoint bound}
The triangle inequality and \cref{lem:dobrushin-pushforward} give
\begin{align}
  D_{t-1}
  &\leq\Wtr(\mu_{t-1},K_{\theta,t\#}\mu_t)
  +\Wtr(K_{\theta,t\#}\mu_t,K_{\theta,t\#}\nu_t)\\
  &\leq\psi_\gamma(\Delta_t+\varepsilon_{\rm cal})+\kappa_tD_t,
\end{align}
where \cref{eq:population-calibration-main} provides $L_{\gamma,t}\leq\Delta_t+\varepsilon_{\rm cal}$. Iterating the recursion from $t=T$ down to $t=1$ proves \cref{eq:endpoint-certificate-main}.
\hfill$\square$

\subsection{Proof of worst-step schedule optimality}
If $\kappa_t\leq1$, then \cref{eq:endpoint-certificate-main} implies
\begin{equation}
  D_0\leq D_T+\sum_{t=1}^T\psi_\gamma(\Delta_t+\varepsilon_{\rm cal})
  \leq D_T+T\psi_\gamma(\max_t\Delta_t+\varepsilon_{\rm cal}),
\end{equation}
because $\psi_\gamma$ is nondecreasing. Holding $D_T$ and $\varepsilon_{\rm cal}$ fixed across candidate grids, \cref{thm:forward-clock} shows that the equal-information grid uniquely minimizes $\max_t\Delta_t$, so it minimizes the displayed bound. If $\underline F_\gamma>0$ on the relevant range, then $\underline F_\gamma$ is strictly decreasing and $\psi_\gamma$ is strictly increasing, which transfers uniqueness to the bound. If $\underline F_\gamma=0$, $\psi_\gamma=1$ and ties can occur.
\hfill$\square$

\section{Terminal randomness and approximate clocks}
The endpoint theorem allows a general terminal prior. The first proposition isolates the role of stochasticity at complete depolarization; the second relates an estimated information grid to the true decrements on that selected grid.

\begin{proposition}[Complete depolarization and stochastic generation]
\label{prop:terminal-randomness}
If $\mu_T=\delta_{\id/d}$ and every reverse step is deterministic, then every generated measure is a Dirac measure. Conversely, any finite ensemble $\sum_zq_z\delta_{\sigma_z}$ can be generated from $\delta_{\id/d}$ by sampling $Z\sim q$ and applying the replacer channel $\cR_z(A)=\Tr(A)\sigma_z$.
\end{proposition}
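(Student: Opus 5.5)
The plan is to prove the two directions separately, both by elementary measure-theoretic bookkeeping on the reverse chain. For the first claim, I would view the generated measure at each timestep as a pushforward. Let $\nu_T=\delta_{\id/d}$. A deterministic reverse step at time $t$ is a single label-independent CPTP map $\cR_t$, so its kernel is $K_t(\rho,\cdot)=\delta_{\cR_t(\rho)}$, and
\begin{equation*}
K_{t\#}\delta_\sigma=\delta_{\cR_t(\sigma)}.
\end{equation*}
Inducting downward from $t=T$, if $\nu_t=\delta_{\sigma_t}$ then $\nu_{t-1}=\delta_{\cR_t(\sigma_t)}$. Hence $\nu_0=\delta_{\cR_1\circ\cdots\circ\cR_T(\id/d)}$, which is a Dirac measure. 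The same argument applies verbatim to any intermediate $\nu_t$, so every generated measure along the chain is a Dirac measure.

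For the converse, I would take the finite target $\sum_z q_z\delta_{\sigma_z}$ and use the kernel of \cref{eq:kernel-induced-by-channel-main} with latent law $q$ and branches $\cR_z(A)=\Tr(A)\sigma_z$. First I would check that each branch is CPTP. It is the composition of the trace (a CPTP map to $\mathbb C$) with state preparation of $\sigma_z$. Its Choi operator is $\id\otimes\sigma_z\ge0$, and it is trace preserving because $\Tr\cR_z(A)=\Tr A$. Since $\Tr(\id/d)=1$, the output is exactly $\sigma_z$ with probability $q_z$, so
\begin{equation*}
K_\#\delta_{\id/d}=\sum_z q_z\delta_{\sigma_z}.
\end{equation*}
If several reverse steps are required, I would let the remaining steps be identity channels, or place the replacer at step $1$ and let earlier steps be arbitrary. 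The replacer ignores its input, so the final law is unchanged.

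No step is a genuine obstacle. The only point needing care is the first direction, where one must note that \emph{deterministic} means the latent law is a point mass, i.e.\ one fixed channel per step. Under that reading, pushforwards of Dirac measures stay Dirac and the induction goes through cleanly.
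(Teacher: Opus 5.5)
Your proposal is correct and follows the same route as the paper's proof: a deterministic channel pushes a Dirac measure to a Dirac measure, composition preserves this, and the replacer channels are CPTP, so sampling $Z\sim q$ yields $\sum_z q_z\delta_{\sigma_z}$ from any fixed input, including $\id/d$. Your Choi-operator check and the remark on padding the remaining steps with identity channels are harmless elaborations of the paper's two-line argument.
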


\subsection{Proof of the terminal-randomness proposition}
A deterministic channel maps a Dirac measure to a Dirac measure, and compositions preserve this property. Conversely, every map $\cR_z(A)=\Tr(A)\sigma_z$ is CPTP. Sampling $Z\sim q$ therefore produces the output law $\sum_zq_z\delta_{\sigma_z}$ from any fixed input state, including $\id/d$.
\hfill$\square$

\subsection{Proof of the truncated-terminal mismatch proposition}
Set $\tau=\id/d$, the maximally mixed terminal prior used in the main text. Because the second marginal is a Dirac measure, its coupling with $\mu_T$ is unique. Hence
\begin{align}
  \Wtr(\mu_T,\delta_\tau)
  &=\sum_xp_x\dtr(\cD_{\lambda_T}(\rho_x),\tau)\\
  &=\lambda_T\sum_xp_x\dtr(\rho_x,\tau),
\end{align}
where the second equality uses $\tau=\id/d$ and
$\cD_{\lambda_T}(\rho_x)-\tau=\lambda_T(\rho_x-\tau)$. The trace distance from the maximally mixed state is convex in the state and therefore maximized on a pure state. For a pure state, $\rho-\tau$ has eigenvalues $1-1/d$ and $-1/d$ with multiplicity $d-1$, so $\dtr(\rho,\tau)\leq1-1/d$. This proves \cref{eq:terminal-mismatch-main}.
\hfill$\square$

\begin{proposition}[Approximate information clock]
\label{prop:clock-estimation}
Suppose $\sup_{\lambda\in[\lambda_T,1]}|\widehat\chi(\lambda)-\chi(\lambda)|\leq\varepsilon_\chi$. Assume the estimated endpoint drop $\widehat\Delta_{\rm tot}:=\widehat\chi(1)-\widehat\chi(\lambda_T)$ is positive and that an exact equalizing grid exists; let $\widehat\lambda_t$ denote such a grid between the fixed endpoints. Define
\begin{equation}
  \Delta_t^{\mathrm{true}}
  :=\chi(\widehat\lambda_{t-1})-\chi(\widehat\lambda_t),
  \qquad
  \widehat\Delta_t
  :=\frac{\widehat\chi(1)-\widehat\chi(\lambda_T)}{T}.
\end{equation}
Then
\begin{align}
  \max_t|\Delta_t^{\mathrm{true}}-\widehat\Delta_t|
  &\leq2\varepsilon_\chi,
  \label{eq:clock-budget-estimation-error}\\
  \max_t|\Delta_t^{\mathrm{true}}-\Delta_{\rm tot}/T|
  &\leq2\varepsilon_\chi\left(1+\frac1T\right).
  \label{eq:clock-budget-error}
\end{align}
Thus under exact inversion the training and endpoint results may take $\varepsilon_{\rm clk}=2\varepsilon_\chi$, with $\Delta_t$ interpreted as the true decrement on the selected pilot grid.
\end{proposition}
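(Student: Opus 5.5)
The plan is a direct triangle-inequality argument that uses only the uniform bound on $\widehat\chi-\chi$ and the defining property of the estimated grid. The key observation is the following. Since $\widehat\lambda_t$ is an exact equalizing grid for $\widehat\chi$ between the fixed endpoints, we have
\begin{equation}
\widehat\chi(\widehat\lambda_{t-1})-\widehat\chi(\widehat\lambda_t)=\widehat\Delta_t=\frac{\widehat\chi(1)-\widehat\chi(\lambda_T)}{T}
\end{equation}
for every $t$. All grid points lie in $[\lambda_T,1]$, which is where the sup-norm hypothesis holds.

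For the first bound, write
\begin{equation}
\Delta_t^{\mathrm{true}}-\widehat\Delta_t=\bigl[\chi(\widehat\lambda_{t-1})-\widehat\chi(\widehat\lambda_{t-1})\bigr]-\bigl[\chi(\widehat\lambda_t)-\widehat\chi(\widehat\lambda_t)\bigr].
\end{equation}
Each bracket is bounded in absolute value by $\varepsilon_\chi$, which gives $2\varepsilon_\chi$ uniformly in $t$. This proves the first inequality, and hence the choice $\varepsilon_{\rm clk}=2\varepsilon_\chi$ in \cref{thm:generalization}. Here $\Delta_t$ is read as the true decrement on the selected grid, which is again a monotone grid with the same endpoints.

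For the second bound, insert $\widehat\Delta_t$ and use the triangle inequality:
\begin{equation}
|\Delta_t^{\mathrm{true}}-\Delta_{\rm tot}/T|\le|\Delta_t^{\mathrm{true}}-\widehat\Delta_t|+\frac{|\widehat\Delta_{\rm tot}-\Delta_{\rm tot}|}{T}.
\end{equation}
The numerator of the last term splits as $[\widehat\chi(1)-\chi(1)]-[\widehat\chi(\lambda_T)-\chi(\lambda_T)]$, so it is at most $2\varepsilon_\chi$. Adding the two pieces gives $2\varepsilon_\chi+2\varepsilon_\chi/T=2\varepsilon_\chi(1+1/T)$.

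There is no real obstacle in this argument. The only points to check are that every $\widehat\lambda_t$ lies in $[\lambda_T,1]$, so that the hypothesis applies, and that the positivity and existence assumptions on $\widehat\Delta_{\rm tot}$ and the grid are exactly what make $\widehat\Delta_t$ well defined and common to all steps. Monotonicity of the grid follows from $\widehat\Delta_{\rm tot}>0$ together with the existence assumption, and strict monotonicity of $\chi$ (\cref{thm:forward-clock}) then ensures $\Delta_t^{\mathrm{true}}>0$. This positivity is not needed for the bounds themselves.
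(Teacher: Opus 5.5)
Your argument is correct and is essentially the paper's proof. Both split $\Delta_t^{\mathrm{true}}-\widehat\Delta_t$ into two pointwise curve errors, using the equalizing identity $\widehat\chi(\widehat\lambda_{t-1})-\widehat\chi(\widehat\lambda_t)=\widehat\Delta_t$ that the paper uses tacitly and you state explicitly, and both then apply the triangle inequality with $|\widehat\Delta_{\rm tot}-\Delta_{\rm tot}|\le 2\varepsilon_\chi$. One minor quibble with your closing remark: that every $\widehat\lambda_t$ lies in $[\lambda_T,1]$ should be taken from the hypothesis (a grid between the fixed endpoints), since $\widehat\Delta_{\rm tot}>0$ plus existence of level-set solutions does not by itself make the grid monotone when $\widehat\chi$ is not monotone.
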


\subsection{Proof of the clock-estimation proposition}
Let $\widehat\Delta_{\rm tot}=\widehat\chi(1)-\widehat\chi(\lambda_T)$. Uniform estimation error implies
\begin{equation}
  \abs{\widehat\Delta_{\rm tot}-\Delta_{\rm tot}}\leq2\varepsilon_\chi.
\end{equation}
Let $\widehat\lambda_t$ denote the estimated grid and set $\Delta_t^{\mathrm{true}}:=\chi(\widehat\lambda_{t-1})-\chi(\widehat\lambda_t)$. The estimated decrement is $\widehat\Delta_t=\widehat\Delta_{\rm tot}/T$, while each endpoint evaluation differs from the true value by at most $\varepsilon_\chi$. Hence
\begin{equation}
  \abs{\Delta_t^{\mathrm{true}}-\widehat\Delta_t}\leq2\varepsilon_\chi,
\end{equation}
which proves \cref{eq:clock-budget-estimation-error}. The triangle inequality gives $\abs{\Delta_t^{\mathrm{true}}-\Delta_{\rm tot}/T}\leq\abs{\Delta_t^{\mathrm{true}}-\widehat\Delta_t}+\abs{\widehat\Delta_{\rm tot}-\Delta_{\rm tot}}/T\leq2\varepsilon_\chi(1+1/T)$, proving \cref{eq:clock-budget-error}.
\hfill$\square$

\begin{theorem}[Estimated clock with inexact numerical inversion]
\label{thm:inexact-clock}
Assume the uniform event of \cref{prop:clock-estimation}. Let
\begin{equation}
  \widehat y_t=(1-t/T)\widehat\chi(1)+(t/T)\widehat\chi(\lambda_T),
\end{equation}
and suppose the numerical solver returns $\widetilde\lambda_t$ satisfying
\begin{equation}
  \abs{\widehat\chi(\widetilde\lambda_t)-\widehat y_t}\leq\varepsilon_{\rm inv}
  \qquad (t=0,\ldots,T),
  \label{eq:inexact-residual}
\end{equation}
with the fixed endpoints enforced exactly. Define
\begin{equation}
  \widetilde\Delta_t^{\mathrm{true}}
  :=\chi(\widetilde\lambda_{t-1})-\chi(\widetilde\lambda_t),
  \qquad
  \widehat\Delta_t:=\frac{\widehat\chi(1)-\widehat\chi(\lambda_T)}{T}.
\end{equation}
Then
\begin{align}
  \max_t\abs{\widetilde\Delta_t^{\mathrm{true}}-\widehat\Delta_t}
  &\leq 2\varepsilon_\chi+2\varepsilon_{\rm inv},
  \label{eq:inexact-pilot-budget}\\
  \max_t\abs{\widetilde\Delta_t^{\mathrm{true}}-\Delta_{\rm tot}/T}
  &\leq2\varepsilon_\chi\left(1+\frac1T\right)+2\varepsilon_{\rm inv}.
  \label{eq:inexact-ideal-budget}
\end{align}
Let $\lambda_t^\star$ be the exact true-information grid point satisfying
$\chi(\lambda_t^\star)=(1-t/T)\chi(1)+(t/T)\chi(\lambda_T)$.
If $\lambda_t^\star$ and $\widetilde\lambda_t$ lie in an interval on which $\chi'(\lambda)\geq m_*>0$, then
$|\widetilde\lambda_t-\lambda_t^\star|\leq(2\varepsilon_\chi+\varepsilon_{\rm inv})/m_*$ for every $t$.
\end{theorem}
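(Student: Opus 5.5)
The plan mirrors the proof of \cref{prop:clock-estimation}, with the inversion residual added through the triangle inequality. The key object is the true value $\chi(\widetilde\lambda_t)$ at each returned grid point, compared against the estimated target level $\widehat y_t$.

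First, for every $t=0,\ldots,T$ we combine the uniform event $\sup_\lambda|\widehat\chi-\chi|\le\varepsilon_\chi$ with the residual condition \cref{eq:inexact-residual}. This gives
\[
\abs{\chi(\widetilde\lambda_t)-\widehat y_t}\le\varepsilon_\chi+\varepsilon_{\rm inv}.
\]
At the endpoints $t=0$ and $t=T$ the grid is enforced exactly, so the same bound holds (indeed with $\varepsilon_{\rm inv}$ replaceable by $0$).

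Next, since $\widehat y_{t-1}-\widehat y_t=\widehat\Delta_t$, we can write
\[
\widetilde\Delta_t^{\rm true}-\widehat\Delta_t=\bigl[\chi(\widetilde\lambda_{t-1})-\widehat y_{t-1}\bigr]-\bigl[\chi(\widetilde\lambda_t)-\widehat y_t\bigr].
\]
Applying the first step to each bracket yields \cref{eq:inexact-pilot-budget}.

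For \cref{eq:inexact-ideal-budget}, we add $|\widehat\Delta_t-\Delta_{\rm tot}/T|=|\widehat\Delta_{\rm tot}-\Delta_{\rm tot}|/T\le 2\varepsilon_\chi/T$. This estimate is already shown in the proof of \cref{prop:clock-estimation}, and the triangle inequality then gives the stated bound.

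For the parameter-space statement, let $y_t^\star:=\chi(\lambda_t^\star)=(1-t/T)\chi(1)+(t/T)\chi(\lambda_T)$. Note that $\widehat y_t$ is the same convex combination of $\widehat\chi(1)$ and $\widehat\chi(\lambda_T)$, and each of these is within $\varepsilon_\chi$ of its true counterpart. Hence $|\widehat y_t-y_t^\star|\le\varepsilon_\chi$, and together with the first step,
\[
\abs{\chi(\widetilde\lambda_t)-\chi(\lambda_t^\star)}\le 2\varepsilon_\chi+\varepsilon_{\rm inv}.
\]
By the mean value theorem on the interval where $\chi'\ge m_*$, which contains both points and hence the segment between them, we get $|\chi(\widetilde\lambda_t)-\chi(\lambda_t^\star)|\ge m_*|\widetilde\lambda_t-\lambda_t^\star|$. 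Dividing by $m_*$ gives the claim.

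The only care needed is that $\chi$ be differentiable on that interval. This holds because the proof of \cref{thm:forward-clock} establishes analyticity of $\chi$ on $(0,1)$. At an endpoint $\lambda=1$, where differentiability is not guaranteed, the endpoint is fixed exactly anyway, so the difference there is $0$; alternatively one can use a one-sided or monotone-difference-quotient argument. This point is where I expect the only subtlety; everything else is bookkeeping with the triangle inequality.
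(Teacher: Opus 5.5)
Your proof is correct and follows the paper's argument step for step. The steps are: the pointwise bound $|\chi(\widetilde\lambda_t)-\widehat y_t|\le\varepsilon_\chi+\varepsilon_{\rm inv}$, then differencing adjacent levels, then adding the $2\varepsilon_\chi/T$ endpoint-drop error, and finally the interpolation estimate $|\widehat y_t-y_t^\star|\le\varepsilon_\chi$ followed by the mean value theorem. Your extra care about differentiability at $\lambda=1$ is harmless but not needed: the mean value theorem only requires differentiability on the open segment between the two points, which lies in $(0,1)$, where $\chi$ is analytic.
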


\begin{proof}
The residual and uniform curve error imply
\begin{equation}
  \abs{\chi(\widetilde\lambda_t)-\widehat y_t}
  \leq\varepsilon_\chi+\varepsilon_{\rm inv}.
\end{equation}
Since $\widehat y_{t-1}-\widehat y_t=\widehat\Delta_t$, subtracting two adjacent relations proves \cref{eq:inexact-pilot-budget}. Moreover,
$\abs{\widehat\Delta_t-\Delta_{\rm tot}/T}\leq2\varepsilon_\chi/T$, which gives \cref{eq:inexact-ideal-budget}. Finally, with $y_t^\star:=(1-t/T)\chi(1)+(t/T)\chi(\lambda_T)$, the endpoint interpolation gives $|\widehat y_t-y_t^\star|\leq\varepsilon_\chi$. Combining this with $|\chi(\widetilde\lambda_t)-\widehat y_t|\leq\varepsilon_\chi+\varepsilon_{\rm inv}$ yields $|\chi(\widetilde\lambda_t)-\chi(\lambda_t^\star)|\leq2\varepsilon_\chi+\varepsilon_{\rm inv}$. The mean-value theorem on the stated interval then gives the coordinate bound.
\end{proof}

\section{Finite-shot calibration}
\label{app:finite-shot}
The main learning theorem assumes exact loss evaluation. The results in this section provide an independent hardware-facing route based on pilot tomography and held-out validation.

Recall the monotone Audenaert entropy-continuity modulus $\mathfrak f_d$ defined in \cref{eq:continuity-functions-main}. We reuse it here to control the information-clock error from state tomography.

\begin{theorem}[Finite-shot, path-uniform information clock]
\label{thm:finite-shot-clock}
Assume the probabilities $p_x$ are known. Let $\mathsf M=\{M_y\}_{y\in\mathcal Y}$ be informationally complete with Hermitian dual frame $\{F_y\}_{y\in\mathcal Y}$ and $\Lambda_{\mathsf M}=\sum_{y\in\mathcal Y}\norm{F_y}_1$. Measure $N_\chi$ copies of each $\rho_x$, let $\widehat q_{xy}$ be the empirical frequencies, and define
\begin{equation}
  \widetilde\rho_x=\sum_{y\in\mathcal Y}\widehat q_{xy}F_y,
  \qquad
  \widehat\rho_x\in\arg\min_{\sigma\in\cS(Q)}
  \norm{\sigma-\widetilde\rho_x}_1.
  \label{eq:physical-tomography-estimator}
\end{equation}
Propagate the physical estimates through the known forward channel,
\begin{equation}
  \widehat\rho_x(\lambda)=\cD_\lambda(\widehat\rho_x),
  \qquad
  \widehat\chi(\lambda)=S\left(\sum_xp_x\widehat\rho_x(\lambda)\right)
  -\sum_xp_xS(\widehat\rho_x(\lambda)).
  \label{eq:estimated-information-curve}
\end{equation}
For $\delta_\chi\in(0,1)$, set
\begin{equation}
  \varepsilon_\rho=\min\!\left\{1,
  \Lambda_{\mathsf M}\sqrt{\frac{\log(2m|\mathcal Y|/\delta_\chi)}{2N_\chi}}
  \right\}.
\end{equation}
With probability at least $1-\delta_\chi$,
\begin{equation}
  \max_x\dtr(\widehat\rho_x,\rho_x)\leq\varepsilon_\rho
\end{equation}
and, simultaneously for all $\lambda\in[0,1]$,
\begin{equation}
  |\widehat\chi(\lambda)-\chi(\lambda)|
  \leq2\mathfrak f_d(\varepsilon_\rho).
  \label{eq:uniform-chi-shot-bound}
\end{equation}
If the estimated endpoint drop $\widehat\Delta_{\rm tot}:=\widehat\chi(1)-\widehat\chi(\lambda_T)$ is positive, the estimated equal-information grid exists uniquely and its budget error obeys
\begin{equation}
  \max_t|\widehat\Delta_t-\Delta_t^{\mathrm{true}}|
  \leq4\mathfrak f_d(\varepsilon_\rho).
  \label{eq:clock-shot-exact-app}
\end{equation}
A sufficient condition on the same event is $\Delta_{\rm tot}>4\mathfrak f_d(\varepsilon_\rho)$.
\end{theorem}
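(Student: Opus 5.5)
The plan has three parts: a concentration step for the linear-inversion tomography, a projection step that keeps the estimate physical, and an entropy-continuity step that makes the bound uniform along the whole depolarizing path. The equal-information grid then follows from the clock results already proved.

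\textbf{Step 1: tomography.} For each label $x$ and outcome $y$, the empirical frequency $\widehat q_{xy}$ is an average of $N_\chi$ iid Bernoulli variables with mean $q_{xy}=\Tr(M_y\rho_x)$. Hoeffding's inequality plus a union bound over the $m|\mathcal Y|$ pairs gives, with probability at least $1-\delta_\chi$,
\[
\max_{x,y}|\widehat q_{xy}-q_{xy}|\le\sqrt{\log(2m|\mathcal Y|/\delta_\chi)/(2N_\chi)}.
\]
The dual-frame identity $\rho_x=\sum_y q_{xy}F_y$ then yields
\[
\norm{\widetilde\rho_x-\rho_x}_1\le\Lambda_{\mathsf M}\max_y|\widehat q_{xy}-q_{xy}|.
\]

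\textbf{Step 2: physical projection.} Since $\rho_x$ is feasible for the minimization defining $\widehat\rho_x$, the triangle inequality gives
\[
\norm{\widehat\rho_x-\rho_x}_1\le\norm{\widehat\rho_x-\widetilde\rho_x}_1+\norm{\widetilde\rho_x-\rho_x}_1\le 2\norm{\widetilde\rho_x-\rho_x}_1.
\]
Halving turns this into $\dtr(\widehat\rho_x,\rho_x)\le\Lambda_{\mathsf M}\sqrt{\cdot}$. The factor $2$ from projection cancels against the $\half$ in the definition of $\dtr$, so it should be tracked carefully. Because trace distance never exceeds $1$, the $\min\{1,\cdot\}$ is harmless.

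\textbf{Step 3: uniform Holevo bound along the path.} The depolarizing map is contractive and in fact homogeneous: $\dtr(\cD_\lambda\widehat\rho_x,\cD_\lambda\rho_x)=\lambda\,\dtr(\widehat\rho_x,\rho_x)\le\varepsilon_\rho$ for every $\lambda\in[0,1]$. On the same event this bound holds simultaneously for all $\lambda$ and all $x$, which is what makes the result path-uniform without any further union bound.

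Next, by convexity of trace distance the averages satisfy $\dtr(\widehat{\bar\rho}(\lambda),\bar\rho(\lambda))\le\varepsilon_\rho$. Apply Audenaert continuity, with $\mathfrak f_d$ monotone and concave, exactly as in \cref{eq:audenaert-holevo-app}: once to the average-state entropy and once, averaged over $x$, to the $S(\rho_x)$ terms. This gives \cref{eq:uniform-chi-shot-bound}.

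\textbf{Step 4: the grid.} Existence and uniqueness need strict monotonicity of $\widehat\chi$. By \cref{thm:forward-clock}, applied to the estimated ensemble $\{p_x,\widehat\rho_x\}$, $\widehat\chi$ is continuous and strictly increasing, provided that ensemble is nontrivial.

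This nontriviality is the subtle point and the main obstacle I expect. It follows from $\widehat\Delta_{\rm tot}>0$: a trivial estimated ensemble would make $\widehat\chi\equiv0$, contradicting a positive drop. With strict monotonicity, the intermediate value theorem gives a unique equal-information grid.

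Then \cref{prop:clock-estimation}, with $\varepsilon_\chi=2\mathfrak f_d(\varepsilon_\rho)$, gives \cref{eq:clock-shot-exact-app}. For the sufficient condition, note that $\widehat\Delta_{\rm tot}\ge\Delta_{\rm tot}-2\varepsilon_\chi=\Delta_{\rm tot}-4\mathfrak f_d(\varepsilon_\rho)>0$.
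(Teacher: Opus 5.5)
Your proposal is correct and follows the paper's proof essentially step for step. The shared chain is: Hoeffding with a union bound over the $m|\mathcal Y|$ label--outcome pairs, the dual-frame estimate with $\Lambda_{\mathsf M}$, the factor-two projection bound, contractivity along the depolarizing path, Audenaert continuity applied to the average state and to each member, and finally the clock-estimation proposition with $\varepsilon_\chi=2\mathfrak f_d(\varepsilon_\rho)$, together with your observation that $\widehat\Delta_{\rm tot}>0$ forces the reconstructed ensemble to be nontrivial, a step the paper only asserts.
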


\begin{theorem}[Independent finite-shot validation of recovery risk]
\label{thm:finite-shot-validation}
Condition on a trained parameter $\widehat\theta$ and on a pilot grid, both fixed before validation. At each timestep, draw $n_{\mathrm v}$ iid validation observations; the same indexed trajectory may be reused across timesteps, so only iid sampling across the validation index is required. For every target/output pair, assume that both density operators can be prepared repeatedly. For a stochastic reverse step this means that the recorded latent value or measurement branch can be conditioned on during validation. Tomograph each state from $N_{\mathrm v}$ copies using the same POVM $\mathsf M$ and the physical estimator in \cref{eq:physical-tomography-estimator}. Let $\widetilde L_{\gamma,t}(\widehat\theta)$ be the empirical clipped loss computed from the tomographic estimates. Define
\begin{align}
  \varepsilon_{\rho,{\rm v}}
  &:=\min\!\left\{1,
  \Lambda_{\mathsf M}
  \sqrt{\frac{\log(4n_{\mathrm v}T|\mathcal Y|/\delta_{\mathrm{tom}})}{2N_{\mathrm v}}}
  \right\},\\
  \varepsilon_{\rm tom}
  &:=\min\!\left\{2\log(1/\gamma),
  \frac{4\sqrt{2\varepsilon_{\rho,{\rm v}}}}{\gamma}
  \right\},\\
  \varepsilon_{\rm samp}
  &:=2\log(1/\gamma)\sqrt{\frac{\log(T/\delta_{\mathrm v})}{2n_{\mathrm v}}}.
  \label{eq:finite-shot-validation-slacks}
\end{align}
Then, with probability at least $1-\delta_{\mathrm v}-\delta_{\mathrm{tom}}$, simultaneously for all $t$,
\begin{equation}
  L_{\gamma,t}(\widehat\theta)
  \leq \widetilde L_{\gamma,t}(\widehat\theta)
  +\varepsilon_{\rm samp}+\varepsilon_{\rm tom}.
  \label{eq:finite-shot-validation-bound}
\end{equation}
If
\begin{equation}
  \max_t\pos{\widetilde L_{\gamma,t}(\widehat\theta)-\widehat\Delta_t}
  \leq\eta_{\mathrm v},
\end{equation}
then, on the additional pilot event in \cref{thm:finite-shot-clock}, let $\varepsilon_{\rm clk}$ be any simultaneous upper bound on $\max_t|\widehat\Delta_t-\Delta_t^{\rm true}|$; under exact inversion one may take $\varepsilon_{\rm clk}=4\mathfrak f_d(\varepsilon_\rho)$, while \cref{thm:inexact-clock} adds $2\varepsilon_{\rm inv}$ for inexact inversion.  Then
\begin{equation}
  \max_t\pos{L_{\gamma,t}(\widehat\theta)-\Delta_t^{\mathrm{true}}}
  \leq
  \eta_{\mathrm v}+\varepsilon_{\rm samp}+\varepsilon_{\rm tom}+\varepsilon_{\rm clk}.
  \label{eq:finite-shot-population-excess}
\end{equation}
The joint confidence is at least $1-\delta_\chi-\delta_{\mathrm v}-\delta_{\mathrm{tom}}$.
\end{theorem}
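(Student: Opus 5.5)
The proof has three ingredients: a per-pair tomography error that controls the empirical clipped loss, a Hoeffding bound over the validation index, and the clock-estimation bound. We work on the intersection of two good events, one for tomography and one for validation sampling.

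\emph{Tomography event.} For each timestep $t$ and validation index $i$ there are two states, the target and the output, so $2n_{\mathrm v}T$ states are tomographed in total. For a fixed state, the dual-frame estimator $\widetilde\rho=\sum_y\widehat q_yF_y$ satisfies
\[
\norm{\widetilde\rho-\rho}_1\le\sum_y|\widehat q_y-q_y|\,\norm{F_y}_1\le\Lambda_{\mathsf M}\max_y|\widehat q_y-q_y|.
\]
Hoeffding's inequality applied to each of the $|\mathcal Y|$ frequencies, with a union bound over all $2n_{\mathrm v}T|\mathcal Y|$ frequency estimates, gives failure probability $\delta_{\mathrm{tom}}$ at the stated $\log(4n_{\mathrm v}T|\mathcal Y|/\delta_{\mathrm{tom}})$ level. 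The factor $4$ comes from the two-sided Hoeffding constant $2$ times the two states.

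The projection onto $\cS(Q)$ at most doubles the trace-norm error: $\norm{\widehat\rho-\rho}_1\le\norm{\widehat\rho-\widetilde\rho}_1+\norm{\widetilde\rho-\rho}_1\le2\norm{\widetilde\rho-\rho}_1$, because $\rho$ is feasible in the minimization. This yields $\dtr(\widehat\rho,\rho)\le\varepsilon_{\rho,\mathrm v}$, with the minimum against $1$ automatic since $\dtr\le1$.

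On this event, apply the fidelity continuity \cref{eq:fidelity-continuity} twice, once per argument:
\[
|\widehat F-F|\le 2\sqrt{2\varepsilon_{\rho,\mathrm v}}.
\]
The clipped loss $f\mapsto-2\log\max\{\gamma,f\}$ is $2/\gamma$-Lipschitz, so each per-sample clipped loss changes by at most $4\sqrt{2\varepsilon_{\rho,\mathrm v}}/\gamma$. It also changes by at most the range $2\log(1/\gamma)$. Averaging over the validation index gives $|\widetilde L_{\gamma,t}-\bar L_{\gamma,t}|\le\varepsilon_{\rm tom}$ for every $t$, where $\bar L_{\gamma,t}$ is the empirical clipped loss computed from the exact states.

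\emph{Sampling event.} Conditional on $\widehat\theta$ and the pilot grid, which are independent of the validation data, the exact per-sample losses at step $t$ are iid, take values in $[0,2\log(1/\gamma)]$, and have mean $L_{\gamma,t}(\widehat\theta)$. This holds because sampling $(X,Z)$ and conditioning on the recorded branch reproduces the kernel law. One-sided Hoeffding with a union bound over $t$ gives $L_{\gamma,t}\le\bar L_{\gamma,t}+\varepsilon_{\rm samp}$ for all $t$ with probability at least $1-\delta_{\mathrm v}$. Reusing trajectories across timesteps is harmless here, because the union bound requires no independence across $t$. Intersecting the two events and chaining the inequalities proves \cref{eq:finite-shot-validation-bound} with probability at least $1-\delta_{\mathrm v}-\delta_{\mathrm{tom}}$.

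\emph{Population excess.} On the pilot event of \cref{thm:finite-shot-clock}, write
\[
L_{\gamma,t}-\Delta_t^{\rm true}\le(\widetilde L_{\gamma,t}-\widehat\Delta_t)+\varepsilon_{\rm samp}+\varepsilon_{\rm tom}+|\widehat\Delta_t-\Delta_t^{\rm true}|.
\]
The first term is at most $\eta_{\mathrm v}$, and the last is at most $\varepsilon_{\rm clk}$. Here $\varepsilon_{\rm clk}=4\mathfrak f_d(\varepsilon_\rho)$ from \cref{eq:clock-shot-exact-app} under exact inversion, and \cref{thm:inexact-clock} adds $2\varepsilon_{\rm inv}$ for inexact inversion. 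Taking positive parts and the maximum over $t$ gives \cref{eq:finite-shot-population-excess}. A final union bound with the pilot event, which is independent or at least separately controlled, gives joint confidence at least $1-\delta_\chi-\delta_{\mathrm v}-\delta_{\mathrm{tom}}$.

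The main difficulty is the tomography constants: obtaining exactly the factor $4$ in the logarithm and the $4\sqrt{2\varepsilon}/\gamma$ slack requires careful accounting of the projection step and the two-argument fidelity perturbation. If the projection factor of $2$ cannot be absorbed, the bound should instead be stated with $\varepsilon_{\rho,\mathrm v}$ read as a bound on $\norm{\cdot}_1$-based trace distance.
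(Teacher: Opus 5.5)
Your proposal is correct and follows the paper's proof essentially step for step. Both arguments use:
\begin{itemize}
\item a union-bounded Hoeffding bound over the $2n_{\mathrm v}T|\mathcal Y|$ state--outcome pairs, together with the projection doubling;
\item fidelity continuity applied in both arguments, with the $2/\gamma$-Lipschitz and range bounds giving $\varepsilon_{\rm tom}$;
\item one-sided Hoeffding with a union bound over $t$ giving $\varepsilon_{\rm samp}$;
\item adding the clock error and taking positive parts.
\end{itemize}
Your closing hedge is unnecessary: the factor $2$ from the projection is exactly cancelled by the $\tfrac12$ in $\dtr$, which is how you and the paper already obtain $\dtr(\widehat\rho,\rho)\le\varepsilon_{\rho,\mathrm v}$.
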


\begin{corollary}[End-to-end finite-resource endpoint bound]
\label{cor:finite-resource-endpoint}
Under \cref{thm:finite-shot-clock,thm:finite-shot-validation}, set
\begin{equation}
  \varepsilon_{\rm FR}
  :=\eta_{\mathrm v}+\varepsilon_{\rm samp}
  +\varepsilon_{\rm tom}+\varepsilon_{\rm clk}.
\end{equation}
With joint probability at least $1-\delta_\chi-\delta_{\mathrm v}-\delta_{\mathrm{tom}}$,
\begin{equation}
  D_0
  \leq
  \left(\prod_{j=1}^T\kappa_j\right)D_T
  +\sum_{t=1}^T
  \left(\prod_{j=1}^{t-1}\kappa_j\right)
  \psi_\gamma(\Delta_t^{\mathrm{true}}+\varepsilon_{\rm FR}).
  \label{eq:finite-resource-endpoint}
\end{equation}
For input-independent latent-conditioned CPTP kernels, $\kappa_t\leq1$, giving the corresponding worst-step bound with $\varepsilon_{\rm FR}$ in place of $\varepsilon_{\rm cal}$.
\end{corollary}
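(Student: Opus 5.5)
The plan is to insert the finite-resource population bound of \cref{thm:finite-shot-validation} into the compositional machinery of Appendix~\ref{app:endpoint}, in exactly the way \cref{thm:endpoint-certificate} uses \cref{thm:generalization}.

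\textbf{Step 1: a joint high-probability event.} Work on the intersection of three events: the pilot tomography event of \cref{thm:finite-shot-clock} (failure probability $\delta_\chi$), the validation sampling event (failure $\delta_{\mathrm v}$), and the validation tomography event (failure $\delta_{\mathrm{tom}}$). A union bound gives joint probability at least $1-\delta_\chi-\delta_{\mathrm v}-\delta_{\mathrm{tom}}$. On this event, \cref{eq:finite-shot-population-excess} yields, simultaneously for all $t$,
\[
L_{\gamma,t}(\widehat\theta)\le \Delta_t^{\mathrm{true}}+\varepsilon_{\rm FR}.
\]

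\textbf{Step 2: local distributional recovery.} Apply \cref{prop:local-distribution} with $a_t=\Delta_t^{\mathrm{true}}+\varepsilon_{\rm FR}$. Here $\mu_t$ is the target law on the selected pilot grid, and $\Delta_t^{\mathrm{true}}$ is the true decrement on that grid. This gives
\[
\Wtr(\mu_{t-1},K_{\widehat\theta,t\#}\mu_t)\le\psi_\gamma(\Delta_t^{\mathrm{true}}+\varepsilon_{\rm FR}).
\]
The proposition needs no hypothesis other than the clipped-risk bound, and it holds pointwise on the event.

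\textbf{Step 3: the recursion.} Use the triangle inequality together with \cref{lem:dobrushin-pushforward}:
\[
D_{t-1}\le \psi_\gamma(\Delta_t^{\mathrm{true}}+\varepsilon_{\rm FR})+\kappa_tD_t .
\]
Unrolling this from $t=T$ down to $t=1$ gives \cref{eq:finite-resource-endpoint}. For input-independent latent-conditioned CPTP kernels, the latent-channel nonexpansiveness argument gives $\kappa_t\le1$, which reduces the bound to the sum form.

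\textbf{Main obstacle: measurability and conditioning.} $\widehat\theta$ and the pilot grid must be fixed before validation, so that the validation guarantees hold conditionally on them. The deterministic steps 2--3 are then applied on the resulting event. A secondary point is that $\mu_t$ must be defined on the same estimated grid to which $\Delta_t^{\mathrm{true}}$ refers.
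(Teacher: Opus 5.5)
Your proposal is correct and matches the paper's proof. The paper likewise works on the joint pilot/validation event to get $L_{\gamma,t}(\widehat\theta)\le\Delta_t^{\mathrm{true}}+\varepsilon_{\rm FR}$ for all $t$, and then says the proof of the compositional endpoint bound applies verbatim; that proof is local distributional recovery, the triangle inequality and the Dobrushin pushforward, which is exactly your Steps 2--3. The one point you leave implicit is that the worst-step form $D_0\le D_T+T\psi_\gamma(\max_t\Delta_t^{\mathrm{true}}+\varepsilon_{\rm FR})$ also uses that $\psi_\gamma$ is nondecreasing, as in the paper's worst-step argument.
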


\subsection{Proof of the finite-shot clock theorem}
For fixed $x,y$, Hoeffding's inequality gives
\begin{equation}
  \Pr\!\left(\abs{\widehat q_{xy}-q_{xy}}>a\right)
  \leq2e^{-2N_\chi a^2}.
\end{equation}
A union bound over $m|\mathcal Y|$ pairs, with
\begin{equation}
  a=\sqrt{\frac{\log(2m|\mathcal Y|/\delta_\chi)}{2N_\chi}},
\end{equation}
shows that all frequency errors are at most $a$ with probability $1-\delta_\chi$. On this event,
\begin{equation}
  \norm{\widetilde\rho_x-\rho_x}_1
  \leq\sum_{y\in\mathcal Y}\abs{\widehat q_{xy}-q_{xy}}\norm{F_y}_1
  \leq\Lambda_{\mathsf M}a.
\end{equation}
Because $\widehat\rho_x$ is a trace-norm projection onto the compact state space and the true state is feasible,
\begin{align}
  \norm{\widehat\rho_x-\rho_x}_1
  &\leq\norm{\widehat\rho_x-\widetilde\rho_x}_1
  +\norm{\widetilde\rho_x-\rho_x}_1\\
  &\leq2\norm{\widetilde\rho_x-\rho_x}_1.
\end{align}
Thus $\dtr(\widehat\rho_x,\rho_x)\leq\Lambda_{\mathsf M}a$, capped by one. Contractivity of trace distance under $\cD_\lambda$ gives the same bound for all $\lambda$. Convexity of the trace norm gives
\begin{equation}
  \dtr\!\left(
  \sum_xp_x\widehat\rho_x(\lambda),
  \sum_xp_x\rho_x(\lambda)
  \right)
  \leq\sum_xp_x\dtr(\widehat\rho_x(\lambda),\rho_x(\lambda))
  \leq\varepsilon_\rho.
\end{equation}
The sharp entropy-continuity inequality of \citet{audenaert2007sharp}, maximized over trace distances no larger than $\varepsilon_\rho$, gives
\begin{equation}
  \abs{S(\widehat\rho)-S(\rho)}\leq\mathfrak f_d(\varepsilon_\rho).
\end{equation}
Applying it once to the average state and once to every ensemble member yields \cref{eq:uniform-chi-shot-bound}. Finally, $|\widehat\Delta_{\rm tot}-\Delta_{\rm tot}|\leq4\mathfrak f_d(\varepsilon_\rho)$, so $\Delta_{\rm tot}>4\mathfrak f_d(\varepsilon_\rho)$ implies $\widehat\Delta_{\rm tot}>0$. The forward-clock theorem applied to the reconstructed nontrivial ensemble then gives a unique estimated grid, and \cref{prop:clock-estimation} with $\varepsilon_\chi=2\mathfrak f_d(\varepsilon_\rho)$ gives \cref{eq:clock-shot-exact-app}.
\hfill$\square$

\subsection{Proof of the finite-shot validation theorem}
There are at most $2n_{\mathrm v}T$ states to reconstruct. Repeating the preceding Hoeffding--projection argument with a union bound over $2n_{\mathrm v}T|\mathcal Y|$ state--outcome pairs gives, with probability at least $1-\delta_{\mathrm{tom}}$, trace-distance error at most $\varepsilon_{\rho,{\rm v}}$ for every reconstructed state. The fidelity continuity estimate \cref{eq:fidelity-continuity}, applied to each argument in turn, gives
\begin{equation}
  \abs{\Fr(\rho,\sigma)-\Fr(\widehat\rho,\widehat\sigma)}
  \leq2\sqrt{2\varepsilon_{\rho,{\rm v}}}.
\end{equation}
Because $f\mapsto-2\log\max\{\gamma,f\}$ is $2/\gamma$-Lipschitz and has range $[0,2\log(1/\gamma)]$, every tomographic loss differs from its ideal counterpart by at most $\varepsilon_{\rm tom}$. Conditional on the fixed trained model, Hoeffding's inequality and a union bound over timesteps give
\begin{equation}
  L_{\gamma,t}(\widehat\theta)
  \leq\widehat L_{\gamma,t}^{\mathrm{ideal}}(\widehat\theta)
  +\varepsilon_{\rm samp}
\end{equation}
simultaneously for all $t$, with probability $1-\delta_{\mathrm v}$. Combining the two events gives \cref{eq:finite-shot-validation-bound}. Adding the pilot budget error and taking positive parts proves \cref{eq:finite-shot-population-excess}.
\hfill$\square$

\subsection{Proof of the finite-resource endpoint corollary}
On the joint pilot and validation event, \cref{eq:finite-shot-population-excess} gives
\begin{equation}
  L_{\gamma,t}(\widehat\theta)\leq\Delta_t^{\mathrm{true}}+\varepsilon_{\rm FR}
\end{equation}
for every $t$. The proof of \cref{thm:endpoint-certificate} then applies verbatim.
\hfill$\square$

\section{Restricted-channel and Choi certificates}
\label{app:architecture}
Universal recoverability is an unrestricted-channel statement. This section quantifies the loss incurred by a restricted Stinespring ansatz and gives a post-training channel certificate.

For a stochastic model, the corresponding barycentric (average-output) channel is
\begin{equation}
  \overline{\cR}_{\theta,t}(\rho)
  :=\int_{\mathcal Z}\cR_{\theta,t,z}(\rho)\pi_t(\dd z).
  \label{eq:barycentric-channel-main}
\end{equation}
It is CPTP but does not determine the full state-valued law $K_{\theta,t}(\rho,\cdot)$.

\begin{theorem}[Restricted-class attainability from Stinespring approximation]
\label{thm:stinespring-approximation}
Assume $V_{\theta,t}$ and $V_t^{\mathrm U}$ are Stinespring isometries into a common dilation space $Q\otimes E$, and let $\mathsf U(E)$ be the unitary group on the environment. For any $\theta$ and timestep $t$, set
\begin{equation}
  \varepsilon_{{\rm St},t}(\theta):=
  \inf_{W\in\mathsf U(E)}
  \norm{(\id_Q\otimes W)V_{\theta,t}-V_t^{\mathrm U}}_\infty,
  \qquad
  \varepsilon_{\rm app}:=\inf_{\theta\in[-\pi,\pi]^P}\max_{1\le t\le T}\varepsilon_{{\rm St},t}(\theta).
  \label{eq:restricted-approx-radius}
\end{equation}
Then
\begin{equation}
  \half\norm{\cR_{\theta,t}-\cR_t^{\mathrm U}}_\diamond
  \leq \varepsilon_{{\rm St},t}(\theta).
  \label{eq:stinespring-to-diamond}
\end{equation}
Let
\begin{equation}
  \delta^{\rm rel}_{x,t}:=D(\rho_{x,t-1}\Vert\bar\rho_{t-1})
  -D(\rho_{x,t}\Vert\bar\rho_t),
  \qquad \Delta_t=\sum_xp_x\delta^{\rm rel}_{x,t}.
\end{equation}
For a deterministic model, interpret $L_{\gamma,t}$ using a degenerate latent law concentrated on the single branch. Then
\begin{align}
  \Fr(\rho_{x,t-1},\cR_{\theta,t}(\rho_{x,t}))
  &\geq\pos{e^{-\delta^{\rm rel}_{x,t}/2}-\sqrt{2\varepsilon_{{\rm St},t}(\theta)}},
  \label{eq:individual-approx-fidelity}\\
  L_{\gamma,t}(\theta)
  &\leq \mathcal A_{\gamma,t}(\varepsilon_{{\rm St},t}(\theta)),
  \label{eq:restricted-class-risk}
\end{align}
where
\begin{equation}
  \mathcal A_{\gamma,t}(\varepsilon)
  :=\sum_xp_x\left[-2\log\max\!\left\{
  \gamma,\pos{e^{-\delta^{\rm rel}_{x,t}/2}-\sqrt{2\varepsilon}}
  \right\}\right].
  \label{eq:A-approximation}
\end{equation}
If the infimum defining $\varepsilon_{\rm app}$ in \cref{eq:restricted-approx-radius} is attained, a minimizing shared parameter satisfies $\varepsilon_{{\rm St},t}(\theta)\leq\varepsilon_{\rm app}$ for every $t$. In general, for every $\upsilon>0$ there exists $\theta_\upsilon$ with $\max_t \varepsilon_{{\rm St},t}(\theta_\upsilon)\leq\varepsilon_{\rm app}+\upsilon$. At $\varepsilon=0$, $\mathcal A_{\gamma,t}(0)\leq\Delta_t$.
\end{theorem}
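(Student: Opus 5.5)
The proof has four steps: a Stinespring-to-diamond estimate, a statewise fidelity bound via universal recovery and fidelity continuity, a pointwise transfer to the clipped loss, and a short parameter argument for $\varepsilon_{\rm app}$.

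\textbf{Step 1: diamond-norm bound.} I would first prove \cref{eq:stinespring-to-diamond}. Fix any environment unitary $W$ and set $\widetilde V:=(\id_Q\otimes W)V_{\theta,t}$. Since $W$ acts only on the traced-out environment, $\widetilde V$ is again a Stinespring isometry of $\cR_{\theta,t}$. For any input $\omega$ on $A\otimes R$, the difference of the two dilated outputs splits as
\[
(\widetilde V\otimes\id_R)\omega(\widetilde V\otimes\id_R)^\dagger-(V^{\mathrm U}_t\otimes\id_R)\omega(V^{\mathrm U}_t\otimes\id_R)^\dagger
=(\widetilde V-V^{\mathrm U}_t)\omega\widetilde V^\dagger+V^{\mathrm U}_t\omega(\widetilde V-V^{\mathrm U}_t)^\dagger,
\]
with the $\otimes\id_R$ factors suppressed. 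By H\"older, its trace norm is at most $2\norm{\widetilde V-V^{\mathrm U}_t}_\infty$. Partial trace over $E$ is trace-norm contractive, so $\half\norm{\cR_{\theta,t}-\cR^{\mathrm U}_t}_\diamond\le\norm{\widetilde V-V^{\mathrm U}_t}_\infty$. Taking the infimum over $W$ gives the claim.

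\textbf{Step 2: individual fidelity bound \cref{eq:individual-approx-fidelity}.} Here I would avoid the averaged route of \cref{prop:approximation} and work state by state. Apply the universal recovery inequality \cref{eq:universal-recovery} to the pair $(\rho_{x,t-1},\bar\rho_{t-1})$ with channel $\cN_t$. Exponentiating gives $\Fr(\rho_{x,t-1},\cR^{\mathrm U}_t(\rho_{x,t}))\ge e^{-\delta^{\rm rel}_{x,t}/2}$, where the support condition holds as in the proof of \cref{thm:recovery-budget}. The diamond-norm bound from Step 1 controls the output trace distance for every input, so $\dtr(\cR_{\theta,t}(\rho_{x,t}),\cR^{\mathrm U}_t(\rho_{x,t}))\le\varepsilon_{{\rm St},t}$. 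The fidelity continuity estimate \cref{eq:fidelity-continuity} then costs at most $\sqrt{2\varepsilon_{{\rm St},t}}$. Since fidelity is nonnegative, we may take the positive part.

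\textbf{Step 3: clipped risk \cref{eq:restricted-class-risk}.} The map $f\mapsto-2\log\max\{\gamma,f\}$ is nonincreasing, so the lower bound from Step 2 transfers pointwise to the clipped loss. Averaging over $x$ with weights $p_x$, under the degenerate latent law, gives $L_{\gamma,t}(\theta)\le\mathcal A_{\gamma,t}(\varepsilon_{{\rm St},t}(\theta))$. The function $\mathcal A_{\gamma,t}$ is nondecreasing in $\varepsilon$, so any upper bound on $\varepsilon_{{\rm St},t}$ may be substituted.

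\textbf{Step 4: the shared-parameter statements.} Each map $\theta\mapsto\varepsilon_{{\rm St},t}(\theta)$ is Lipschitz by \cref{eq:unitary-parameter}, so $\max_t\varepsilon_{{\rm St},t}$ is continuous. On the compact cube $[-\pi,\pi]^P$ the infimum defining $\varepsilon_{\rm app}$ is therefore attained. Even without using compactness, the definition of an infimum already yields some $\theta_\upsilon$ with $\max_t\varepsilon_{{\rm St},t}(\theta_\upsilon)\le\varepsilon_{\rm app}+\upsilon$. If a minimizer exists, it satisfies the bound at every $t$ simply because a maximum dominates each term.

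\textbf{Step 5: the $\varepsilon=0$ case.} Each term of $\mathcal A_{\gamma,t}(0)$ is $-2\log\max\{\gamma,e^{-\delta^{\rm rel}_{x,t}/2}\}\le\delta^{\rm rel}_{x,t}$. Summing with weights $p_x$ gives $\mathcal A_{\gamma,t}(0)\le\Delta_t$, using the relative-entropy expansion \cref{eq:delta-relative-entropy-expansion}.

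\textbf{Main obstacle.} The delicate point is Step 2. It requires the pointwise universal recovery bound for each $x$ individually, not merely the ensemble average used elsewhere in the paper, and it must be checked that the completion \cref{eq:recovery-completion} does not alter outputs on the relevant supports.
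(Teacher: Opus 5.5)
Your proposal is correct and follows the paper's proof essentially step for step: Stinespring distance bounds half-diamond distance, per-label universal recovery gives $e^{-\delta^{\rm rel}_{x,t}/2}$, fidelity continuity costs $\sqrt{2\varepsilon}$, the clipped loss is monotone, and each summand of $\mathcal A_{\gamma,t}(0)$ equals $\min\{\delta^{\rm rel}_{x,t},2\log(1/\gamma)\}$. The only technical variation is in Step 1, where you bound the dilated-output difference by a H\"older splitting valid for arbitrary inputs, while the paper reduces to pure inputs and compares the dilated state vectors; both give the same constant. One caution: the attainment claim in your Step 4 relies on the bounded-generator estimate \cref{eq:unitary-parameter}, which is not a hypothesis of this theorem, since no continuity of $\theta\mapsto V_{\theta,t}$ is assumed. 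Nothing is lost, however, because the statement only asserts the conditional and $\upsilon$-approximate versions, and you obtain both directly from the definition of the infimum.
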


For a single timestep, an unrestricted Stinespring unitary with ancilla dimension at least $d^2$ can realize $\cR_t^{\mathrm U}$ exactly. A shared conditional model has zero approximation radius whenever it realizes the finite family $\{V_t^{\mathrm U}\}_{t=1}^T$ at the corresponding conditions.

For a linear map $\Phi:Q\to Q$, $\norm{\Phi}_\diamond$ denotes the standard completely bounded trace norm. Define its normalized Choi operator
\begin{equation}
  J(\Phi):=(\idmap_R\otimes\Phi)(\ketbra{\Omega}),
  \qquad
  \ket{\Omega}:=\frac1{\sqrt d}\sum_{i=1}^d\ket{i}^{R}\ket{i}^{Q}.
  \label{eq:normalized-choi}
\end{equation}
For a channel, $J(\Phi)$ is a density operator; this is the normalized Choi--Jamio{\l}kowski representation \citep{choi1975completely,watrous2009sdp}.

\begin{theorem}[Finite-shot Choi certificate for teacher approximation]
\label{thm:choi-validation}
Let $\widetilde\cR_t^{\mathrm U}$ be a numerically implemented teacher satisfying
\begin{equation}
  \dtr(J(\widetilde\cR_t^{\mathrm U}),J(\cR_t^{\mathrm U}))
  \leq\varepsilon_{{\rm teach},t}.
\end{equation}
After training a deterministic channel $\cR_{\widehat\theta,t}$, perform tomography on $N_J$ copies of the learned normalized Choi state at each timestep using an informationally complete POVM $\mathsf M_J=\{M_{J,y}\}_{y\in\mathcal Y_{\rm Choi}}$ on dimension $d^2$, with dual-frame condition number $\Lambda_J$. Let $\widehat J_t$ be the physical estimate and define
\begin{equation}
  \varepsilon_J:=\min\!\left\{1,
  \Lambda_J\sqrt{\frac{\log(2T|\mathcal Y_{\rm Choi}|/\delta_J)}{2N_J}}
  \right\}.
\end{equation}
Then, with probability at least $1-\delta_J$, simultaneously for all $t$,
\begin{equation}
  \half\norm{\cR_{\widehat\theta,t}-\cR_t^{\mathrm U}}_\diamond
  \leq\varepsilon_{\diamond,t}^{\rm Choi},
  \label{eq:choi-diamond-certificate}
\end{equation}
where
\begin{equation}
  \varepsilon_{\diamond,t}^{\rm Choi}
  :=\min\!\left\{1,
  d\left[
  \dtr(\widehat J_t,J(\widetilde\cR_t^{\mathrm U}))
  +\varepsilon_J+\varepsilon_{{\rm teach},t}
  \right]\right\}.
  \label{eq:choi-diamond-error}
\end{equation}
Substituting $\varepsilon_{\diamond,t}^{\rm Choi}$ into \cref{prop:approximation} gives a post-training average-fidelity certificate, and the same half-diamond control yields a clipped-risk certificate through \cref{eq:A-approximation}. For a stochastic model, applying the construction to the barycentric channel $\overline{\cR}_{\widehat\theta,t}$ certifies the averaged channel, while branchwise expected loss is validated directly by \cref{thm:finite-shot-validation}.
\end{theorem}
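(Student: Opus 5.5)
The plan is a chain of triangle inequalities, with Choi tomography supplying the one statistical ingredient. On the normalized Choi representation of \cref{eq:normalized-choi}, I would write
\[
\dtr\bigl(J(\cR_{\widehat\theta,t}),J(\cR_t^{\mathrm U})\bigr)
\le \dtr\bigl(J(\cR_{\widehat\theta,t}),\widehat J_t\bigr)
+\dtr\bigl(\widehat J_t,J(\widetilde\cR_t^{\mathrm U})\bigr)
+\dtr\bigl(J(\widetilde\cR_t^{\mathrm U}),J(\cR_t^{\mathrm U})\bigr).
\]
The last term is at most $\varepsilon_{{\rm teach},t}$ by hypothesis. The middle term is computable from the data, and it appears verbatim in \cref{eq:choi-diamond-error}. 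The first term is the tomography error.

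\textbf{Step 1: the tomography error.} I would repeat the Hoeffding--projection argument from the proof of \cref{thm:finite-shot-clock} on the $d^2$-dimensional Choi states. Hoeffding bounds each empirical outcome frequency, and a union bound runs over the $T|\mathcal Y_{\rm Choi}|$ pairs of timestep and outcome. This gives, with probability at least $1-\delta_J$ and simultaneously for all $t$, a linear-inversion error $\norm{\widetilde J_t-J(\cR_{\widehat\theta,t})}_1\le\Lambda_J a$, where
\[
a=\sqrt{\log(2T|\mathcal Y_{\rm Choi}|/\delta_J)/(2N_J)}.
\]
Next, the trace-norm projection onto the state set at most doubles the trace-norm error. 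In trace distance the projected estimate therefore satisfies $\dtr(\widehat J_t,J(\cR_{\widehat\theta,t}))\le\Lambda_J a$, and since trace distance is at most one this is capped, giving $\varepsilon_J$. The Choi state of a channel is a genuine state, so the true $J$ is feasible for the projection, as the argument requires.

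\textbf{Step 2: from Choi distance to diamond norm.} I would invoke the standard inequality $\norm{\Phi}_\diamond\le d\,\norm{J_{\rm unnorm}(\Phi)}_1/d=d\,\norm{J(\Phi)}_1$ for Hermiticity-preserving $\Phi$ with normalized Choi operator $J(\Phi)$, applied to $\Phi=\cR_{\widehat\theta,t}-\cR_t^{\mathrm U}$. To justify it, write any input $\ket\psi^{RQ}=(\id\otimes A)\sqrt d\ket\Omega$ with $\norm{A}_\infty\le1$ (after a transpose). Then
\[
(\idmap\otimes\Phi)(\ketbra\psi)=d\,(A\otimes\id)\,J(\Phi)\,(A\otimes\id)^\dagger,
\]
and Hölder's inequality bounds its trace norm by $d\,\norm{J(\Phi)}_1$. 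Halving both sides gives $\half\norm{\Phi}_\diamond\le d\,\dtr(J(\cR_{\widehat\theta,t}),J(\cR_t^{\mathrm U}))$. Since the half-diamond distance between channels never exceeds $1$, taking the minimum with $1$ costs nothing, and this yields \cref{eq:choi-diamond-certificate}.

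\textbf{Step 3: the consequences.} Substituting $\varepsilon_{\diamond,t}^{\rm Choi}$ for $\varepsilon_{\diamond,t}(\theta)$ in \cref{prop:approximation} gives the average-fidelity certificate. The clipped-risk certificate follows by reusing the statewise argument behind \cref{eq:individual-approx-fidelity} (diamond control bounds every output, then fidelity continuity \cref{eq:fidelity-continuity}) with $\varepsilon_{\diamond,t}^{\rm Choi}$ in place of $\varepsilon_{{\rm St},t}$, giving $\mathcal A_{\gamma,t}(\varepsilon_{\diamond,t}^{\rm Choi})$. For stochastic models, the barycentric channel of \cref{eq:barycentric-channel-main} is CPTP, so the whole argument applies to it unchanged.

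The main obstacle is getting the dimension factor in Step 2 exactly $d$ under the normalized convention, which requires careful bookkeeping of the $\sqrt d$ in $\ket\Omega$. Everything else reuses earlier arguments verbatim.
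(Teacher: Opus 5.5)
Your proposal is correct and follows essentially the same route as the paper. You use Hoeffding--projection Choi tomography union-bounded over the $T$ timesteps, a triangle inequality through $\widehat J_t$ and the numerical teacher, and the normalized-Choi bound $\norm{\Phi}_\diamond\le d\norm{J(\Phi)}_1$ with the cap at one; your $\sqrt d$ bookkeeping is sound, and your spelled-out fidelity and clipped-risk consequences only make explicit what the paper leaves as a remark.
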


\subsection{Proof of the Stinespring-approximation theorem}
Environment unitaries do not change the induced channel, so fix one attaining the infimum up to an arbitrarily small error and write the aligned isometries as $V$ and $W$. Because the difference of two channels is Hermiticity preserving, its diamond norm can be optimized on a pure input with a reference system of input dimension. For such an input, the trace distance between the two pure dilated outputs is at most the Euclidean distance of their state vectors, hence at most $\norm{V-W}_\infty$. Partial trace cannot increase trace distance. Taking the supremum gives
\begin{equation}
  \half\norm{\cR_{\theta,t}-\cR_t^{\mathrm U}}_\diamond
  \leq\norm{V-W}_\infty,
\end{equation}
consistent with the continuity theorem for Stinespring representations \citep{kretschmann2008continuity}. Applying \cref{eq:universal-recovery} to the individual pair indexed by $x$ gives
\begin{equation}
  \Fr(\rho_{x,t-1},\cR_t^{\mathrm U}(\rho_{x,t}))
  \geq e^{-\delta^{\rm rel}_{x,t}/2}.
\end{equation}
The fidelity-continuity argument in \cref{prop:approximation} then proves \cref{eq:individual-approx-fidelity}. Applying the clipped loss, averaging over $x$, and using its monotonic decrease in fidelity gives \cref{eq:restricted-class-risk}. At $\varepsilon=0$, each summand equals $\min\{\delta^{\rm rel}_{x,t},2\log(1/\gamma)\}$, whose average is at most $\Delta_t$.
\hfill$\square$

\subsection{A normalized-Choi norm lemma}
For a Hermiticity-preserving map $\Phi:Q\to Q$ and the normalized Choi operator in \cref{eq:normalized-choi},
\begin{equation}
  \norm{J(\Phi)}_1
  \leq\norm{\Phi}_\diamond
  \leq d\norm{J(\Phi)}_1.
  \label{eq:choi-diamond-norm-equivalence}
\end{equation}
The lower bound follows by evaluating the diamond norm on $\ketbra{\Omega}$. For the upper bound, any normalized pure reference-assisted input can be written as $(A\otimes\id)\ket{\Omega}$ with $\norm{A}_2^2=d$. Therefore
\begin{align}
  \norm{(\idmap\otimes\Phi)(\ketbra{\psi})}_1
  &=\norm{(A\otimes\id)J(\Phi)(A^\dagger\otimes\id)}_1\\
  &\leq\norm{A}_\infty^2\norm{J(\Phi)}_1
  \leq d\norm{J(\Phi)}_1.
\end{align}
For Hermiticity-preserving maps the diamond norm may be optimized on a pure input with a reference of input dimension, yielding the upper bound; see also \citet{watrous2009sdp}.

\subsection{Proof of the Choi-validation theorem}
The IC-tomography argument of \cref{thm:finite-shot-clock}, now in dimension $d^2$ and union-bounded over $T$ Choi states, gives
\begin{equation}
  \dtr(\widehat J_t,J(\cR_{\widehat\theta,t}))\leq\varepsilon_J
\end{equation}
for all $t$ with probability $1-\delta_J$. The triangle inequality then gives
\begin{align}
  \dtr(J(\cR_{\widehat\theta,t}),J(\cR_t^{\mathrm U}))
  \leq{}&\dtr(\widehat J_t,J(\widetilde\cR_t^{\mathrm U}))
  +\varepsilon_J+\varepsilon_{{\rm teach},t}.
\end{align}
Applying the upper bound in \cref{eq:choi-diamond-norm-equivalence} to the difference of the channels proves \cref{eq:choi-diamond-certificate}; the cap at one uses the fact that half the diamond distance between channels is at most one.
\hfill$\square$

\clearpage
\section{Additional validation diagnostics}
\label{app:validation-diagnostics}

This section collects additional quantitative checks supporting the main-text calibration results: a four-qubit generation view, a monotone-proxy identifiability audit, the numerical common-channel SDP audit, matched local trace-error intervention effects, a frozen broader-field stress test, and an endpoint-certificate audit. These checks reuse the frozen runs and saved evaluations rather than introducing new model selection.

\subsection{Additional four-qubit generation view}
\begin{figure}[htbp]
  \centering
  \includegraphics[width=.62\linewidth]{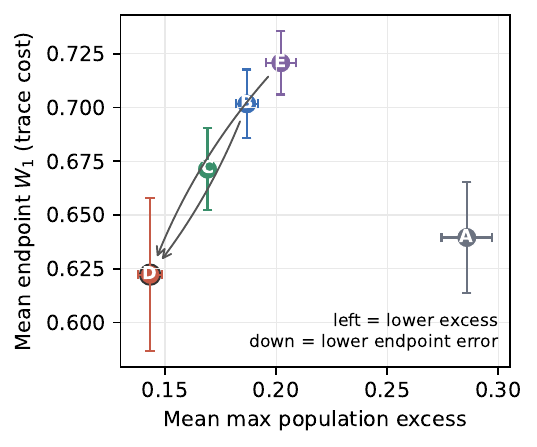}
  \caption{\textbf{Primary four-qubit calibration--generation comparison.} Lower maximum excess is better to the left and lower endpoint $\Wtr$ is better downward. D has the lowest mean excess and endpoint error among the five variants.}
  \label{fig:fourq-pareto}
\end{figure}

\label{app:additional-validation}

\subsection{Monotone-proxy identifiability audit}
\label{app:proxy-audit}
For transparency, we compare the A/E-only within-run Spearman alignment of $\Delta_t$ with several simple monotone coordinates: timestep $t$, cumulative corruption $1-\lambda_t$, one-step depolarizing strength $q_t=1-\lambda_t/\lambda_{t-1}$, and $\Delta\lambda_t=\lambda_{t-1}-\lambda_t$. With prespecified signed orientations, the first three proxies are strongly negatively correlated with reverse loss whereas $\Delta_t$ is positively correlated; this sign contrast is not interpreted as predictor superiority because orientation is arbitrary. More importantly, with only $T=8$ monotone steps, ranked $\Delta_t$ is collinear with ranked $t$ or $q_t$ in all $60/60$ controlled two-qubit A/E runs, $17/20$ primary four-qubit A/E runs, and $18/20$ broader-field A/E runs. The corresponding partial-rank effect is therefore undefined in most runs, and the few estimable values are not positive. We consequently make no claim that the observed Spearman alignment identifies incremental rank information beyond timestep or noise strength. Its role is the more limited mechanism check that the information decrement orders the realized reverse-step difficulty along the tested paths.

\subsection{Numerical common-channel SDP audit}
For the controlled two-qubit ensembles, deterministic de-duplication leaves nine unique target/grid objects (three difficulty levels times linear, equal-cq-MI, and cosine grids). For each of their eight timesteps we solve
\begin{equation}
 \min_{\cR\,\mathrm{CPTP}}\;\sum_xp_x\dtr(\rho_{x,t-1},\cR(\rho_{x,t}))
 \label{eq:numerical-recovery-sdp}
\end{equation}
using a Choi-matrix semidefinite program. The primary SCS solves use tolerance $10^{-6}$ and are cross-checked with Clarabel at one representative timestep for each target/grid object; the maximum cross-solver objective discrepancy is $7.15\times10^{-7}$. All $72$ timestep optima satisfy the continuity-based component of the intrinsic-recovery bracket at declared tolerance $7\times10^{-5}$. The minimum/median/maximum gaps above the Holevo-continuity lower bound are $0.00970/0.04329/0.18336$, while the corresponding gaps below the universal upper bound are $0.00822/0.16294/0.36334$.

\begin{figure}[htbp]
  \centering
  \includegraphics[width=.80\linewidth]{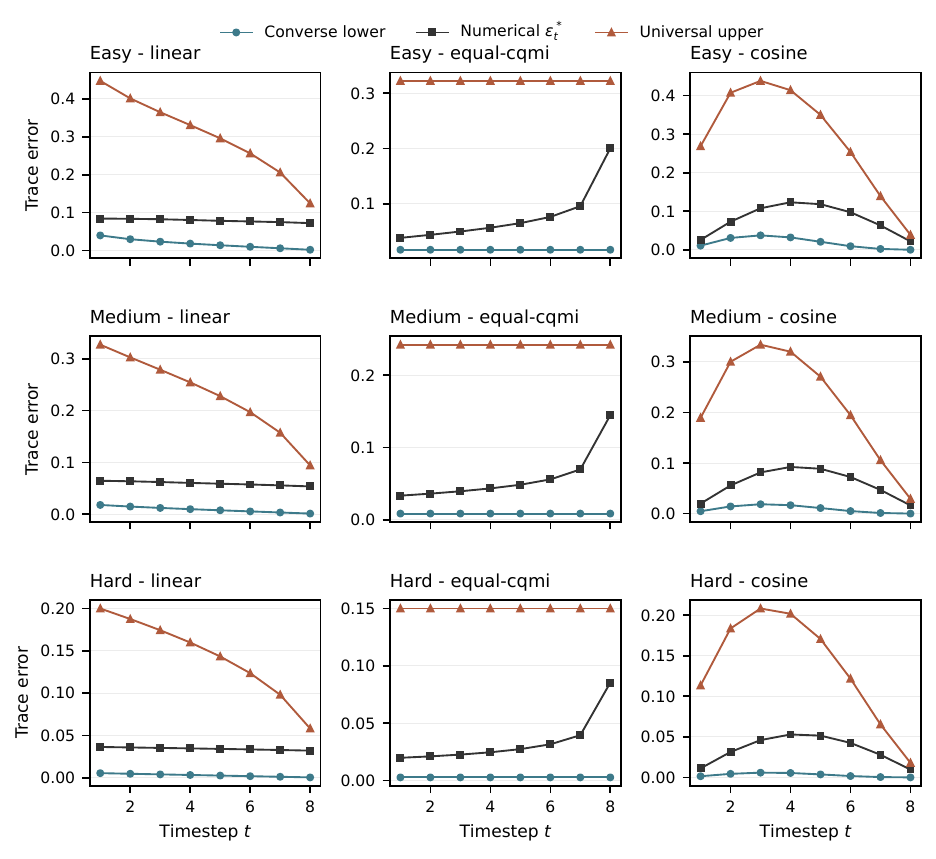}
  \caption{\textbf{Numerical two-qubit continuity-recovery sandwich.} The Holevo-continuity lower bound, numerical common-channel optimum, and universal upper bound are shown for nine de-duplicated target/grid objects.}
  \label{fig:twoq-sandwich}
\end{figure}

\subsection{Independent local trace-error effects}
The paired intervention effects below use constrained minus matched unconstrained runs; negative values denote improvement. Maximum local trace error is $\max_t\max_{x,z}\dtr(\rho_{x,t-1},\cR_{t,z}(\rho_{x,t}))$, while mean local trace error averages first over the finite $(x,z)$ support and then over timesteps.
\begin{table}[htbp]
\centering
\caption{\textbf{Constraint intervention on a non-optimized local trace metric.} Entries are paired mean changes with 95\% matched-seed bootstrap intervals; negative values denote improvement.}
\label{tab:local-trace-effects-appendix}
\resizebox{\linewidth}{!}{%
\begin{tabular}{llrr}
\toprule
Study & Comparison & Max local trace change & Mean local trace change\\
\midrule
Two-qubit & C--A & $-0.00234\;[-0.00327,-0.00148]$ & $-0.00088\;[-0.00097,-0.00080]$\\
Two-qubit & D--B & $-0.04185\;[-0.04855,-0.03479]$ & $-0.04892\;[-0.05891,-0.03895]$\\
Primary 4q & C--A & $-0.21988\;[-0.25443,-0.18205]$ & $-0.03021\;[-0.03864,-0.02070]$\\
Primary 4q & D--B & $-0.11896\;[-0.15414,-0.08746]$ & $-0.04274\;[-0.05259,-0.03363]$\\
Broader 4q & C--A & $-0.20044\;[-0.23590,-0.16485]$ & $-0.02684\;[-0.03712,-0.01734]$\\
Broader 4q & D--B & $-0.17546\;[-0.22145,-0.12896]$ & $-0.04644\;[-0.05913,-0.03337]$\\
\bottomrule
\end{tabular}}
\end{table}
All twelve displayed intervals exclude zero. For maximum local trace error, the improvement counts are $25/30$ for two-qubit C--A, $30/30$ for two-qubit D--B, and $10/10$ for each four-qubit comparison. Thus the constraint intervention improves a separately evaluated local quantity across every retained regime, including the frozen broader-field shift.

\begin{figure}[htbp]
  \centering
  \includegraphics[width=.80\linewidth]{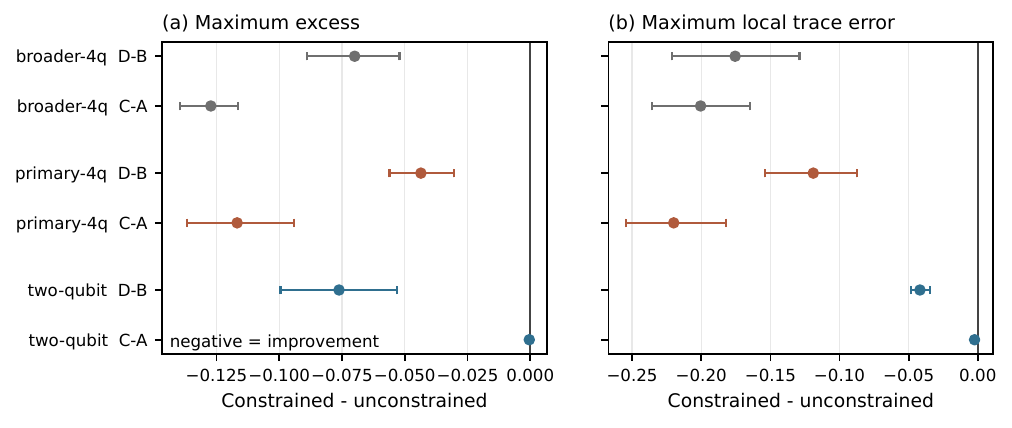}
  \caption{\textbf{Constraint effects on the defining residual and a non-optimized local outcome.} Negative values indicate improvement.}
  \label{fig:constraint-local-trace}
\end{figure}

\subsection{Frozen broader-field four-qubit stress test}
\label{app:broader-field-results}
The broader-field experiment freezes the primary four-qubit architecture, optimization, data split, and A--E definitions while changing only the TFIM field range from $g\sim U[0.2,0.4]$ to $g\sim U[0.6,1.4]$. The target pairwise trace-distance diversity is $2.01\times$ larger and the target observable spread is $4.94\times$ larger than in the primary regime. Despite this shift, A/E information--loss alignment remains strong ($\rho=0.880$, 95\% CI $[0.824,0.933]$, $20/20$ positive). Constraint enforcement also remains effective: C--A changes maximum excess by $-0.1272$ (CI $[-0.1397,-0.1164]$) and D--B by $-0.0699$ (CI $[-0.0895,-0.0520]$), while the maximum local trace-error changes in \cref{tab:local-trace-effects-appendix} are favorable for all ten matched seeds in both comparisons.

The endpoint ordering, however, changes: relative to cosine E, D has larger endpoint $\Wtr$ by $0.0552$ (CI $[0.0097,0.1051]$). This frozen-capacity reversal is useful rather than contradictory: it shows that the information budget and its enforcement remain stable local recoverability diagnostics even when they do not determine the distribution-level ranking. The observation therefore provides an empirical stress test of the recoverability--coverage distinction instead of an additional claim of uniform endpoint superiority.

\subsection{Endpoint-certificate audit}
\label{app:certificate-audit}
We also restore the finite-ensemble endpoint audit used to check the local-to-global propagation numerically. For each retained run, the \emph{direct} bound substitutes the exactly evaluated finite-support local losses into the one-step trace-Wasserstein propagation, whereas the \emph{budget-calibrated} bound uses the corresponding exact population maximum excess in the worst-step form of Appendix~\ref{app:endpoint}. The audit is a consistency check on matched finite target ensembles; it is not an independent held-out generalization theorem.

\begin{table}[htbp]
\centering
\caption{\textbf{Finite-ensemble endpoint-certificate audit.} No retained run violates either reported bound. Ratios are median bound divided by observed endpoint $\Wtr$; values above one quantify conservativeness.}
\label{tab:endpoint-certificate-audit}
\begin{tabular}{lrrr}
\toprule
Regime & Runs / violations & Direct ratio & Budget ratio\\
\midrule
Controlled 2q & $150/0$ & $3.77$ & $5.68$\\
Primary 4q TFIM & $50/0$ & $3.03$ & $4.85$\\
Broader-field 4q & $50/0$ & $3.28$ & $5.04$\\
\bottomrule
\end{tabular}
\end{table}
Across all $250$ retained runs, the observed endpoint error lies on the certified side of both finite-ensemble bounds. The bounds are therefore numerically valid in these regimes but conservative, typically exceeding the observed error by roughly a factor of three to six. This distinction is important: the theorem provides one-sided compositional control, while the recoverability--coverage separation explains why local-budget propagation need not tightly rank stochastic generation quality.

\clearpage
\section{Additional one-qubit sanity checks}
\label{app:oneq-sanity}
The main text uses the entangled two-qubit study as the controlled mechanism test and the four-qubit TFIM study as the scaling test. The original noncommuting one-qubit experiments provide an additional lower-dimensional sanity check: forward information loss strongly aligns with empirical reverse-step loss and budget constraints reduce the calibrated excess.

\begin{figure}[htbp]
  \centering
  \includegraphics[width=.80\linewidth]{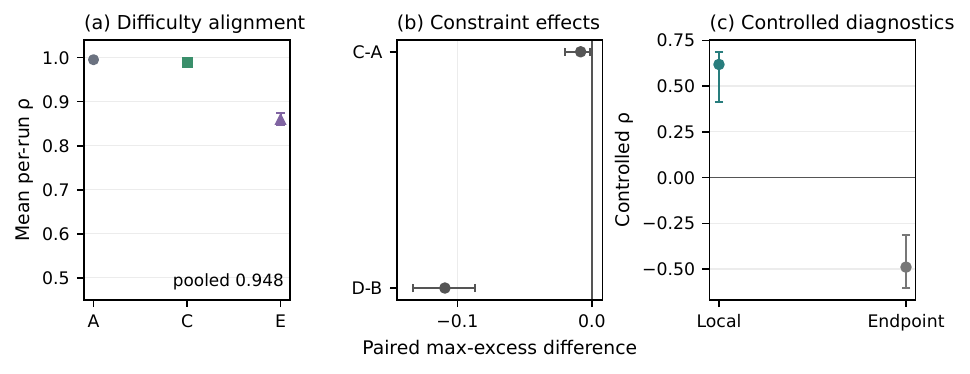}
  \caption{\textbf{Additional one-qubit sanity checks.} The one-qubit subset reproduces the calibration mechanism: strong information--loss alignment and reduced excess under the constraints. These runs are reported separately from the main two-/four-qubit statistics.}
  \label{fig:oneq-sanity}
\end{figure}

\section{Scope and limitations}
\label{app:scope}
The theory is stated for finite-dimensional labeled ensembles and CPTP forward/reverse dynamics. The cleanest scheduling result is for the fixed depolarizing path, where the cq-information clock is invertible; equal information is minimax for the information decrement and the associated worst-step-derived propagation bound, not a claim that it exactly minimizes the unrestricted optimal recovery error for every ensemble geometry. The attainable recovery budget is an unrestricted common-channel reference, while \cref{thm:stinespring-approximation,thm:choi-validation} quantify approximation by restricted Stinespring implementations without guaranteeing successful nonconvex optimization or exact feasibility for a shared ansatz. The uniform bounded-generator result applies directly to the reported four-qubit architecture; the controlled two-qubit study uses a separate CPTP parameterization and evaluates its finite-ensemble calibration quantities exactly.

The finite-sample and tomography guarantees are deliberately worst-case and may be conservative. In particular, the explicit informationally complete tomography route in Appendix~\ref{app:finite-shot} scales poorly with Hilbert-space dimension; structured tomography, classical shadows, or simulator access would be needed for substantially larger systems. Likewise, the empirical ``population'' calibration quantities in the TFIM study are exact expectations over the finite training ensemble and latent support, not guarantees over an unseen continuous field distribution. The endpoint theorem is a one-sided trace-Wasserstein guarantee and need not tightly rank every task-specific endpoint metric, as the broader-field stress test illustrates. Within this scope, the global ensemble-label cq-information coordinate complements local spatial-QCMI recovery constructions by connecting forward information loss, common-channel recoverability, and stochastic generative coverage in a single finite-dimensional framework.

\section{Detailed experimental protocol}
\label{app:experimental-protocol}

\subsection{Controlled two-qubit study}
The controlled study uses entangled two-qubit mixed-state ensembles at three fixed difficulty levels. Each condition contains $8$ equiprobable target states, $T=8$ reverse steps, complete terminal depolarization, and $16$ latent categories. Five variants and ten matched seeds give $3\times5\times10=150$ runs. All variants use the same fixed cq-MI time features $(u_t,h_t)$, latent law, terminal prior, distributional objective, optimizer family, and matched initialization and sampling streams. The executed eight-parameter reverse model is a CPTP rotation-plus-reset density-matrix map: six parameters enter a joint Hamiltonian exponential and two enter sigmoid mixture weights. Its finite-ensemble calibration quantities are evaluated exactly over the label and latent supports.

All five variants use the same $150$-step base phase with primal learning rate $0.025$ and the same fixed $300$-step polish with primal learning rate $0.005$. Constrained variants C and D additionally use one nonnegative multiplier per timestep with dual learning rate $\eta_\alpha=0.20$. All one- and two-qubit runs use the common clipping level $\gamma=10^{-2}$. Endpoint distributions use $2048$ generated trajectories. Population calibration quantities are evaluated exactly over the fixed finite label and latent supports, and condition-controlled diagnostic correlations remove difficulty$\times$variant offsets.

\subsection{Four-qubit TFIM study}
The scaling study uses ground states of the open-boundary four-qubit transverse-field Ising Hamiltonian
\[
H(g)=-\sum_{i=1}^{3}Z_iZ_{i+1}-g\sum_{i=1}^{4}X_i.
\]
For each dataset seed in $\{0,1,\ldots,9\}$, the transverse field is sampled uniformly from $[0.2,0.4]$; the first $100$ generated ground states form the training ensemble and the next $100$ form a held-out ensemble. The executed clock uses these states equiprobably, so its mean-based implementation exactly matches $p_x=1/m$ in the theoretical cq-MI. The forward process is the same cumulative depolarizing family as in the controlled experiments, with $T=8$ and complete terminal depolarization. All variants use $16$ latent categories and the common clipping level $\gamma=10^{-2}$.

Model capacity is selected once on a disjoint pilot seed ($999$), without inspecting held-out endpoint accuracy metrics such as endpoint $\Wtr$ or the TFIM observable. The pre-specified candidates have depths $8,16,24,32$; we choose the smallest depth whose pilot run has maximum population excess at most $0.15$, generated-to-target diversity ratio at least $0.25$, no numerical instability, and non-identical latent outputs. The diversity and non-identical-output gates are coarse non-collapse checks rather than optimization of the reported endpoint accuracy metrics. Depth $8$ is the first candidate satisfying these gates, giving two Stinespring ancillas and $144$ trainable bounded-generator rotations. Each trainable coordinate is an $R_y$ or $R_z$ angle and therefore has an exactly channel-equivalent representative in $[-\pi,\pi]$ by $2\pi$ periodicity; the fixed time/latent conditioning contains no trainable classical parameters. Thus this four-qubit architecture is a literal instance of the bounded-generator class used in the uniform generalization theorem. This architecture is then frozen for A--E and all final seeds. The depth-$128$ result reported in \cref{sec:exp-capacity-external} is a separate capacity extension of D, not a replacement for this frozen A--E architecture; it changes only the repeated Stinespring depth and is evaluated on the same ten primary seeds without using held-out endpoint metrics for stability selection.

\begin{table}[htbp]
\centering
\caption{\textbf{Pre-specified four-qubit capacity-selection rule.} Architecture selection uses the disjoint pilot seed only and never uses held-out endpoint $\Wtr$ or the TFIM observable; the diversity and latent-output gates are coarse non-collapse checks. The smallest candidate passing every gate is frozen for all A--E runs.}
\label{tab:capacity-selection}
\begin{tabular}{ll}
\toprule
Pilot item & Pre-specified value / outcome\\
\midrule
Candidate depths & $\{8,16,24,32\}$\\
Selection seed & $999$ (excluded from final statistics)\\
Maximum-excess gate & $\leq 0.15$\\
Diversity-ratio gate & $\geq 0.25$\\
Additional gates & finite numerics; non-identical latent outputs\\
Selected architecture & depth $8$, $144$ trainable rotations\\
Selection uses held-out $\Wtr$/observable? & no\\
\bottomrule
\end{tabular}
\end{table} Every final run uses $600$ base Adam steps at learning rate $0.01$ followed by $400$ polish steps at learning rate $0.002$; constrained variants use dual learning rate $0.20$ throughout. The effective batch size is $16$ (microbatch $16$ with accumulation factor $1$). Endpoint evaluation uses $1024$ generated samples. The reported model is the final polish iterate, and the held-out set is reserved for the reported generation metrics rather than model-capacity selection, training, or checkpointing. Calibration population quantities are exact expectations over the $100$-state training ensemble and the exhaustive $16$-point latent support. Calibration and endpoint diagnostics use matched finite target ensembles for the decrement, local recovery loss, and observed endpoint distance. The recorded median total runtime is $160$ seconds per run, with peak RAM fraction below $0.20$ on the laptop environment used for the study.

For endpoint evaluation, the TFIM table reports a linear Hilbert--Schmidt MMD$^2$, distinct from the Gaussian-mixture MMD$^2$ used for training. With $k_{\rm HS}(\rho,\sigma)=\operatorname{Re}\Tr(\rho\sigma)$, the reported estimator is the unbiased U-statistic
\[
\widehat{\mathrm{MMD}}_{\rm HS,u}^{2}(A,B)=
\frac{\sum_{i\ne i'}k_{\rm HS}(a_i,a_{i'})}{n(n-1)}+
\frac{\sum_{j\ne j'}k_{\rm HS}(b_j,b_{j'})}{m(m-1)}-
\frac{2}{nm}\sum_{i,j}k_{\rm HS}(a_i,b_j),
\]
computed between generated endpoint states and held-out target states. The ``Max local error'' column is instead the training-ensemble quantity
\[
E_{\rm local}^{\max}(\theta)=\max_{1\le t\le T}\max_{x,z}
\dtr\!\left(\rho_{x,t-1},\cR_{\theta,t,z}(\rho_{x,t})\right),
\]
with no averaging over $x$ or $z$ before the maxima. The TFIM observable is the computational-basis absolute longitudinal magnetization
\[
M_{|Z|}=\sum_{b\in\{0,1\}^{4}}
\left|\frac14\sum_{i=1}^{4}z_i(b)\right|\ketbra{b},
\qquad z_i(b)\in\{-1,+1\},
\]
and its run-level error is
\[
E_{\rm obs}=\left|\frac1{N_g}\sum_{j=1}^{N_g}\Tr(\widehat\rho_jM_{|Z|})-
\frac1{N_h}\sum_{k=1}^{N_h}\Tr(\rho_k^{\rm heldout}M_{|Z|})\right|.
\]
We additionally report mean pairwise trace-distance diversity and diversity relative to the held-out target ensemble. Pairwise method comparisons use matched dataset seeds. For the cross-scale diagnostic, the four-qubit controlled association removes variant means; the two-qubit analysis removes difficulty$\times$variant means.

\subsection{Capacity extension and external QuDDPM benchmark}
\label{app:capacity-external}
The capacity experiment changes only the shared Stinespring depth from $8$ to $128$. The resulting D-depth128 model remains a single shared conditional reverse model across all eight timesteps and all $16$ latent branches, with two ancillas and $2304$ trainable rotations. It retains the canonical equal-cq-MI schedule and budgets, complete-depolarization endpoints, distributional objective, primal--dual constraints, $\gamma=10^{-2}$, maximally mixed terminal prior, $600+400$ Adam updates at learning rates $0.01/0.002$, dual learning rate $0.20$, final-iterate checkpoint rule, and $1024$ held-out endpoint samples. A seed-$999$ smoke test used no endpoint metric and required no stability correction before the ten primary seeds were executed.

Against canonical depth $8$, D-depth128 lowers mean endpoint $\Wtr$ from $.622278$ to $.424375$ (31.80\%), with all $10/10$ matched seeds favorable and paired bootstrap interval $[-.262345,-.128067]$. Its mean maximum local trace error falls from $.328600$ to $.287385$, while mean maximum population excess falls from $.143233$ to $.092243$. HS-MMD$^2$ and the TFIM observable error also improve relative to depth $8$. These simultaneous local and endpoint gains show that additional shared-channel expressivity strengthens both sides of the learned recovery--coverage decomposition.

For the external comparison, we use the audited official QuDDPM implementation \citep{zhang2024generative} on the same primary four-qubit TFIM arrays, seeds $0,\ldots,9$, $100/100$ training/held-out split, and the same $1024$-sample held-out evaluator. QuDDPM retains its native $30$-step circuit diffusion, two ancillas, $12$ layers, natural-distance loss, Haar terminal input, and $4320$ trainable parameters. D-depth128 achieves lower endpoint $\Wtr$ on every matched seed; the mean values are $.424375$ and $.498060$, with paired D-depth128 minus QuDDPM bootstrap interval $[-.083112,-.064411]$.

\begin{table}[htbp]
\centering
\caption{\textbf{Capacity and external benchmark.} Means over ten matched primary TFIM seeds. Diversity ratio is generated/target pairwise trace-distance diversity; values closer to one indicate closer spread.}
\label{tab:capacity-external-app}
\resizebox{\linewidth}{!}{\begin{tabular}{lrrrrrr}
\toprule
Method & Params. & Endpoint $\Wtr$ & HS-MMD$^2$ & Obs. error & Div. ratio & Runtime (s)\\
\midrule
D-depth8 & $144$ & $.6223$ & $.4196$ & $.3847$ & $2.390$ & $116.2$\\
D-depth128 & $2304$ & $\mathbf{.4244}$ & $.1853$ & $.2609$ & $2.458$ & $2017.8$\\
QuDDPM & $4320$ & $.4981$ & $\mathbf{.0865}$ & $\mathbf{.1701}$ & $10.399$ & $242.3$\\
\bottomrule
\end{tabular}}
\end{table}
D-depth128 uses 53.3\% of QuDDPM's trainable parameter count and achieves 14.8\% lower mean endpoint $\Wtr$. The comparison is metric-complete: QuDDPM has lower HS-MMD$^2$, observable error, and runtime, while D-depth128 has lower endpoint $\Wtr$ and substantially fewer trainable parameters. This distinction is aligned with our theory, which treats trace-Wasserstein distance as the endpoint geometry propagated by the stochastic reverse chain.

As a discretization control, keeping the compact $144$-parameter shared model fixed while increasing $T$ from $8$ to $30$ changes mean endpoint $\Wtr$ only from $.622278$ to $.624896$, even though local budget satisfaction improves. In contrast, increasing shared-model capacity at fixed $T=8$ yields the 31.8\% endpoint reduction above. This isolates reverse-model expressivity, rather than a finer diffusion grid, as the effective scaling axis in this experiment.

\subsection{Broader-field four-qubit TFIM stress test}
The stress test uses the same dataset seeds $0,\ldots,9$ and samples the transverse field from $g\sim U[0.6,1.4]$; this field range is the only scientific setting changed from the primary four-qubit study. It freezes the same Hamiltonian, $100/100$ training/held-out split, $T=8$ complete-depolarization path, $16$ latent categories, $\gamma=10^{-2}$, depth-$8$ two-ancilla Stinespring architecture with $144$ trainable parameters, effective batch size $16$, $600$ Adam steps at learning rate $0.01$ followed by $400$ polish steps at $0.002$, and dual learning rate $0.20$ for C/D. Each run uses $1024$ generated endpoint trajectories; calibration population quantities use the $100$ training states and exhaustive latent support, whereas endpoint generation metrics use the $100$ held-out states. The reported checkpoint is the final polish iterate, with no endpoint or observable metric used for selection. Five variants and ten matched dataset seeds give $50$ runs. No architecture or optimization hyperparameter is retuned for this regime. Relative to the primary TFIM ensemble, the target pairwise trace-distance diversity is $2.01\times$ larger and the target observable spread is $4.94\times$ larger, making this a frozen-capacity distribution-shift stress test.

\subsection{Common variants and primal--dual implementation}
The five variants are:
\begin{center}
\begin{tabular}{lll}
\toprule
Variant & Forward grid & Reverse training rule\\
\midrule
A & linear-in-retention & unconstrained distributional objective\\
B & equal-cq-MI & unconstrained distributional objective\\
C & linear-in-retention & distributional objective + cq-MI constraints\\
D & equal-cq-MI & distributional objective + cq-MI constraints\\
E & cosine-in-retention & unconstrained distributional objective\\
\bottomrule
\end{tabular}
\end{center}
All reported regimes use $\gamma=10^{-2}$. The distributional training objective uses a biased nonnegative Gaussian-mixture MMD$^2$. For state batches $A=\{a_i\}_{i=1}^{n}$ and $B=\{b_j\}_{j=1}^{m}$, define
\[
k(a,b)=\frac13\sum_{\sigma\in\{0.1,0.3,1.0\}}
\exp\!\left[-\frac{\lVert a-b\rVert_F^2}{2\sigma^2}\right],
\]
and
\[
\widehat{\mathrm{MMD}}_{\rm Gmix}^{2}(A,B)=\max\!\left\{0,
\frac1{n^2}\sum_{i,i'}k(a_i,a_{i'})+
\frac1{m^2}\sum_{j,j'}k(b_j,b_{j'})-
\frac{2}{nm}\sum_{i,j}k(a_i,b_j)\right\}.
\]
The distributional objective averages this quantity over the $T$ timesteps. There is no learned bandwidth or median heuristic. For one- and two-qubit endpoint evaluation the reported endpoint MMD uses this same estimator on target and generated endpoint samples; the four-qubit TFIM tables instead use the separate unbiased HS-MMD$^2$ defined above.

For the constrained variants, let $\alpha_t\geq0$ be one multiplier per timestep; we reserve $\lambda_t$ throughout for the forward depolarizing retention level. The empirical Lagrangian and projected dual update are
\begin{align}
  \widehat{\mathcal J}_{\rm PD}(\theta,\bm\alpha)
  &=\widehat{\mathcal J}_{\rm dist}(\theta)
  +\frac1T\sum_{t=1}^T\alpha_t
  \bigl(\widehat L_{\gamma,t}(\theta)-\widehat\Delta_t\bigr),
  \label{eq:primal-dual-objective-main}\\
  \alpha_t
  &\leftarrow
  \pos{\alpha_t+\eta_{\alpha}
  \bigl(\widehat L_{\gamma,t}(\theta)-\widehat\Delta_t\bigr)},
  \qquad \alpha_t\geq0,
  \label{eq:dual-update-main}
\end{align}
with gradient descent in $\theta$ and projected ascent in $\bm\alpha$ \citep{boyd2004convex}; the common factor $1/T$ in the displayed Lagrangian is absorbed into the reported dual stepsize $\eta_\alpha$. The final calibration metric is evaluated separately from the minibatch constraint residual whenever the finite support permits:
\[
L_{\gamma,t}^{\rm pop}
=
\sum_x p_x\sum_z \pi_t(z)\,\ell_{\gamma,t}(\theta;x,z),
\qquad
V_{\max}^{\rm pop}:=\max_t V_{\gamma,t}^{\rm pop}=\max_t\pos{L_{\gamma,t}^{\rm pop}-\Delta_t}.
\]
No held-out endpoint accuracy metric is used for checkpoint selection. For the information-loss/reverse-loss alignment statistic, each run computes $\operatorname{Spearman}_t(\Delta_t,L_{\gamma,t}^{\rm pop})$ across the eight timesteps. The primary statistic pools the unconstrained heterogeneous schedules A and E; constrained C is reported as an auxiliary consistency check. Aggregate correlations are computed from the matched run-level statistics.

The independently seeded schedule comparison reported in
\cref{tab:schedule-confirm-main} uses the same frozen architecture,
optimization rule, clipping level, checkpoint rule, and endpoint metric
as the primary protocol, while datasets and training/evaluation streams
are independently seeded. Only the forward grid is changed across the
linear, equal-cq-MI, and cosine constrained variants.

\subsection{Recoverability--coverage ablation protocol}
\label{app:coverage-ablation-protocol}
The primary four-qubit recoverability--coverage ablation reuses the frozen TFIM setup above and the same ten dataset seeds $0,\ldots,9$. Variants B and D are reused only because their stored configurations, dataset hashes, schedule values, initialization hashes, architecture, optimizer, training length, and endpoint evaluation exactly match the ablation protocol. Variant R uses the same equal-cq-MI grid, stochastic latent support, depth-$8$ Stinespring architecture, initialization stream, $600$ Adam plus $400$ polish steps, and final-iterate checkpoint rule, but removes the distributional MMD objective and minimizes
\[
\widehat{\mathcal J}_{\rm local}(\theta)
=\frac1T\sum_{t=1}^{T}\widehat L_{\gamma,t}(\theta).
\]
No held-out endpoint quantity enters training or model selection, and all ten matched seeds are retained. Because R differs from D only by removing the distributional MMD objective, their nearly identical local trace error and separated endpoint $\Wtr$ directly test the recoverability--coverage decoupling.

\subsection{Additional one-qubit sanity study}
The original noncommuting one-qubit mixed-state study uses the same three difficulty levels, five variants, ten matched seeds per condition, $T=8$, $16$ target states, and $16$ latent categories, giving $150$ additional runs. Its shared circuit has five trainable parameters and follows the same $150$-step base plus $300$-step polish protocol as the controlled two-qubit study. These runs are reported only in Appendix~\ref{app:oneq-sanity} and are excluded from the main two-/four-qubit statistics.

\section{Stability of finite-outcome instruments}
The nonexpansive result in the main text uses an external latent variable. A measurement-conditioned branch can amplify perturbations through normalization, so a general instrument need not be nonexpansive. The following finite-factor estimate records the distinction.

\begin{proposition}[Finite-outcome instrument bound]
Let $\{\mathcal M_y\}_{y=1}^r$ be a quantum instrument, so each $\mathcal M_y$ is completely positive and $\sum_y\mathcal M_y$ is trace-preserving. Let $K_{\mathcal M}(\rho)$ be the probability law of the normalized post-measurement state $\mathcal M_y(\rho)/\Tr\mathcal M_y(\rho)$, with outcome probability $\Tr\mathcal M_y(\rho)$; branches of zero probability may be assigned an arbitrary normalized state because they carry no mass. Then the Dobrushin coefficient in trace-Wasserstein distance satisfies $\kappa(K_{\mathcal M})\leq3$.
\end{proposition}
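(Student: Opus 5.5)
The plan is to bound the trace-Wasserstein distance between $K_{\mathcal M}(\rho)$ and $K_{\mathcal M}(\sigma)$ by building one explicit coupling that shares the outcome label as much as possible. Write $A_y:=\mathcal M_y(\rho)$, $B_y:=\mathcal M_y(\sigma)$, $p_y:=\Tr A_y$, $q_y:=\Tr B_y$, and let $\alpha_y:=A_y/p_y$ and $\beta_y:=B_y/q_y$ be the normalized branch states. Any convention for zero-probability branches will do, since they carry no mass.

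\textbf{Coupling.} With probability $\min\{p_y,q_y\}$ both sides emit outcome $y$, producing the pair $(\alpha_y,\beta_y)$. The leftover masses $(p_y-q_y)_+$ and $(q_y-p_y)_+$ both have total $\half\sum_y|p_y-q_y|$, and they are paired in any way at cost at most one, because trace distance is at most one. This gives
\begin{equation*}
\Wtr\bigl(K_{\mathcal M}(\rho),K_{\mathcal M}(\sigma)\bigr)\leq\sum_y\min\{p_y,q_y\}\,\dtr(\alpha_y,\beta_y)+\half\sum_y|p_y-q_y|.
\end{equation*}

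\textbf{Outcome term.} The outcome law is the Born distribution of the POVM $\{\mathcal M_y^\dagger(\id)\}$. Hence $\half\sum_y|p_y-q_y|\leq\dtr(\rho,\sigma)$, by the operational characterization of trace distance (data processing under the measurement channel).

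\textbf{Branch term.} This is the step where normalization can amplify errors, so I expect it to be the main obstacle. Fix $y$ and assume without loss of generality $p_y\geq q_y>0$. Then
\begin{equation*}
\min\{p_y,q_y\}\,\dtr(\alpha_y,\beta_y)=\half\Bigl\|\tfrac{q_y}{p_y}A_y-B_y\Bigr\|_1\leq\half\|A_y-B_y\|_1+\half\Bigl(1-\tfrac{q_y}{p_y}\Bigr)\|A_y\|_1 .
\end{equation*}
Since $\|A_y\|_1=p_y$, the last term equals $\half|p_y-q_y|$; the symmetric case is identical. Summing over $y$, the second contribution is again at most $\dtr(\rho,\sigma)$ by the outcome-term bound.

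For the first contribution, decompose $\rho-\sigma=X_+-X_-$ into positive and negative parts. Complete positivity and trace preservation of $\sum_y\mathcal M_y$ give
\begin{equation*}
\sum_y\|\mathcal M_y(\rho-\sigma)\|_1\leq\sum_y\bigl(\Tr\mathcal M_y(X_+)+\Tr\mathcal M_y(X_-)\bigr)=\|\rho-\sigma\|_1,
\end{equation*}
so $\half\sum_y\|A_y-B_y\|_1\leq\dtr(\rho,\sigma)$.

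\textbf{Conclusion.} Adding the three pieces yields
\begin{equation*}
\Wtr\bigl(K_{\mathcal M}(\rho),K_{\mathcal M}(\sigma)\bigr)\leq\dtr(\rho,\sigma)+\dtr(\rho,\sigma)+\dtr(\rho,\sigma)=3\,\dtr(\rho,\sigma).
\end{equation*}
Taking the supremum over $\rho\neq\sigma$ in the definition of $\kappa$ gives $\kappa(K_{\mathcal M})\leq3$. Measurability of the kernel is automatic, because it is a finite mixture of Dirac masses at states that depend continuously on the input wherever the outcome probabilities are positive.
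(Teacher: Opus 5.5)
Your proof is correct and follows the paper's argument essentially step for step. Both use the same maximal coupling of the outcome labels, the same per-outcome normalization estimate $\min\{p_y,q_y\}\,\lVert\alpha_y-\beta_y\rVert_1\le\lVert A_y-B_y\rVert_1+\lvert p_y-q_y\rvert$, and the same two contractivity bounds; the only cosmetic difference is that you derive $\sum_y\lVert A_y-B_y\rVert_1\le\lVert\rho-\sigma\rVert_1$ via the Jordan decomposition, whereas the paper cites trace-norm contractivity of the channel $\rho\mapsto\sum_y\ketbra{y}\otimes\mathcal M_y(\rho)$, which is the same fact.
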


\begin{proof}
Set $A_y=\mathcal M_y(\rho)$, $B_y=\mathcal M_y(\sigma)$, $p_y=\Tr A_y$, and $q_y=\Tr B_y$. Couple the outcome distributions maximally. The unmatched mass is their total variation distance, which is at most $\dtr(\rho,\sigma)$ by data processing. The matched mass at outcome $y$ is $r_y=\min(p_y,q_y)$. For outcomes with $p_yq_y>0$,
\begin{equation}
  r_y\norm{A_y/p_y-B_y/q_y}_1
  \leq\norm{A_y-B_y}_1+\abs{p_y-q_y}.
\end{equation}
If $p_yq_y=0$, then $r_y=0$ and the matched contribution of that outcome is zero, so no normalized-state quotient is needed. The cq output channel $\rho\mapsto\sum_y\ketbra{y}\otimes\mathcal M_y(\rho)$ is trace-distance contractive, so
\begin{equation}
  \sum_y\norm{A_y-B_y}_1\leq\norm{\rho-\sigma}_1,
  \qquad
  \sum_y\abs{p_y-q_y}\leq\norm{\rho-\sigma}_1.
\end{equation}
The matched contribution is therefore at most $2\dtr(\rho,\sigma)$, and the unmatched contribution is at most $\dtr(\rho,\sigma)$ because the diameter of state space under trace distance is one. Summing gives the factor three.
\end{proof}

\end{document}